\newcommand{\contprob}[3]{P^{#1}_{#2}({#3})}
\newcommand{\continf}[2]{\contprob{\omega}{#1}{#2}}
\newcommand{\sigmaact}{\sigma_{\mathsf{act}}}
\newcommand{\sigmapass}{\sigma_{\mathsf{pass}}}
\newcommand{\taupass}{\tau_{\mathsf{pass}}}
\newcommand{\ofrac}[2]{#1}
\newcommand{\rhoact}{\rho_{\mathsf{act}}}
\newcommand{\rhopass}{\rho_{\mathsf{pass}}}
\newcommand{\rhoactof}[1]{\rho^{#1}_{\mathsf{act}}}
\newcommand{\rhopassof}[1]{\rho^{#1}_{\mathsf{pass}}}
\newcommand{\rhofailof}[1]{\rho^{#1}_{\mathsf{fail}}}
\newcommand{\citet}[1]{\cite{#1}}
\newcommand{\noqed}{\def\qed{}}                             %
\newcommand{\ml}{\cdot}
\newcommand{\plushalf}{+_{\frac{1}{2}}}
\newcommand{\mone}{{\text{\kern.5pt\rmfamily-}\mathsf{\kern-.5pt1}}}
\tikzset{
      commutative diagrams/.cd
    , arrow style=tikz
    , diagrams={>=stealth}
    , row sep=large
    , column sep = huge
}
\tikzset{
    cong/.style={draw=none,edge node={node [sloped, allow upside down, auto=false]{$\cong$}}}
  , iso/.style={draw=none,every to/.append style={edge node={node [sloped, allow upside down, auto=false]{$\cong$}}}}
}
\tikzset{shiftarr/.style={
        rounded corners,%
        to path={--([#1]\tikztostart.center)
                     -- ([#1]\tikztotarget.center) \tikztonodes
                     -- (\tikztotarget)},
}}
\tikzset{cong/.style={draw=none,edge node={node [sloped, allow upside down, auto=false]{$\cong$}}},
         Isom/.style={draw=none,every to/.append style={edge node={node [sloped, allow upside down, auto=false]{$\cong$}}}}}
	\newcommand*\linenomathpatch[1]{%
		\cspreto{#1}{\linenomath}%
		\cspreto{#1*}{\linenomath}%
		\csappto{end#1}{\endlinenomath}%
		\csappto{end#1*}{\endlinenomath}%
	}
\newenvironment{spaceout}[1]{\begin{displaymath}\setlength{\extrarowheight}{3pt}\begin{array}{#1}}{\end{array}\setlength{\extrarowheight}{0pt}\end{displaymath} \noindent}
\newcommand{\tuple}[1]{\langle{#1}\rangle}
\newcommand{\failof}[1]{{#1}_{\mathsf{fail}}}
\newcommand{\tesum}{\textstyle{\sum}}
\newcommand{\catr}{\ensuremath{\mathcal{R}}}
\newcommand{\Rrightup}[1]{{\mathrel{\;\overset{\raisebox{3.2pt}{$\scriptstyle #1$\;}}{\smash{\xRightarrow{\hspace{1.75em}}}}\;}}}
\newcommand{\vict}{\mathsf{Vict}}
\newcommand{\finfound}{\mathsf{FinFound}}
\newcommand{\namesig}{S}
\newcommand{\hsig}{H_{\namesig}}
\newcommand{\Ops}{K}
\newcommand{\cstc}{{\mathsf{c}}}
\newcommand{\cona}{\mathsf{a}}
\newcommand{\conb}{\mathsf{b}}
\newcommand{\conc}{\mathsf{c}}
\newcommand{\tensor}{\,\otimes\,}
\newcommand{\eqdef}{\stackrel{\mbox{\rm {\tiny def}}}{=}}
\newcommand{\setbr}[1]{\{{#1}\}}
\newcommand{\bnfgo}{\Coloneqq}
\newcommand{\nats}{\ensuremath{\mathbb{N}}}
\newcommand{\betwixt }{\hspace{1em}}
\newcommand{\set}{\mathbf{Set}}
\newcommand{\pset}{\mathcal{P}}
\newcommand{\cate}{\mathcal{E}}
\newcommand{\prog}{\mathsf{Prog}}
\newcommand{\twodots}{\mathinner {\ldotp \ldotp}}
\DeclareFontFamily{U}{mathc}{}
\DeclareFontShape{U}{mathc}{m}{it}{<->s*[.93] mathc10}{}
\DeclareMathAlphabet{\morph}{U}{mathc}{m}{it}
\newcommand{\inpu}[1]{\mathsf{Req}\,{#1}?} %
\newcommand{\sfinpu}[1]{\morph{Req}\,{#1}?}
\newcommand{\tracesof}[1]{\morph{Traces}\,{\left(#1\right)}}
\newcommand{\smin}{\!\in\!}
\newcommand{\supp}[1]{\mathsf{supp}({#1})}
\newcommand{\arity}[1]{\mathsf{Ar}({#1})}
\newcommand{\inarity}[1]{\arity{#1}}
\theoremstyle{plain}
\newtheorem{theorem}{Theorem}[section]
\newtheorem{proposition}[theorem]{Proposition}
\newtheorem{lemma}[theorem]{Lemma}
\newtheorem{corollary}[theorem]{Corollary}
\theoremstyle{definition}
\newtheorem{definition}[theorem]{Definition}
\newtheorem{example}[theorem]{Example}
\newcommand{\dist}{\mathcal{D}}
\newcommand{\distfin}{\mathcal{D}_{\mathsf{f}}}
\newcommand{\Nat}{{\mathbb N}}
\newcommand{\comma}{,\operatorname{}\linebreak[1]}
\def\BibTeX{{\rm B\kern-.05em{\sc i\kern-.025em b}\kern-.08em
    T\kern-.1667em\lower.7ex\hbox{E}\kern-.125emX}}
\begin{document}

\title{Probabilistic Strategies: Definability and\\
the Tensor Completeness Problem
}

\author{\IEEEauthorblockN{Nathan Bowler}
\IEEEauthorblockA{
	\textit{Department of Mathematics} \\
	\textit{University of Hamburg}\\
	Hamburg, Germany \\
	nathan.bowler@uni-hamburg.de
}
\and
\IEEEauthorblockN{Sergey Goncharov}
\IEEEauthorblockA{
	\textit{School of Computer Science} \\
	\textit{University of Birmingham}\\
	Birmingham, UK \\
	s.goncharov@bham.ac.uk
}
\and
\IEEEauthorblockN{Paul Blain Levy}
\IEEEauthorblockA{
	\textit{School of Computer Science} \\
	\textit{University of Birmingham}\\
	Birmingham, UK \\
	p.b.levy@bham.ac.uk
}
}

\maketitle

\begin{abstract}
Programs that combine I/O and countable probabilistic choice, modulo either
bisimilarity or trace equivalence, can be seen as describing a probabilistic
strategy. For well-founded programs, we might expect to axiomatize bisimilarity
via a sum of equational theories and trace equivalence via a tensor of
such theories. This is by analogy with similar results for nondeterminism,
established previously.
While bisimilarity is indeed axiomatized via a sum of theories, and the tensor
is indeed at least sound for trace equivalence, completeness in general,
remains an open problem. Nevertheless, we show completeness in the case that
either the probabilistic choice or the I/O operations used are finitary. We
also show completeness up to impersonation, i.e.\ that the tensor theory regards
trace equivalent programs as solving the same system of equations. This entails 
completeness up to the cancellation law of the probabilistic choice operator.

Furthermore, we show that a probabilistic trace strategy %
arises as the semantics of a well-founded program iff it is victorious. This
means that, when the strategy is played against any partial counterstrategy, the
probability of play continuing forever is zero.

We link our results (and open problem) to particular monads that can be used to
model computational effects.
\end{abstract}

\begin{IEEEkeywords}
  Trace semantics, 
  Computational effects,
  Algebraic theory of effects, 
  Bisimulation, 
  Game semantics,
  Probabilistic transition system
\end{IEEEkeywords}

\section{Introduction}

\label{sec:intro}

\subsection{The axiomatization problem} \label{sec:axiom}

{Program equivalence}, and the question of how to axiomatize it, is a classical subject in theoretical
computer science.  Sometimes, the programs studied are higher-order or
concurrent. In this paper, however, we deal with first-order, sequential programs that
perform I/O operations and make discrete probabilistic choices.
 Although this may seem a limited setting, the problem of axiomatizing \emph{trace equivalence} (a natural notion
of equivalence) has turned out to be surprisingly tough.

\paragraph*{Determinism}  To begin, we consider an \emph{illustrative} deterministic language with a
nullary operation ($\mathsf{Bye}$), a binary operation
($\mathsf{Happy}?$) and an infinitary operation ($\mathsf{Age}?$).
The syntax is as follows:
\begin{equation}\label{eq:sim-lang}
\begin{aligned}
~~ M,N \Coloneqq  &\;\mathsf{Age}?(M_i)_{i\in\Nat} %
  						\mid\mathsf{Happy}?(M,N) %
  						\mid\mathsf{Bye}. %
\end{aligned}
\end{equation}
The programs of the above language describe I/O interaction with the environment: 
$\mathsf{Age}?(M_i)_{i\in\Nat}$ asks the user for their age, awaits a natural 
number input $i$ and proceeds as~$M_i$ accordingly; 
$\mathsf{Happy}?(M,N)$ asks if the user is happy, awaits a boolean answer and 
proceeds as $M$ in the positive case and as $N$ in the negative case; 
$\mathsf{Bye}$ outputs ``Bye'' and finishes the dialogue.
Formally, the semantics is given via a suitable transition system. 
We emphasize that the syntax, though infinitary, is still \emph{well-founded},
as it is inductively defined. For example, there is no program $M =
\mathsf{Happy}?(\mathsf{Bye},M)$.

Now we turn to a more general framework.  Recall that a
\emph{signature} consists of a set $\Ops$ of \emph{operation symbols}, each equipped with an \emph{arity}, which is a (possibly infinite) set.  The arity of $k$ is written $\arity{k}$.  
  Thus a signature $\namesig$ is an indexed family of sets $(\arity{k})_{k\in\Ops}$.  
  
In the context of this paper, we may call $k \smin
\Ops$ 
an \emph{output} and $i \smin \arity{k}$ an \emph{input} following $k$.  In the previous example, the three outputs are  $\mathsf{Bye}$, $\mathsf{Happy}$ and
$\mathsf{Age}$.  There are no inputs following $\mathsf{Bye}$, an input following $\mathsf{Happy}$ is a boolean, and one following $\mathsf{Age}$ is a natural number.
Each signature $\namesig$, such as the one above, gives rise to a language with the following syntax:
\begin{equation}
  \label{eq:sim-lang-gen}
  \begin{aligned}
    ~~ M \Coloneqq  &\;\mathsf{Req}\,k?(M_i)_{i\in\arity{k}}
\end{aligned}
\end{equation}
Here, $\mathsf{Req}$ stands for ``Request''. The program $\mathsf{Req}\,k?(M_i)_{i\in\arity{k}}$
 outputs $k$, awaits an input $i\smin\arity{k}$ and proceeds as $M_i$.  
 In our illustrative language, we wrote $\mathsf{Req}\,\mathsf{Bye}?$
 as $\mathsf{Bye}$, wrote 
$\mathsf{Req}\,\mathsf{Happy}?$ as $\mathsf{Happy}?$ and wrote 
$\mathsf{Req}\,\mathsf{Age}?$ as $\mathsf{Age}?$, for the sake of readability.

Borrowing terminology from game semantics, a \emph{play} is  a sequence of actions,
alternating between output and input.   Each program has a
\emph{trace set}---the set of all plays that might arise when the program
is run.  For example, traces of the program
\begin{align*}
\mathsf{Age}?(\mathsf{Bye},\mathsf{Bye},\mathsf{Bye},\mathsf{Bye},\mathsf{Bye}, \mathsf{Happy}?(\mathsf{Bye},\mathsf{Bye}),\mathsf{Bye},\ldots ) 
\end{align*}
include $\mathsf{Age}?0.\mathsf{Bye}$, $\mathsf{Age}?5.\mathsf{Happy}?$, %
$\mathsf{Age}?5.\mathsf{Happy}?1.\mathsf{Bye}$. Two programs are \emph{trace equivalent}
when they have the same trace set. The language is semantically
straightforward: two programs are 
trace equivalent iff they are bisimilar, and iff they are
syntactically equal.

\begin{figure*}[t]
\centering
 \begin{subfigure}[b]{0.4\textwidth}
     \centering
  \begin{flalign*}
  &&M +_p M 			 &\, = M& \\*
  &&M +_p N				 &\, = N +_{1-p} M \\
  &&(M +_p N)+_q P &\, = M +_{\frac{p}{p+q-pq}} (N+_{p+q-pq} P) 
\end{flalign*}
  \caption{The convex laws~\cite{Stone49}.}
 \end{subfigure}
 \hspace{1.5em}
 \begin{subfigure}[b]{0.4\textwidth}
     \centering
\begin{align*}
  \sum_{n\in\nats} \delta_{n,m}\ml M_n & = M_m \\
  \sum_{n\in\nats} p_n\ml \sum_{m\in\nats} q_{n,m} M_m & = 
  \sum_{m\in\nats}\Bigl(\sum_{n\in\nats}p_n\ml q_{n,m}\Bigr)\ml M_m
\end{align*}
  \caption{The $\omega$-convex laws~\cite{Konig86}.}
 \end{subfigure}
    \caption{Axioms of probabilistic choice operators.}
    \label{fig:prob-laws}
\end{figure*}

\paragraph*{Adding probabilistic choice}

We next consider two %
probabilistic
extensions (binary and countable):
\begin{align}\label{eq:add_prob}
                                       M,N \Coloneqq  &\;
                                                        \inpu{k}(M_i)_{i
                                                       \in\arity{k}}
                                                        \mid M+_p N  \\
  M,N \Coloneqq & \; \inpu{k}(M_i)_{i\in\arity{k}}\mid
                  \sum_{n\in\nats} p_n\ml M_n
\end{align}
The first extension provides programs of the form \mbox{$M+_p N$}, where $p$
ranges over the interval $(0,1)$.  Such a program chooses either 0 or 1 with
probability $p$ and $1-p$ respectively, then executes $M$ or $N$
accordingly.

The second extension provides programs of the form ${\sum_{n\in\nats}
p_n\ml M_n}$, where $(p_n)_{n\in\nats}$ is a sequence of nonnegative
reals that sum to 1.  Such a program chooses $n\in\nats$ with
probability $p_n$, then executes $M_n$.

These calculi are the probabilistic counterparts of nondeterministic
calculi studied in~\cite{BowlerLevyEtAl18}.

\paragraph*{Standard equations for probabilistic choice}

Our project to axiomatize equivalence of probabilistic program begins with 
standard laws from the literature. For binary choice, we take the equations 
displayed in~\cref{fig:prob-laws}(a) where~$\delta_{n,m}$ is defined to be $1$ if $n=m$, and 0 otherwise. 
For countable choice, we take the equations displayed in~\cref{fig:prob-laws}(b).  The latter is also
suitable for binary choice: we simply restrict \mbox{$\sum_{n\in\nats} p_n
\ml M_n$} to the case that all but finitely many $p_n$ are zero.
These laws are only
concerned with probabilistic choice, and say nothing specifically
about the I/O operations.  In each case, we shall see that
they are sound and
complete for bisimulation.  In other words, bisimilarity (the largest
bisimulation) is the least congruence
that includes those laws.  This is in line with
results established for nondeterministic languages~\cite{BowlerLevyEtAl18}.

\paragraph*{Trace semantics}

In each language, a program $M$ has a \emph{trace semantics}, which is a
real-valued function on plays, of a certain kind.  The idea is that this function 
sends each play $s$ to the probability that $M$ generates the output
that $s$
describes if it receives the input that $s$ describes. Consider an example:
\begin{align*}
\mathsf{Happy?}(\mathsf{Bye},\mathsf{Bye}+_{\sfrac{1}{2}}\mathsf{Happy?}(\mathsf{Bye}+_{\sfrac{1}{3}}\mathsf{Happy?}(\mathsf{Bye},\mathsf{Bye}))).
\end{align*}
The trace semantics can be described as follows: it sends $\mathsf{Happy?}$
to $1$, $\mathsf{Happy?}0.\mathsf{Bye}$ to $1$, $\mathsf{Happy?}1.\mathsf{Bye}$ 
and $\mathsf{Happy?}1.\mathsf{Happy?}$ to $\sfrac{1}{2}$,
$\mathsf{Happy?}1.\mathsf{Happy?}0.\mathsf{Bye}$, $\mathsf{Happy?}1.\mathsf{Happy?}1.\mathsf{Happy?}$, 
$\mathsf{Happy?}1.\mathsf{Happy?}1.\mathsf{Happy?}0.\mathsf{Bye}$, and $\mathsf{Happy?}1.\mathsf{Happy?}1.\mathsf{Happy?}1.\mathsf{Bye}$ 
to $\sfrac{1}{6}$ -- the remaining plays are sent to $0$.

Again, the corresponding story for nondeterministic languages was
presented in~\cite{BowlerLevyEtAl18}.  There, the trace semantics of a 
program is simply a set of plays, not a function on plays.

\paragraph*{Axiomatizing trace equivalence}

Programs with the same trace semantics are \emph{trace equivalent}.
This is congruence relation coarser than bisimilarity.  How shall we axiomatize it?

For the finitely probabilistic language, we extend the convex laws
with laws saying that I/O operations \emph{commute} with~$+_p$, in the sense of the top equation in \cref{fig:tens-laws}.
For example
\begin{align*}
\mathsf{Happy?}(M,N) +_p \mathsf{Happy?}(M',N') & = \\* \mathsf{Happy?}& (M+_p M', N+_p N')
  \\
\mathsf{Age?}(M_i)_{i \in\Nat} +_p  \mathsf{Age?}(M'_i)_{i\in\Nat}
                            & = \mathsf{Age?}(M_i +_p M'_i)_{i\in\Nat}
\end{align*}%
The least congruence that includes all these laws is called
\emph{tensor-equivalence}.  

Likewise, for the countably probabilistic language, we extend the
$\omega$-convex laws with laws saying that each I/O operation
\emph{commutes} with $\sum_{n\in\nats}$, in the sense of the bottom equation in \cref{fig:tens-laws}.
For example
\begin{align*}
\sum_{n\in\nats} p_n \ml \mathsf{Happy?}(M_n,N_n) 
                            &\, =\\ \mathsf{Happy?} \Bigl(\sum_{n \in
                                  \nats} &p_n\ml M_n, \sum_{n\in\nats}
                                  p_n \ml N_n\Bigr)
\\
\sum_{n\in\nats}p_n\ml \mathsf{Age?}(M_{n,i})_{i \in\Nat} 
                            &\, = \mathsf{Age?}\Bigl(\sum_{n\in\nats}p_n\ml  M_{n,i}\Bigr)_{i\in\Nat}
\end{align*}%
Again, the least congruence that includes all these laws is called
\emph{tensor-equivalence}.

It is straightforward to see that tensor-equivalence is sound for
trace semantics.  In other words, trace equivalence implies tensor
equivalence.  But is it complete?

In the nondeterministic language, whether the nondeterministic choice
is binary or countable, the answer is yes~\cite{BowlerLevyEtAl18}.  So we might expect
a similar result for probability.  

In fact, we shall prove that tensor-equivalence is complete in two
cases:
\begin{itemize}
\item when only $+_p$ is used, but not $\sum_{n\in\nats}$;
\item when the signature is finitary, i.e.\ $\arity{k}$ is finite for
  each~$k \smin \Ops$.
\end{itemize}
To summarize, we  establish tensor completeness in the case that \emph{either} the probabilistic choice
\emph{or} the I/O signature is finitary.  If both are infinitary, completeness is an open
question.
Nevertheless, even in this case, we prove a weak form of completeness
called \emph{completeness up to impersonation}. This means that for trace
equivalent programs~$M$ and~$N$, we can---up to tensor-equivalence---see $M$
as the ``canonical'' solution to a system of equations that also has~$N$ as a
solution. Moreoever, there is a program $P$ such that $M \plushalf P$ and $N
\plushalf P$ are tensor-equivalent.

\subsection{Completeness problem by example}\label{sec:exa}
In this section, we allude to our informal introduction of trace equivalence and 
tensor-equivalence,
to give some concrete examples of the completeness problem -- the relevant formal 
details can be found in \Cref{sec:prelims,sec:sys_trace}.

\begin{figure*}
\begin{align*}
\inpu{k} (M_i)_{i\in\arity{k}} +_p  \inpu{k}
  (M'_i)_{i\in\arity{k}}  =&\, \inpu{k} (M_i +_pM'_i)_{i
                                 \in\arity{k}} \\[1ex]
\sum_{n\in\nats} p_n \ml \inpu{k} (M_{n,i})_{i \in
   \arity{k}}  =&\,  \inpu{k} (\sum_{n\in\nats} p_n \ml  M_{n,i})_{i \in
                     \arity{k}}
\end{align*}%

   \caption{Tensor laws for $+_p$ and $\sum_{n\in\nats}p_n$.}
    \label{fig:tens-laws}
\end{figure*}

We will stick to the signature consisting of one binary operation~$\star$, a countable 
set of constants $C$ with no equations. We will write countable sums $\sum_{n\in\Nat} p_n\ml M_n$
as series $p_0\ml M_0+p_1\ml M_1+\ldots$ By writing $M\equiv N$ we mean that the 
equivalence of $M$ and $N$ is provable in the tensor logic. 
\begin{example}\label{exa:trace1}
Let $\cona,\conb,\conc\in C$. Then it can be checked that the following programs 
are trace-equivalent:
\begin{align*}
   \quad M \eqdef\, & \tfrac{1}{2}\cdot\cona \star\conc + \tfrac{1}{4}\cdot\conb \star (\cona \star \conc) + \tfrac{1}{8}\cdot \conb \star (\conb \star (\cona \star \conc))\\*
    & + \tfrac{1}{16}\cdot \conb \star (\conb \star (\conb \star (\cona \star \conc))) + \cdots \\[1ex]
       N \eqdef\, & \tfrac{1}{2}\cdot\conb \star \conc + \tfrac{1}{4}\cdot \cona \star (\conb \star \conc) + \tfrac{1}{8}\cdot \cona \star (\cona \star (\conb \star \conc))\\*
    & + \tfrac{1}{16}\cdot \cona \star (\cona \star (\cona \star (\conb \star \conc))) + \cdots
\end{align*}
How do we prove that $M$ and $N$ are equal in the tensor theory?
Rewriting $M$ by the second equation in~\cref{fig:tens-laws} and subsequently applying 
arithmetic simplifications gives $(\tfrac{1}{2}\cdot\cona + \tfrac{1}{2}\cdot\conb) \star ({\tfrac{1}{2}\cdot\conc} + \tfrac{1}{2}\cdot M)$, which does
 not help. However, a more ingenious approach works out:
\begin{align*}
   M &\,\equiv\tfrac{1}{4}\cdot\cona \star \conc + \tfrac{2}{8}\cdot\bigl(\tfrac{1}{2}\cdot \cona \star \conc + \tfrac{1}{2}\cdot\conb \star (\cona \star \conc)\bigr)\\
         &\quad + \tfrac{3}{16}\cdot\bigl(\tfrac{1}{3}\cdot\cona \star \conc + \tfrac{1}{3}\cdot\conb \star (\cona \star \conc) + \tfrac{1}{3}\cdot\conb \star (\conb \star (\cona \star \conc))\bigr)\\[1ex]
         &\quad + \cdots \\
     &\,\equiv \tfrac{1}{4}\cdot\cona \star\conc + \tfrac{2}{8}\cdot\bigl(\tfrac{1}{2} \cona \star (\cona \star \conc) + \tfrac{1}{2} \conb \star \conc\bigr)\\
         &\quad + \tfrac{3}{16}\cdot\bigl(\tfrac{1}{3}\cdot\cona \star (\cona \star (\cona \star \conc)) + \tfrac{1}{3}\cdot\conb \star \conc + \tfrac{1}{3}\cdot\conb \star (\conb \star \conc)\bigr)\\[1ex]
         &\quad + \cdots \\
     & \quad\betwixt \text{\{by completeness for finitely probabilistic programs\}} \\
     &\,\equiv \tfrac{1}{4}\cdot\cona \star \conc + \tfrac{1}{8}\cdot\cona \star (\cona \star \conc) + \tfrac{1}{16}\cdot\cona \star (\cona \star (\cona \star \conc)) + \cdots \\
     &\quad + \tfrac{1}{4}\cdot\conb \star \conc + \tfrac{1}{8}\cdot\conb \star (\conb \star \conc) + \tfrac{1}{16}\cdot\conb \star (\conb \star (\conb \star \conc)) + \cdots \\
     & \quad\betwixt \text{\{by symmetric reasoning\}}\\
   &\,\equiv N. 
\end{align*}
This also follows from our general result for finite signatures (\Cref{thm:comp_fin}).
\end{example}

\begin{example}\label{exa:trace2}
In the same signature %
fix a family of constants $(\cstc_{n}^{k}\in C)_{n\in\nats,k\leq n}$, and a family
of reals $(p_n>0)_{n\in\nats}$, such that $\sum_{n\in\nats} p_n = 1$. Define the 
following programs:
\begin{align*}
	Q_0 =&\, \sum_{n\in\nats} \frac{p_{n+1}}{1- p_0}\cdot \cstc_{n+1}^{0}\\*
	Q_1 =&\, \sum_{n\in\nats} \frac{p_{n+1}}{1- p_0 - p_1}\cdot \cstc_{n+1}^{1}\\
	     &\,\vdots\\
	Q_k =&\, \sum_{n\in\nats} \frac{p_{n+k+1}}{1- p_0-\ldots -p_{k}}\cdot \cstc_{n+k+1}^{k}\\
	&\,\vdots
\end{align*} 
It can be checked that the following terms are trace equivalent
\begin{align*}
  M\eqdef\; & p_0\ml\cstc_{0}^{0} + p_1\ml \cstc_{1}^{1}\star\cstc^{0}_{1}+ p_2\ml(\cstc_{2}^{2}\star\cstc^{1}_{2})\star\cstc^{0}_{2}+\ldots,\\*
  N\eqdef\; & p_0\ml\cstc_{0}^{0} + p_1\ml \cstc_{1}^{1}\star Q_0+ p_2\ml(\cstc_{2}^{2}\star Q_1)\star Q_0+\ldots.
\end{align*}
Indeed, in both cases, the relevant traces are mapped to probabilities as follows: $c_0^0\mapsto p_0$,
$\star.0.c_1^1\mapsto p_1$, $\star.1.c_0^1\mapsto p_1 = (1-p_0)\cdot\frac{p_1}{1-p_0} = p_1\cdot\frac{p_1}{1-p_0} + p_2\cdot\frac{p_1}{1-p_0} + \ldots$, etc.
Our general~\Cref{thm:comp_fin} entails that~$M\equiv N$.
\end{example}

\subsection{The definability problem}

As we have seen, each program has a trace semantics, which is
real-valued function on plays.   Such a function
 satisfies some basic conditions that makes it a ``strategy''  (again
 borrowing terminology from game semantics). 
 
Although this would be the case even for non-well-founded programs,
our programs are in fact \emph{well-founded}.  This leads to a
definability question.  Which strategies are the trace
semantics of a program in our language?  %
We answer this question in \cref{sec:victor}.  It turns out that a strategy is definable  by a program if and only if, when played against any 
``partial counterstrategy'' (a deterministic way of providing input
that may fail), the probability of continuing forever is zero.  We
call such a strategy \emph{victorious}.

The corresponding question for the nondeterministic languages was
answered in~\cite{BowlerLevyEtAl18}, using the notion of  a ``{well-foundedly total}'' strategy.

\subsection{Sum and tensor of monads}

Following~\citet{Moggi91,PlotkinPower02},  %
we obtain a more abstract view of these results, by modelling each
computational effect or 
combination of effects with a suitable monad on $\set$.
Furthermore, we can speak of the sum or tensor of theories, which
generate the sum or tensor of the respectively generated monads.\footnote{For
   monads that do not arise from a theory, the sum and the tensor do not
   always exist~\cite{HylandLevyEtAl07,AdamekMiliusEtAl12,BowlerGoncharovEtAl13}}
 This links the axiomatization and the definability parts of our
 work. For example, I/O on the signature $S$ is modelled by the free
 monad~$T_{S}$, and countable probabilistic choice by the distribution
 monad $\dist$.  There is also a monad $\vict_S$ of victorious
 strategies.  (All these are defined~\Cref{sec:prelims,sec:monads}). Via our definability result, we can formulate the 
tensor completeness problem as follows: is $\vict_S$ 
 the tensor of the monads $T_S$ and~$\dist$? \Cref{thm:comp_fin} then answers it
 positively for finitary~$S$.

\subsection{Related work}
Our approach to interpreting algebraic operations game-theoretically is closely related
to that of~\citet{Koenig21}, who considers signatures of {uninterpreted effects}
(which can be understood as I/O). Somewhat similarly to us, the author considers the question 
of definability of strategies, and establishes a characterization of non-well-founded 
strategies in deterministic setting as a final coalgebra, and as an ideal completion of a term algebra.
The work~\citet{KoenigShao20} uses this perspective to define a category of games in the 
setting of coexistent {angelic} and {demonic} non-determinism. In~\citet{BowlerLevyEtAl18}
the issue of completeness and definability in our sense is solved w.r.t.\ finitary and 
countable non-determinism. Our notion of {probabilistic strategy} agrees with
that of \citet{DanosHarmer02}, where it is used to establish full abstraction for 
probabilistic {idealized Algol}.

Tensors of theories go back to \citet{Freyd66}, who characterized tensoring with 
a large class theories, namely \emph{semiring module theories}, in the finitary case. The use of sums and tensors of monads (or theories) for semantics of computations
is elaborated by~\citet{HylandLevyEtAl07} who show that sums 
of monads need not generally exist, e.g.\ if one of the monads is the (full) powerset monad.
The existence issue is revisited by~\citet{AdamekMiliusEtAl12}, who establish 
some surprising results of existence. The general existence problem for tensors was 
solved in the negative by~\citet{BowlerGoncharovEtAl13}. Some classes of examples 
of tensor existence are given by~\citet{GoncharovSchroder11b} who also 
give some sufficient criteria for injectivity of universal maps to the tensor 
product (called ``{conservativity}'' in \emph{op.\ cit.}), what can be regarded 
as a form of our completeness problem.

Complete axiomatization of equivalence between probabilistic processes is a classical theme 
in process algebra with probabilistic choice being the only form of choice~\cite{BaetenBergstraEtAl95,GomezFrutos-EscrigEtAl97,Nunez99},
or combined with the nondeterministic choice~\cite{BandiniSegala01,TimmersGroote20,ParmaSegala04,CazorlaCuarteroEtAl03}.
The completeness issue is analyzed from the coalgebraic perspective by~\citet{SilvaSokolova11}.
The main distinction between these and our present setting is that our probabilistic choice operator 
can be infinitary, and that the signature operations can have multiple arguments.

\subsection{Structure of paper}
We start off by giving short technical preliminaries in \cref{sec:prelims}. In
\cref{sec:sys_trace}, we introduce our language of (probabilistic) traces and strategies,
instrumental for formulating and proving our results. In \cref{sec:sum-tensor},
we connect sums and tensors of theories with bisimilarity and trace equivalence.
In \cref{sec:nf} we introduce various notions of normal form for programs, as a stepping 
stone to our completeness and definability results. Those are given in \cref{sec:impers},
\cref{sec:comp-fin} and \cref{sec:victor}. We elaborate on connections to monads in
\cref{sec:monads} and draw conclusions in \cref{sec:conc}. 

\section{Preliminaries}\label{sec:prelims}
\subsection{Algebraic structures}

We begin with some operations on sets.  For a set $X$, we write $\dist X$ for 
the set of all (discrete probability) distributions on $X$, i.e.\ functions $X
\rightarrow [0,1]$ summing to~$1$, which are necessarily countably
supported.  We write $\distfin X$ for finitely
supported ones. %
Next, a \emph{convex space} (also called a convex algebra, 
barycentric space or barycentric algebra) is a set $X$ with a binary operation $+_p$ for
each $p \in (0,1)$, satisfying the equations in~\cref{fig:prob-laws}(a).
The space is \emph{cancellative} when $x \plushalf z = y
\plushalf z$ implies $x=y$.  This is equivalent to $x +_p z = y +_p z$
implying $x=y$ for all $p \in (0,1)$.

An \emph{$\omega$-convex space} (also called a superconvex space) is a
set $X$ with an operation $(x_n)_{n\in\nats} \mapsto \sum_{n\in\nats} p_n x_n$ for each
distribution $p$ on $\nats$, satisfying the equations in~\cref{fig:prob-laws}(b)
where $\delta_{n,m}$ is defined to be $1$ if $n=m$, and 0 otherwise.  
Restricting to finitely supported sequences yields an
alternative definition of convex space. 

In monad terminology, a convex space can be seen as an Eilenberg-Moore algebra for the $\distfin$
monad, and an $\omega$-convex space for the $\dist$ monad.

\subsection{Transition systems}

Let us now fix a signature $\namesig = (\arity{k})_{k
    \in\Ops}$ representing the I/O capability of our language.   For a
  set $X$, we write $\hsig X \eqdef \sum_{k\in\Ops}
  X^{\arity{k}}$---i.e.\ the set of pairs $\tuple{k,(x_i)_{i \in
      \arity{k}}}$ where $k \in K$ and $x_i \in X$ for all $i\in\arity{k}$.  We now define the notion of transition
  system, which has both nondeterministic and probabilistic versions.
\begin{definition} 
    \begin{enumerate}[wide]
    \item A \emph{nondeterministic system} consists of a set~$X$ and 
      function   $\zeta \colon X \to \pset\hsig X$.
   \item Such a system is \emph{total} when, for all $x \smin X$, the set $\zeta(x)$ is
     inhabited.
   \item Such a system is \emph{deterministic}
      (respectively, \emph{finitely
        branching}, \emph{countably
        branching}) when, for all $x \smin X$, the set  $\zeta(x)$ has at most one element 
       (respectively, is finite, is countable).
\item Given such a system: for $x \smin X$, let $x \Rrightup{k} (y_i)_{i\in\arity{k}}$ when 
      the
      set $\zeta (x)$ contains 
      $(k,(y_i)_{i\in\arity{k}})$.
    \end{enumerate}
  \end{definition}
  The transition  $x \Rrightup{k} (y_i)_{i\in\arity{k}}$ intuitively
  means that starting in
      state $x\in X$, the system may output $k$, pause
      to receive input, and then transition to state $y_i$ if the user
      inputs~$i \smin \arity{k}$.
  \begin{definition}
    \begin{enumerate}[wide]
    \item A \emph{probabilistic system} consists of a set~$X$ and
      function
      $\zeta \colon X \to \dist \hsig X$.
    \item Given such a system: for $x \smin X$, we write
      $x \Rrightup{p,k} (y_i)_{i\in\arity{k}}$ when the distribution
      $\zeta (x)$ assigns $p$ to $(k,(y_i)_{i\in\arity{k}})$.
    \item Given such a system: its \emph{support system} is the nondeterministic system
      $(X,\mathsf{supp} \circ \zeta)$, which is total and countably
      branching.  When it is finitely branching, we say that
      $(X,\zeta)$ is a 
      \emph{finitely probabilistic} system.
    \end{enumerate}
  \end{definition}
The  transition
      $x \Rrightup{p,k} (y_i)_{i\in\arity{k}}$ intuitively means that starting
      in state $x\in X$, the system, with probability $p$, will
      output~$k$, pause to receive input, and then transition to
      state~$y_i$  if the user then inputs $i \smin \arity{k}$.

In both kinds of system, the states are ``active''---they represent a
program that is running, not one that is paused and awaiting input.
  (Such systems may be called \enquote{generative},
 rather than \enquote{reactive}.)

Furthermore, the systems of interest in this paper are ``well-founded'' in the
following sense.
\begin{definition}[Well-Founded Nondeterministic Systems]\mbox{}
  \begin{enumerate}[wide]
  \item A nondeterministic system $(X,\zeta)$ is
    \emph{well-founded} when there is no infinite sequence%
    \begin{displaymath}
      x_0 \Rrightup{k_0} (y^{0}_i)_{i\in\arity{k_0}},
      \betwixt y^{0}_{i_{0}} = x_1  \Rrightup{k_1}  (y^{1}_i)_{i \in
        \arity{k_1}}, \betwixt \cdots   
    \end{displaymath}
Equivalently: $X$ is the only \emph{inductive} subset, i.e., such $U\subseteq X$ that 
any $x\smin X$ whose successors are all in
$U$ is $U$ itself.
\item A probabilistic system $(X,\zeta)$ is \emph{well-founded} when
  its support system is well-founded.
  \end{enumerate}
\end{definition}
Each of the languages mentioned in \cref{sec:axiom} gives rise to a well-founded system.  For example, take the countably
probabilistic language obtained from the signature $S$.  Writing~$\prog$ 
for the set of programs, we obtain a probabilistic system
$(\prog,\zeta)$, where the function $\zeta$ is defined by induction
over syntax: it sends
          \begin{itemize}
     \item $\inpu{k}(M_{i})_{i\in\arity{k}}$ to the
       principal distribution on $\tuple{k, (M_i)_{i\in\arity{k}}}$,
       \item $M +_p N$ to $\zeta(M) +_p \zeta(N)$,
       \item $\tesum_{n\in\nats}p_n\ml M_n$ to $\tesum_{n \in
           \nats} p_n\ml \zeta(M_n)$.
     \end{itemize}

\section{Bisimulation and traces}\label{sec:sys_trace}
\subsection{Bisimulation}

Before defining bisimulation, it is helpful to specify how~$\hsig$ and $\pset$ and
$\dist$ act on relations. For a distribution $d$ on
$A$ and any subset $U \subseteq A$, we write $d[U] \eqdef \sum_{x \in U} d(x)$.
\begin{definition}
  Let $A$ and $B$ be sets, and $\catr \subseteq A \times B$ a
  relation.
  \begin{enumerate}[wide]
  \item The relation $\hsig \catr \subseteq \hsig A \times \hsig B$
    relates $(k,(x_i)_{i\in\arity{k}})$ to  $(l,(y_i)_{i \in
      \arity{l}})$ when $k=l$ and for all $i \smin \arity{k}$ we have
    $(x_i,y_i)\in\catr$.
  \item The relation $\pset \catr \subseteq \pset A \times \pset B$
    relates $A \smin \pset A$ to $B \smin \pset B$ under the
    following condition: for every
    $x \smin A$ there is $y \smin B$ such that $(x,y)\in\catr$, and
   for every $y \smin B$ there is $x \smin A$ such that $(x,y)\in\catr$.
  \item \label{item:distrel} The relation $\dist \catr \subseteq \dist A \times \dist B$
    relates $d \smin \dist A$ to ${e \smin \dist B}$ under the following
    condition: for every 
    subset $U \subseteq A$, writing $\catr(U)$ for the set of all $\catr$-images of elements of $U$, we have $d[U] \leqslant e[\catr(U)]$.
    Equivalently: for every subset $V \subseteq B$, writing  $\catr^{\mone}(V)$ for the
    set of all $\catr$-preimages of elements of~$V$, we have $e[V]
\leqslant d[\catr^{\mone}(V)]$.
  \end{enumerate}
\end{definition}

Now we define bisimulation for nondeterministic and probabilistic
systems.\footnote{In early work~\cite{LarsenSkou91}, a
  probabilistic bisimulation was required to be an equivalence relation.}
\begin{definition}
  \begin{enumerate}[wide]
  \item Let $(X,\zeta)$ and $(Y,\xi)$ be nondeterministic transition
    systems.  A \emph{bisimulation} between them is a relation $\catr
    \subseteq X \times Y$ such that, for any $(x,y) \smin \catr$, we have
    $(\zeta(x), \zeta(y))\in\pset \hsig \catr$.
  \item Let $(X,\zeta)$ and $(Y,\xi)$ be probabilistic transition
    systems.  A \emph{bisimulation} between them is a relation $\catr
    \subseteq X \times Y$ such that, for any $(x,y) \smin \catr$, we have
    $(\zeta(x), \zeta(y))\in\dist \hsig \catr$.
  \end{enumerate}
  In each case, the greatest bisimulation is called \emph{bisimilarity}.
\end{definition}

\subsection{Plays}

As explained in the introduction, a ``play'' for our signature~$S$ 
is a alternating sequence of outputs and inputs.  To understand the
various kinds of play, imagine a program (in any of the languages in \cref{sec:axiom}) running.  When
execution has just started or an input has just 
been supplied, the program is
\emph{actively} working to produce the next output.  But after an
output has appeared, the program is paused and it \emph{passively} awaits input.  This
motivates the following terminology.
    
\begin{definition}[Plays] %
  \begin{enumerate}[wide]
  \item An \emph{active-ending play} of length ${n\in\nats}$ is a sequence of the form
    $k_0\comma i_0\comma k_1\comma i_1\comma \ldots,k_{n-1}\comma i_{n-1}$
    with $k_r\in\Ops$ and ${i_r\in\arity{k_r}}$.   An example is
    the empty play, denoted by~$\varepsilon$.
  \item A \emph{passive-ending play} $s$ of length $n\in\nats$ is a
    sequence of the form $k_0\comma i_0\comma k_1\comma i_1\comma \ldots\comma k_{n-1}\comma i_{n-1}\comma k_n$
    with $k_r\in\Ops$ and $i_r\in\arity{k_r}$.   The set of all
    possible next inputs is written  
    $\inarity{s} \eqdef \arity{k_n}$.
  \item An \emph{infinite play} is a sequence
     \begin{math}
      k_0,i_0,k_1,i_1,\ldots,
    \end{math}
    with~${k_r \in\Ops}$ and $i_r\in\arity{k_r}$.
  \end{enumerate}
\end{definition}
Individual plays can be arranged into a ``trace strategy'', which intuitively describes the trace semantics of a program.  As explained in
the introduction, this may be a set of plays (for a nondeterministic
program) or a function on plays (for a probabilistic program).

\subsection{Trace semantics for nondeterminism}

We now consider the trace semantics of a state within a
nondeterministic system.  The key notion is the following.
    \begin{definition}%
          A \emph{nondeterministic trace strategy} $\sigma$ consists
          of a set $\sigmapass$ of
          passive-ending plays and a set $\sigmaact$ of active-ending
          plays such that
          \begin{itemize}
          \item an active-ending play is in $\sigmaact$ iff it is either
            empty or of the form  $s.i$ for $s \smin \sigmapass$ and
            $i \smin \inarity{s}$
           \item for an active-ending play $s$ and output $k \smin
             \Ops$, if ${s.k\in\sigmapass}$, then $s\in\sigmaact$.
          \end{itemize}
        \end{definition}
Note that $\sigmapass$ determines $\sigmaact$.  For convenience, we
sometimes represent the strategy as the set $\sigmapass \cup \sigmaact$.

Now we can define trace semantics:
\begin{definition} \label{def:nondettrace}
  Let $(X,\zeta)$ be a nondeterministic system.  The \emph{trace
      semantics} of $x \smin X$, written $\tracesof{x}$, is the strategy
    consisting of all 
    passive-ending plays $k_0,i_0,k_1,i_1,\ldots,k_{n-1},i_{n-1}$ such
    that there is a sequence
    \begin{align*}
         x = x_0 \Rrightup{k_0} (y^{0}_i)_{i\in\arity{k_0}},
         &\betwixt y^{0}_{i_{0}} = x_1 %
      \betwixt \cdots\\ &\betwixt   y^{n-1}_{i_{n-1}} = x_{n} \Rrightup{k_n} (y^{n}_i)_{i\in\arity{k_0}}
    \end{align*}
    and likewise for active-ending plays.     Two states with the same trace semantics are \emph{trace equivalent}.
\end{definition}

    \subsection{Trace semantics for probability}\label{sec:t-sem}

 For a probabilistic system, we proceed similarly:
 \begin{definition}
A \emph{play-measure} $\sigma$ consists of a function $\sigmapass$
from passive-ending plays to $[0,1]$ and a function $\sigmaact$ from
active-ending plays to $[0,1]$ satisfying
\begin{align*}
   \sigmaact(s.i)  &\, = \sigmapass(s) &
  \sigmaact(t)     &\, = \sum_{k\in\Ops} \sigmapass(t.k) 
 \end{align*}
 The \emph{weight} of $\sigma$   %
 is $\sigmaact(\varepsilon)$.
\end{definition}
Note that $\sigmapass$ determines $\sigmaact$.  For convenience we
sometimes  represent $\sigma$ as the function $\sigmapass \cup \sigmaact$.
\begin{definition}
  A \emph{probabilistic trace strategy} is a play-measure of weight
  1.  
\end{definition}
An example of such a function is the trace semantics of the program in
\cref{sec:axiom}\plnote{More specific?} 
We proceed to define trace semantics in general:
\begin{definition} \label{def:probtrace}
  Let $(X,\zeta)$ be a probabilistic system.  The \emph{trace
       semantics} of $x \smin X$, written $\tracesof{x}$, is the probabilistic trace strategy sending  $k_0, i_0, \ldots, k_n$ to the sum over all sequences
    \begin{align*}
         x = x_0 \Rrightup{p_0,k_0} (y^{0}_i)_{i\in\arity{k_0}},
         ~~ y^{0}_{i_{0}} &\,= x_1 %
      \cdots\\ \betwixt   y^{n-1}_{i_{n-1}} &\,= x_{n} \Rrightup{p_{n},k_n} (y^{n}_i)_{i\in\arity{k_0}}
       \end{align*}
       of the product $p_0 \cdots p_{n}$.
     \end{definition}
For each of our probabilistic languages:
\begin{itemize}
\item The programs form a transition system, so each program has a
  trace semantics via \cref{def:probtrace}.
\item A strategy is
  \emph{definable} when it is the trace semantics of a program.  More
  generally, a play-measure of weight $\lambda$ is \emph{definable}
  when it is either 0 or $\lambda \tracesof{M}$ for some program
  $M$.\footnote{The first disjunct is needed only in the case of a signature
    with no constants.}
\end{itemize}
 Trace semantics can also be given compositionally, as described further below 
 in~\Cref{pro:sem-compos}. Firstly, each $k \smin \Ops$ gives an $\arity{k}$-ary operation
 $\sfinpu{k}$ on strategies:
 \begin{eqnarray*}
   \sfinpu{k}(\sigma_i)_{i \in
   \arity{k}} & : & \left\{
                    \begin{array}{llll}
                      \varepsilon & \mapsto & 1 \\
                      k & \mapsto & 1 & \\
                      k.i.s & \mapsto & \sigma_i(s) \\
                      l.s & \mapsto & 0 & (l \not= k)
                    \end{array} \right.
 \end{eqnarray*}
 Secondly, play-measures $(\sigma_i)_{i\in I}$ of total weight
 $\lambda \smin [0,1]$ can be added to give a play-measure
 \begin{spaceout}{lllll}
   \sum_{i \in I} \sigma_i & : & s & \mapsto & \sum_{i \in I} \sigma_i(s)
 \end{spaceout}%
 of weight $\lambda$.  A special case is the nullary sum, denoted 0, which 
 is the
 unique play-measure of weight 0.  Thirdly, we can multiply a scalar $b \smin [0,1]$ by a
 play-measure of weight $\lambda$ to give a play-measure
 \begin{spaceout}{lllll}
   b \ml \sigma & : & s & \mapsto & b \sigma(s)
 \end{spaceout}%
 of weight $b \lambda$.  These constructions
 yield a cancellative $\omega$-convex space of all 
 strategies.
 \begin{proposition}\label{pro:sem-compos}
 Here is a compositional description of the trace semantics:
    \begin{align*}
    \tracesof{\inpu{k}(M_{i})_{i\in\arity{k}}} =\; &
                                                       \sfinpu{k}(\tracesof{M_i})_{i \in I} \\
 \tracesof{M +_p N} 													 =\; & \tracesof{M} +_p \tracesof{N} \\
    \tracesof{\tesum_{n\in\nats}p_n\ml M_n} 			 =\; & \tesum_{n\in\nats} p_n\ml \tracesof{M_n}
  \end{align*}
\end{proposition}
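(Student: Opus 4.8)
The plan is to reduce everything to an evaluation of $\sigmapass$: by the note following the definition of play-measure, $\sigmapass$ determines $\sigmaact$, so it suffices to check each of the three equations on passive-ending plays only, once we have observed that both sides are genuine probabilistic trace strategies of weight $1$ (the right-hand sides qualify because $\sfinpu{k}$ sends $\varepsilon\mapsto 1$, a convex combination of weight-$1$ strategies has weight $1$, and $\tesum_{n\in\nats} p_n\ml\tracesof{M_n}$ has weight $\sum_n p_n = 1$). The engine of the proof is a single \emph{one-step unfolding} identity for the trace semantics of an arbitrary state $x$ of an arbitrary probabilistic system $(X,\zeta)$:
\begin{align*}
  \tracesof{x}(k_0) &\,= \sum_{(y_i)_{i\in\arity{k_0}}} \zeta(x)\bigl(k_0,(y_i)_i\bigr), \\
  \tracesof{x}(k_0.i_0.s) &\,= \sum_{(y_i)_{i\in\arity{k_0}}} \zeta(x)\bigl(k_0,(y_i)_i\bigr)\ml \tracesof{y_{i_0}}(s),
\end{align*}
where the sums range over all successor-tuples $(y_i)_{i\in\arity{k_0}}\in X^{\arity{k_0}}$.

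First I would establish this identity directly from \Cref{def:probtrace}. The value $\tracesof{x}(k_0.i_0.s)$ is the sum of the products $p_0\cdots p_n$ over all transition sequences realizing the play; grouping these sequences by their \emph{first} transition $x \Rrightup{p_0,k_0}(y_i)_i$ factors each product as $p_0 = \zeta(x)(k_0,(y_i)_i)$ times the product along a sequence out of $y_{i_0}$ realizing the shorter play $s$, and summing the latter over all such continuations yields exactly $\tracesof{y_{i_0}}(s)$. Since every $\zeta(x_j)$ is countably supported, only countably many tuples contribute at each step, so the rearrangement is a legitimate reordering of a nonnegative (hence unconditionally convergent) series; the length-$0$ case is just the marginal probability of outputting $k_0$.

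Then I would specialize $(X,\zeta)$ to the program system $(\prog,\zeta)$ of \Cref{sec:prelims} and substitute the three clauses defining $\zeta$ into the unfolding identity. For $x = \inpu{k}(M_i)_{i\in\arity{k}}$, the distribution $\zeta(x)$ is the principal distribution at $\tuple{k,(M_i)_i}$, so every sum collapses to a single term: $\tracesof{x}(k_0)$ is $1$ if $k_0=k$ and $0$ otherwise, while $\tracesof{x}(k_0.i_0.s)$ is $\tracesof{M_{i_0}}(s)$ if $k_0=k$ and $0$ otherwise, which are precisely the defining clauses of $\sfinpu{k}(\tracesof{M_i})_i$. For $x = M +_p N$ we have $\zeta(x) = \zeta(M) +_p \zeta(N)$, i.e.\ $\zeta(x)(k_0,(y_i)_i) = p\ml\zeta(M)(k_0,(y_i)_i) + (1-p)\ml\zeta(N)(k_0,(y_i)_i)$; distributing over the (countable) sums in the unfolding identity and recognizing each half as the unfolding of $\tracesof{M}$ and of $\tracesof{N}$ gives $p\ml\tracesof{M} + (1-p)\ml\tracesof{N}$ evaluated at the same play, which is $(\tracesof{M}+_p\tracesof{N})$ at that play. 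The countable case $x = \tesum_{n\in\nats} p_n\ml M_n$ is identical, using $\zeta(x) = \tesum_{n\in\nats} p_n\ml\zeta(M_n)$ and an interchange of the two nonnegative summations.

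I expect the only real work to be the justification of the one-step unfolding: the bookkeeping of the bijection between transition sequences realizing $k_0.i_0.s$ out of $x$ and pairs consisting of a first successor-tuple $(y_i)_i$ together with a transition sequence realizing $s$ out of $y_{i_0}$, and the attendant interchange of the two nonnegative summations, which is routine by Tonelli but is the step on which the whole computation rests. Once that identity is in hand, all three equations are immediate substitutions, and the active-ending components agree automatically because $\sigmapass$ determines $\sigmaact$.
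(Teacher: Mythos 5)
Your proof is correct: the one-step unfolding identity, obtained by grouping the transition sequences of \Cref{def:probtrace} according to their first transition and factoring out $p_0 = \zeta(x)(k_0,(y_i)_i)$, is exactly the right engine, and the three equations then follow by substituting the inductive clauses defining $\zeta$ on programs, with the countable-support observation and Tonelli-style interchange of nonnegative sums justifying the rearrangements. The paper states this proposition without any proof, treating it as routine, so your argument supplies precisely the verification the authors left implicit, and it matches the computation they clearly have in mind (compare the play-by-play check in \cref{prop:tensorholds}).
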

For play-measures $\sigma$ and $\tau$, the notation $\sigma \leqslant
\tau$ means that $\sigma(s) \leqslant \tau(s)$ for every passive-ending
$s$.  When this is so and $w(\sigma) < \infty$, we write $\tau -
\sigma$ for the
unique play-measure $\rho$ such that $\tau = \sigma + \rho$.
Explicitly, $\rho$ sends a passive-ending play~$s$ to $\tau(s) -
\sigma(s)$.%

\subsection{Branching conditions}

\begin{definition}
  Let $\sigma$ be a nondeterministic trace strategy.
  \begin{enumerate}%
  \item For $t \smin \sigmaact$, a \emph{possible
      output} is $k \smin \Ops$ such that \mbox{$t.k\in\sigmapass$}.  
  \item We say that  $\sigma$ is
          \begin{itemize}
          \item \emph{total} when each $t \smin \sigmaact$ has at
            least one possible output
          \item \emph{finitely branching} when each  $t \smin \sigmaact$ has finitely many possible outputs
          \item \emph{countably branching} when each  $t \smin \sigmaact$-enabled has countably many possible outputs.\sgnote{What is ``enabled''?}
          \end{itemize}
  \end{enumerate}
\end{definition}
Thus, for any state of a total transition system, the trace semantics 
is total, and likewise for finite branching and countable branching.

\begin{definition}
  The \emph{support} of a play-measure $\sigma$, written
  $\supp{\sigma}$, is the nondeterministic trace strategy consisting
  of all plays $p$ such
  that $\sigma(p) > 0$.
\end{definition}
Thus $\supp{\sigma}$ is total (unless $\sigma = 0$) and countably
branching.  We say that $\sigma$ is \emph{finitely branching} if
$\supp{\sigma}$ is.

We see that, for any state of a probabilistic system,
the support of the trace semantics is the trace semantics in the
support system.  Hence, for any state of a finitely probabilistic
system, the trace semantics is finitely branching.

\section{Sum- and Tensor-Equivalence}\label{sec:sum-tensor}

Our goal is to axiomatize bisimilarity and trace equivalence. For the
latter, we note the following fact:
\begin{proposition} \label{prop:tensorholds}
The tensor laws shown in
\cref{fig:tens-laws} hold for probabilistic trace strategies. 
\end{proposition}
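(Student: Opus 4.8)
The plan is to read both tensor laws as identities between play-measures and verify them pointwise. Recall that a play-measure $\sigma$ is completely determined by its passive-ending component $\sigmapass$ (the text notes that $\sigmapass$ determines $\sigmaact$), and that the operations involved are all defined pointwise on passive-ending plays: $\bigl(\sum_i \sigma_i\bigr)(s) = \sum_i \sigma_i(s)$, $(b\ml\sigma)(s) = b\,\sigma(s)$, and hence $(\sigma +_p \tau)(s) = p\,\sigma(s) + (1-p)\,\tau(s)$. The operation $\sfinpu{k}$ is given by the explicit formula displayed just before \Cref{pro:sem-compos}. It therefore suffices to evaluate both sides of each law at an arbitrary passive-ending play $s$ and to split into the cases dictated by that formula.

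For the binary law, which asserts that $\sfinpu{k}$ commutes with $+_p$, I would compute $\bigl(\sfinpu{k}(\sigma_i)_i +_p \sfinpu{k}(\sigma'_i)_i\bigr)(s)$ and compare it with $\sfinpu{k}(\sigma_i +_p \sigma'_i)_i(s)$. If $s = k$, both are $p\cdot 1 + (1-p)\cdot 1 = 1$. If $s = k.i.s'$ with $i \smin \arity{k}$, both are $p\,\sigma_i(s') + (1-p)\,\sigma'_i(s') = (\sigma_i +_p \sigma'_i)(s')$. If the first output of $s$ is some $l \neq k$, both are $0$. These cases exhaust the passive-ending plays, so the two play-measures agree.

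The countable law is verified in the same manner, now using $\sum_{n\in\nats} p_n = 1$. Evaluating the left-hand side $\sum_{n} p_n \ml \sfinpu{k}(\sigma_{n,i})_i$ at $s = k$ gives $\sum_n p_n\cdot 1 = 1$; at $s = k.i.s'$ it gives $\sum_n p_n\,\sigma_{n,i}(s') = \bigl(\sum_n p_n\ml\sigma_{n,i}\bigr)(s')$; and at any $s$ whose first output is $l\neq k$ it gives $0$. Each value coincides with that of the right-hand side $\sfinpu{k}\bigl(\sum_n p_n\ml\sigma_{n,i}\bigr)_i$, so the law holds.

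I expect no genuine obstacle: the proposition is precisely the statement that the (super)convex structure on play-measures commutes with the operations $\sfinpu{k}$, and the only care needed is bookkeeping---confirming that every expression is a bona fide probabilistic trace strategy (each has weight $1$, since $+_p$ and $\sum_n p_n$ with $\sum_n p_n = 1$ preserve weight $1$, and $\sfinpu{k}$ maps weight-$1$ strategies to weight-$1$ strategies) and that the case split over passive-ending plays is exhaustive. This establishes \emph{soundness} of the tensor laws; the real difficulty of the paper lies in the converse, completeness direction flagged in \Cref{sec:axiom}.
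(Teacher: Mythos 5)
Your proof is correct and follows essentially the same route as the paper's: a pointwise verification on passive-ending plays, splitting into the cases $s = k$, $s = k.i.s'$, and $s$ beginning with an output $l \neq k$, exactly as in the paper's three-case argument for the countable law (the paper leaves the binary law implicit as the finitely supported special case). Your extra bookkeeping about weight preservation is a harmless addition, not a divergence.
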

\begin{proof}
  Given $k \smin \Ops$, we show
  \begin{align*}
    \sum_{n\in\nats} p_n \ml \mathsf{Req}\,k?& (\sigma_{n,i})_{i \in
   \arity{k}} \\ =&\;  \mathsf{Req}\,k? \Bigl(\sum_{n\in\nats} p_n \ml  \sigma_{n,i}\Bigr)_{i \in
                     \arity{k}}
                 \end{align*}
   as follows.  Any play beginning
         with an output other than $k$ is assigned 0 by each side.
         The play $k$ itself is assigned $1$ by each side.  A play of
         the form $k.i.s$ is assigned $\sum_{n\in\nats} p_n\ml
         \sigma_{n,i}(s)$ by each side.\end{proof}

\begin{definition}
  For each probabilistic language:
  \begin{itemize}
  \item \emph{sum-equivalence} is the least congruence including the
  convex laws or $\omega$-convex laws (\cref{fig:prob-laws}).
\item \emph{tensor-equivalence} is the least congruence including the
  convex or $\omega$-convex laws, and commutativity laws between
  I/O operations and probabilistic choice (\cref{fig:tens-laws}).
  \end{itemize}
\end{definition}
The reason for the names \enquote{sum} and \enquote{tensor} is
explained in \Cref{sec:monads}.
Let us note the soundness of these theories.
\begin{proposition} \label{prop:equivsound}
  \begin{enumerate}
  \item \label{item:bissound} Sum-equivalent programs are bisimilar.
  \item \label{item:tracesound} Tensor-equivalent programs are trace equivalent.
  \end{enumerate}
\end{proposition}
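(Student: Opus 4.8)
The plan is to prove both inclusions by the same schema. By definition, sum-equivalence (resp.\ tensor-equivalence) is the \emph{least} congruence containing the convex/$\omega$-convex laws of \cref{fig:prob-laws} (resp.\ those laws together with the commutativity laws of \cref{fig:tens-laws}). Hence, to establish \cref{item:bissound} it suffices to show that bisimilarity is a congruence validating the convex/$\omega$-convex laws, and to establish \cref{item:tracesound} it suffices to show that trace equivalence is a congruence validating all of the relevant laws; in each case the desired inclusion then follows from minimality.

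For \cref{item:bissound} I would work inside the single system $(\prog,\zeta)$. First I would check that bisimilarity is a congruence. This rests on the fact that $\zeta$ is defined compositionally: the successor distribution of $\inpu{k}(M_i)_{i\in\arity{k}}$ is the principal distribution on $\tuple{k,(M_i)_{i\in\arity{k}}}$, while $\zeta(M +_p N) = \zeta(M) +_p \zeta(N)$ and $\zeta(\tesum_{n}p_n\ml M_n) = \tesum_{n} p_n\ml \zeta(M_n)$. Since the relation liftings $\hsig(\argument)$ and $\dist(\argument)$ are monotone and compatible with these operations, relating bisimilar arguments componentwise yields bisimilar compound programs; this is the standard argument that bisimilarity of a coalgebra whose structure map respects the algebraic operations is itself a congruence. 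Second, I would check the convex/$\omega$-convex laws. The key observation is that $\dist\hsig\prog$ is itself an $\omega$-convex space, so it already satisfies the laws of \cref{fig:prob-laws}; consequently, for each such law the two sides are sent by $\zeta$ to the \emph{same} distribution (for instance $\zeta(M +_p M) = \zeta(M) +_p \zeta(M) = \zeta(M)$). Adjoining such pairs to the diagonal therefore produces a bisimulation, so the two sides are bisimilar.

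For \cref{item:tracesound} the work is lighter, because \cref{pro:sem-compos} already exhibits $\tracesof{\argument}$ as a homomorphism for all operations of the language. Trace equivalence, being exactly the relation of having equal trace semantics, is thus automatically a congruence. The convex/$\omega$-convex laws hold because the strategies form a cancellative $\omega$-convex space, so by compositionality the two sides of each such law receive equal trace semantics; and the commutativity laws of \cref{fig:tens-laws} hold by \cref{prop:tensorholds}, again transported through compositionality. Hence trace equivalence is a congruence containing all the tensor laws, and tensor-equivalence is contained in it.

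The main obstacle is the congruence property of \emph{probabilistic} bisimilarity in \cref{item:bissound}. Unlike the nondeterministic case, where the lifting $\pset(\argument)$ is straightforward, one must verify that the distribution lifting $\dist(\argument)$ of \cref{item:distrel} is a well-behaved relation lifting---monotone and functorial up to the composites needed to close a bisimulation under the $+_p$ and $\tesum_{n}$ operations. Concretely, the delicate point is that if $\catr$ is a bisimulation then the relation obtained by closing $\catr$ under the probabilistic operations is again a bisimulation, which relies on the standard but technical fact that $\dist$ preserves weak pullbacks, so that $\dist\hsig(\catr)$ behaves correctly under relational composition. Everything else reduces to the compositional descriptions of $\zeta$ and $\tracesof{\argument}$ together with the $\omega$-convex structure.
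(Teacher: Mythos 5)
Your proof is correct and takes essentially the same approach as the paper's (much terser) argument: both reduce soundness to minimality of the least congruence, checking that bisimilarity is a congruence validating the convex/$\omega$-convex laws (since $\zeta$ sends the two sides of each law to the same distribution), and checking via the compositional description of trace semantics (\cref{pro:sem-compos}) together with \cref{prop:tensorholds} that trace equivalence validates the tensor laws. The only point worth noting is that your appeal to weak-pullback preservation of $\dist$ is heavier machinery than needed: with the paper's Strassen-style definition of the lifting $\dist\catr$ (the condition $d[U] \leqslant e[\catr(U)]$ for all $U$), monotonicity, closure under $+_p$ and countable convex sums, and even relational composition are all immediate one-line calculations.
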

\begin{proof}
  We easily see that bisimilarity validates the convex or
  $\omega$-convex laws.  We have seen that trace semantics can be
  presented compositionally, so \cref{prop:tensorholds} tells us that it validates the tensor laws.
\end{proof}

For bisimilarity, we also have completeness:
\begin{theorem} \label{pro:bisim}
  For the finitely probabilistic or countably probabilistic program on
  any I/O signature, bisimilarity coincides with sum-equivalence.
\end{theorem}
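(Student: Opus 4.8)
The plan is to prove the nontrivial inclusion, that bisimilar programs are sum-equivalent; one inclusion is already \cref{prop:equivsound}(\ref{item:bissound}). Write $\sim$ for bisimilarity on the program system $(\prog,\zeta)$ and $\equiv$ for sum-equivalence. I would rely on two background facts. First, since the identity is a bisimulation, the converse of a bisimulation is a bisimulation, and bisimulations compose, $\sim$ is an equivalence relation; hence $\hsig{\sim}$ is an equivalence on $\hsig\prog$, and \cref{item:distrel}, specialized to an equivalence, says that $(d,e)\in\dist\hsig{\sim}$ holds iff $d$ and $e$ assign equal mass to every $\hsig{\sim}$-class, i.e.\ $d[C]=e[C]$ for each class $C$. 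Second, the convex laws present $\distfin$ and the $\omega$-convex laws present $\dist$ as free algebras \cite{Stone49,Konig86}; equivalently, these laws are complete for equality in the free (super)convex space on any set of generators.

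The first reduction is a head normal form. Regarding each I/O-headed program $\inpu{k}(M_i)_{i\in\arity{k}}$ as an opaque generator, a program is exactly a formal (super)convex combination of such generators, and $\zeta$ computes its value in the free space $\dist\hsig\prog$, sending $\inpu{k}(M_i)_i$ to the point mass on $(k,(M_i)_i)$. Hence, putting $\bisnf{M}\eqdef\sum_{(k,(M_i)_i)\in\supp{\zeta(M)}}\zeta(M)(k,(M_i)_i)\ml\inpu{k}(M_i)_i$, the programs $M$ and $\bisnf{M}$ have the same $\zeta$-value, so by completeness of the (super)convex laws $M\equiv\bisnf{M}$. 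This reduces the theorem to comparing the single-layer combinations $\bisnf{M}$ and $\bisnf{N}$; crucially, no law commuting I/O past choice is used here, which is precisely why sum-equivalence (not tensor-equivalence) suffices.

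I would then prove by well-founded induction on $M$ (legitimate since the support system is well-founded) the statement: for all $N$, if $M\sim N$ then $M\equiv N$. Assume the claim for every successor of $M$, and take $N\sim M$. The class characterization gives $\zeta(M)[C]=\zeta(N)[C]$ for every $\hsig{\sim}$-class $C$, so every class meeting $\supp{\zeta(N)}$ also meets $\supp{\zeta(M)}$. For each relevant class $C$ fix a representative atom $g_C=(k,(z_i)_i)\in\supp{\zeta(M)}$, whose continuations $z_i$ are successors of $M$, and also write $g_C$ for the corresponding program $\inpu{k}(z_i)_i$. If $(k,(x_i)_i)$ is any atom of $\zeta(M)$ or $\zeta(N)$ lying in $C$, then $x_i\sim z_i$ for all $i$; applying the induction hypothesis to the successor $z_i$ (and symmetry of $\sim$ for atoms coming from $N$) gives $x_i\equiv z_i$, whence $\inpu{k}(x_i)_i\equiv g_C$ by congruence. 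Replacing every generator in $\bisnf{M}$ by its class representative and merging coefficients of equal generators via the (super)convex laws (using $p\ml g+q\ml g\equiv(p+q)\ml g$) yields $\bisnf{M}\equiv\sum_C\zeta(M)[C]\ml g_C$, and likewise $\bisnf{N}\equiv\sum_C\zeta(N)[C]\ml g_C$. Since $\zeta(M)[C]=\zeta(N)[C]$ these are the same combination, so $M\equiv\bisnf{M}\equiv\bisnf{N}\equiv N$, closing the induction.

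I expect the two delicate points to be the following. The head-normal-form step rests entirely on completeness of the base (super)convex theory for its free monad, so the cited axiomatizations do the real work. The grouping step is the genuine obstacle: it matters that the class representatives can be chosen inside $\supp{\zeta(M)}$ — which is exactly what the mass equality $\zeta(M)[C]=\zeta(N)[C]$ guarantees — so that the induction hypothesis, phrased for successors of $M$, reaches every continuation arising on either side. The finitely probabilistic case runs identically with $\distfin$, finite supports, and the convex laws in place of their $\omega$-versions.
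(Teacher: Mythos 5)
Your proof is correct, and its combinatorial core coincides with the paper's: flatten the outermost probabilistic layer, group the I/O-headed continuations by bisimilarity class, and recurse into strictly smaller continuations. The packaging, however, genuinely differs. The paper defines \emph{crude} and \emph{rooted} programs (strictly alternating layers of probabilistic choice and I/O) and argues by mutual structural induction on this grammar, gathering bisimilarity classes directly with the (super)convex laws; the reduction of an arbitrary program to crude form is left implicit there. You instead run one well-founded induction on the transition relation and obtain the head normal form $\bisnf{M}$ in a single stroke by invoking completeness of the convex/$\omega$-convex laws for the free algebra $\distfin G$ resp.\ $\dist G$ over the set $G$ of I/O-headed programs --- a fact the paper licenses when it identifies (super)convex spaces with Eilenberg--Moore algebras of $\distfin$ and $\dist$, citing \cite{Stone49,Konig86}. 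What your route buys: the flattening of arbitrarily nested choices is explicit rather than implicit, and the merging of countably many occurrences of equal generators comes for free from the same freeness fact (deriving it by hand from the two $\omega$-convex laws is mildly fiddly). What the paper's route buys: it stays self-contained at the level of the axioms, exhibiting the equational manipulations directly on normal forms rather than appealing to the axiomatization of the free monad as a black box. You also correctly isolate the two points the paper's terse crude-case characterization silently encodes: that the lifting $\dist\hsig{\sim}$ of an equivalence amounts to equality of masses on $\hsig{\sim}$-classes (immediate from the paper's definition of $\dist\catr$ and its ``equivalently'' clause), and that class representatives must be chosen inside $\supp{\zeta(M)}$ --- guaranteed by the mass equality --- so that the induction hypothesis, stated for successors of $M$, reaches the continuations arising on both sides.
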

\begin{proof}
We give the case of countable probabilistic choice.  Firstly, we
  define the \emph{crude} and \emph{rooted} programs
  inductively as follows.
  \begin{itemize}[wide]
  \item The program $\inpu{k}{ (M_{i})_{i \in
          \arity{k}} }$ is rooted if the programs $(M_{i}) _{i \in
        \arity{k}} $ are all crude.
   \item The program $\sum_{n < \alpha} p_n \ml M_n$, where $\alpha$
     is either $\omega$ or a natural number, is crude if the programs
     $(M_n)_{n < \alpha}$ are all rooted.
  \end{itemize}
We prove that bisimilar crude programs are sum-equivalent, and
bisimilar rooted programs are sum-equivalent, by mutual induction.
For the rooted case, we note that
\begin{eqnarray*}
  \inpu{k}{ (M_{i})_{i \in
          \arity{k}} } & \sim & \inpu{l}{ (N_{i})_{i \in
          \arity{l}} }
\end{eqnarray*}
means that $k=l$ and, for all $i\in\arity{k}$, we have $M_i \sim
N_i$.  For the crude case, we note that
\begin{eqnarray*}
  \sum_{n < \alpha} p_n \ml M_n & \sim &  \sum_{n < \beta} q_n \ml N_n 
\end{eqnarray*}
means that, for any rooted program $R$, we have 
\begin{eqnarray*}
  \sum_{\substack{n < \alpha \\ M_n \sim R}} p_n & = & \sum_{\substack{n < \beta \\ N_n \sim R}} q_n
\end{eqnarray*}
So, up to sum-equivalence, we can gather up all the bisimilarity
classes on each side and get the same result.\end{proof}

We now consider whether the tensor theory is complete for
trace-equivalence.   In the nondeterministic setting, this question was answered
affirmatively in~\cite{BowlerLevyEtAl18}.

We shall see that tensor completeness also holds for binary
probabilistic choice, and for countable probabilistic choice if the
signature $S$ is finitary.  Henceforth, ``equivalent'' means
tensor-equivalent, and is denoted by $\equiv$.

\section{Well-founded Normal Forms and Strategies}\label{sec:nf}

To aid our analysis of trace semantics and tensor-equivalence, we shall introduce several kinds of normal form in this paper: 
\emph{finitely founded}, \emph{well-founded}, \emph{shallow}, \emph{light} and \emph{steady}.  Some of them, such as
shallow and light normal form, exist for all programs, while others exist only for
special kinds of program.

Here are the first two kinds of normal form:
\begin{definition} \label{def:wellfoundnf}
  \begin{enumerate}[wide]
  \item The \emph{finitely founded normal forms} are defined inductively
    by
    \begin{displaymath}
      M \bnfgo \tesum_{k \in A} p_k \ml \inpu{k}(M_{k,i})_{i \in
        \arity{k}}
    \end{displaymath}
    where $A$ ranges over the finite subsets of $\Ops$, and the
    coefficients~$(p_k)_{k \in A}$ are positive and sum to 1.
  \item The \emph{well-founded normal forms} are defined the same
    way, except that $A$ ranges over the countable subsets of $\Ops$.
  \end{enumerate}
\end{definition}
Strictly speaking, these definitions require a specified order on every finite subset
of $\Ops$, and an $\omega$-shaped order on every countably infinite
subset.  However, up to $\equiv$, these orders do not matter.

Now we give the corresponding conditions on strategies.
     \begin{definition} \label{prop:wnf} %
        \begin{enumerate}[wide]
     \item A nondeterministic trace strategy $\sigma$ is \emph{well-founded}
       when there is no infinite play whose prefixes 
       are all in $\sigma$.  
     \item A probabilistic trace strategy $\sigma$ is \emph{well-founded}
       when its support is well-founded.
     \item A (nondeterministic or probabilistic) strategy is \emph{finitely founded} when
       it is both well-founded and finitely branching.
        \end{enumerate}
      \end{definition}
It is evident that $M \mapsto \tracesof{M}$ is a bijective
correspondence from finitely founded normal forms to finitely founded
strategies, and from well-founded normal forms to well-founded strategies.
      \begin{theorem} \label{prop:finstuff}
        For the finitely probabilistic language on any I/O signature:
        \begin{enumerate}[wide]
        \item A play-measure is definable iff it is finitely founded.
        \item \label{item:fincomplete} Trace-equivalence coincides with tensor-equivalence.
        \end{enumerate}
     \end{theorem}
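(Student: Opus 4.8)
The plan is to reduce everything to the bijective correspondence between finitely founded normal forms and finitely founded strategies stated just before the theorem, together with the soundness result \cref{prop:equivsound}. For part~(1), the forward implication is a direct computation: a definable play-measure is either $0$ — whose support is empty, hence trivially well-founded and finitely branching — or $\lambda\ml\tracesof{M}$ for a program $M$. Since we already observed that $\supp{\tracesof{M}}$ coincides with the trace semantics of $M$ in the (finitely branching, well-founded) support system, $\tracesof{M}$ is finitely founded, and scaling by $\lambda$ leaves the support unchanged. For the converse, a finitely founded play-measure $\sigma$ of weight $0$ is $0$ (the unique weight-$0$ play-measure) and hence definable; if its weight $\lambda$ is positive, then $\lambda^{\mone}\ml\sigma$ is a finitely founded probabilistic trace strategy with the same support, so by the stated bijection it equals $\tracesof{M^\flat}$ for the corresponding finitely founded normal form $M^\flat$, whence $\sigma=\lambda\ml\tracesof{M^\flat}$ is definable.

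The heart of the argument — and the engine for part~(2) — is a normalization lemma: \emph{every program of the finitely probabilistic language is tensor-equivalent to a finitely founded normal form}. I would prove this by well-founded induction on the program (the syntax is inductively defined, though possibly infinitely branching when some $\arity{k}$ is infinite). The case $\inpu{k}(M_i)_{i}$ is immediate: replacing each $M_i$ by an equivalent normal form (by the inductive hypothesis) yields the normal form with singleton output set $A=\{k\}$ and coefficient $1$; this is finitely branching even when $\arity{k}$ is infinite, since only one output is possible at the root. The only substantial case is $M+_pN$, which by the inductive hypothesis reduces to showing that $M^\flat+_pN^\flat$ is equivalent to a normal form whenever $M^\flat,N^\flat$ are normal forms.

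This sub-lemma I would establish by well-founded recursion on $M^\flat$, with $N^\flat$ and $p$ universally quantified. Writing $M^\flat=\tesum_{k\in A}a_k\ml\inpu{k}(M^\flat_{k,i})_{i}$ and $N^\flat=\tesum_{l\in B}b_l\ml\inpu{l}(N^\flat_{l,j})_{j}$, I first use the convex laws of \cref{fig:prob-laws}(a) to flatten $M^\flat+_pN^\flat$ into a single finite convex combination carrying weight $p\,a_k$ on the $A$-summands and $(1-p)\,b_l$ on the $B$-summands, and then collect summands by leading output over the finite set $A\cup B$. For an output $k\in A\cap B$ the two collected summands have total weight $c_k=p\,a_k+(1-p)\,b_k$ and read, up to the convex laws, as $c_k\ml\bigl(\inpu{k}(M^\flat_{k,i})_{i}+_{q_k}\inpu{k}(N^\flat_{k,i})_{i}\bigr)$ with $q_k=p\,a_k/c_k$; here the commutativity (tensor) law of \cref{fig:tens-laws} pushes the choice inside to give $c_k\ml\inpu{k}(M^\flat_{k,i}+_{q_k}N^\flat_{k,i})_{i}$. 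Each argument $M^\flat_{k,i}+_{q_k}N^\flat_{k,i}$ is a choice of two normal forms whose first component $M^\flat_{k,i}$ is a strict subterm of $M^\flat$, so the recursion replaces it by a normal form. Assembling the finitely many surviving summands over $A\cup B$, with positive coefficients summing to one, produces the required finitely founded normal form. The hard part will be exactly this step: the commutativity law must be applied with a single shared weight $q_k$ across the entire, possibly infinite, family $i\in\arity{k}$, so one cannot induct on height (infinite arity allows normal forms of unbounded height) — it is well-foundedness of the subterm order that makes the recursion legitimate, and the coefficient bookkeeping must be checked carefully to keep every weight positive and summing to one.

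With the normalization lemma in hand, part~(2) follows. One inclusion is the soundness direction recorded in \cref{prop:equivsound}. For the other, suppose $\tracesof{M}=\tracesof{N}$ and normalize $M\equiv M^\flat$ and $N\equiv N^\flat$. Soundness then gives $\tracesof{M^\flat}=\tracesof{M}=\tracesof{N}=\tracesof{N^\flat}$, so injectivity of $M\mapsto\tracesof{M}$ on finitely founded normal forms forces $M^\flat$ and $N^\flat$ to agree up to the ordering of their sums — which is immaterial up to $\equiv$ — and hence $M\equiv M^\flat\equiv N^\flat\equiv N$.
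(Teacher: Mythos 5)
Your proposal is correct and follows essentially the same route as the paper, whose own proof is a terse version of exactly this argument: an induction over programs showing each program is tensor-equivalent to the finitely founded normal form corresponding (under the stated bijection) to its trace semantics, so that trace-equivalent programs share a normal form. Your merging sub-lemma for $M^\flat +_p N^\flat$ (flatten by the convex laws, collect by leading output, push $+_{q_k}$ inside $\inpu{k}$ by the tensor law, and recurse on the strict-subterm order) is precisely the inductive step the paper leaves implicit, worked out with the coefficient bookkeeping made explicit.
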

     \begin{proof}
By induction, for each program $M$, the strategy $\tracesof{M}$ is
finitely founded, and the corresponding normal form is
equivalent to $M$. Therefore, if $M$ and $M'$ have the same trace
semantics $\sigma$, then they are both tensor-equivalent to
the corresponding normal form.
     \end{proof}
This completes our study of the finitely probabilistic language.  
Henceforth we consider only the countably probabilistic one.  It is not generally the case
that every definable
      strategy is well-founded:
      \begin{example} \label{ex:notwnf}
  For the signature consisting of a unary operation~$a$ and constant $c$, the strategy 
  \begin{align*}
  \tracesof{\sum_{n\in\nats} 2^{-n-1}\ml a^n(c)}
  \end{align*}
  is not well-founded,
  as this strategy assigns a positive probability to each finite prefix of the infinite play $(a.0)^\omega$, i.e.\ to $\epsilon$, $a$, $a.0$, $a.0.a$, $a.0.a.0$, etc.
\end{example}
Nonetheless, we have the following result.
\begin{proposition}  \label{prop:wfstratdef} 
  \begin{enumerate}[wide]
\item \label{item:wfstratdef} Any well-founded play-measure is definable.
  \item \label{item:wfcomplete} For trace-equivalent programs $M$ and $N$ such that $\tracesof{M}$
    is well-founded, we have $M \equiv N$.
  \end{enumerate}
      \end{proposition}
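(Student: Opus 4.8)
For \cref{item:wfstratdef}, let $\sigma$ be a well-founded play-measure, of weight $\lambda$. If $\lambda = 0$ then $\sigma$ is the unique play-measure of weight $0$, namely the zero play-measure, which is definable. If $\lambda > 0$, the pointwise rescaling $\tfrac{1}{\lambda}\sigma$ again satisfies the (linear) play-measure conditions, has weight $1$, and has the same support as $\sigma$, so it is a well-founded probabilistic trace strategy. By the bijective correspondence between well-founded normal forms and well-founded strategies noted above, $\tfrac{1}{\lambda}\sigma = \tracesof{W}$ for some well-founded normal form $W$; as $W$ is in particular a program, $\sigma = \lambda \ml \tracesof{W}$ is definable.

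The substance is in \cref{item:wfcomplete}. First I would isolate a \emph{shallow decomposition} lemma holding for \emph{every} program $P$:
\[
  P \equiv \sum_{k \in A} p_k \ml \inpu{k}(P_{k,i})_{i \in \arity{k}},
\]
where $A = \{ k \smin \Ops : \tracesof{P}(k) > 0\}$, $p_k = \tracesof{P}(k)$, and each $P_{k,i}$ is again a program. I would prove this by structural induction on $P$. If $P = \inpu{k}(P_i)_i$ it is already of this form, with $A = \{k\}$ and $p_k = 1$. If $P = \sum_n q_n \ml P_n$, the induction hypothesis puts each $P_n$ in shallow form; flattening the nested countable sums by the $\omega$-convex laws produces a single countable sum of summands $\inpu{k}(\cdots)$, which I group by their output $k$ and, within each group, merge into one operation using the tensor law of \cref{fig:tens-laws}. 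Tracking trace semantics through \cref{pro:sem-compos} and soundness (\cref{item:tracesound}) then forces $p_k = \tracesof{P}(k)$ and $\tracesof{P_{k,i}}(s) = \tracesof{P}(k.i.s)/p_k$.

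The main obstacle is exactly this merging step: combining countably many summands that share an output $k$ into a single $\inpu{k}(\cdots)$ requires the infinitary tensor law together with a Fubini-style interchange of doubly-indexed countable sums, and one must verify that the conditional coefficients (of the form $q_n\, p^n_k / p_k$, where the $p^n_k$ are the head coefficients of the shallow form of $P_n$) are well defined and sum to $1$ for each $k$ with $p_k > 0$. It is the $\omega$-convex laws that license these rearrangements. Note that well-foundedness plays no role here, which is consistent with the shallow form existing even for the program of \cref{ex:notwnf}, whose non-well-founded semantics admits no finite unfolding into normal form.

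With shallow decomposition available, I would prove \cref{item:wfcomplete} by well-founded induction on $\sigma \eqdef \tracesof{M} = \tracesof{N}$, ordered by the relation ``is a residual of''; this relation is well-founded precisely because $\supp{\sigma}$ is, since an infinite descending chain of residuals would reassemble into an infinite play all of whose prefixes lie in $\supp{\sigma}$. Decomposing $M$ and $N$ yields
\[
  M \equiv \sum_{k\in A} p_k \ml \inpu{k}(M_{k,i})_{i}, \qquad N \equiv \sum_{k\in A} p_k \ml \inpu{k}(N_{k,i})_{i},
\]
with identical $A$ and coefficients $p_k = \sigma(k)$, because these depend only on $\sigma$. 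Likewise $\tracesof{M_{k,i}}$ and $\tracesof{N_{k,i}}$ both equal the residual $s \mapsto \sigma(k.i.s)/p_k$, whose support is $\{ s : k.i.s \in \supp{\sigma}\}$ and is therefore a strictly smaller well-founded strategy. The induction hypothesis gives $M_{k,i} \equiv N_{k,i}$ for all $k \smin A$ and $i \smin \arity{k}$, so $M \equiv N$ by congruence.
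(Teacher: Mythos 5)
Your proof is correct and takes essentially the same route as the paper: item~1 is the already-noted bijection between well-founded normal forms and well-founded strategies (plus rescaling, per the definition of definable play-measure), and for item~2 your combination of the shallow normal form proposition, head-matching forced by trace semantics, and recursion into residuals is exactly what the paper's terse ``induction over well-founded normal forms'' unpacks to. The only presentational difference is that you perform well-founded induction on $\sigma$ ordered by the residual relation rather than structural induction on the corresponding normal form, which is the same induction under the bijective correspondence.
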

      \begin{proof}
        By induction, for each well-founded normal form $M$, we see
        that $M$ is equivalent to any program that is trace-equivalent
        to it.  So programs $N$ and $N'$ that denote the same well-founded
        strategy are both equivalent to the corresponding
        normal form.
      \end{proof}
      Next we consider a weaker notion than well-founded normal form, in which the components $(M_{k,i})_{k \in A, i \in
  \arity{k}}$ are allowed to be arbitrary programs.
      \begin{definition} \label{def:shallow}
        A \emph{shallow normal form} is a program of the form
        $\tesum_{k \in A} p_k \ml
        \inpu{k}(M_{k,i})_{i\in\arity{k}}$ with countable $A\subseteq\Ops$, and the
    coefficients $(p_k)_{k \in A}$ are positive and sum to 1.
  \end{definition}
\begin{proposition}
  Every program is equivalent to a shallow normal form.
\end{proposition}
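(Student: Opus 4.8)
The plan is to prove this by structural induction on the program $M$, showing at each stage that $M$ is tensor-equivalent to a shallow normal form. Recall that programs are generated by the two constructors $\inpu{k}(M_i)_{i\in\arity{k}}$ and $\sum_{n\in\nats} q_n\ml M_n$. In the base case $M = \inpu{k}(M_i)_{i\in\arity{k}}$, taking $A = \{k\}$ and $p_k = 1$ already exhibits $M$ as a shallow normal form: the singleton sum $1\ml\inpu{k}(M_i)_{i\in\arity{k}}$ is equivalent to $M$ via the unit law $\sum_n \delta_{n,0}N_n = N_0$ of \cref{fig:prob-laws}.

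For the inductive step, suppose $M = \sum_{n\in\nats} q_n \ml M_n$, and by the induction hypothesis each $M_n$ is equivalent to a shallow normal form $\sum_{k\in A_n} p^n_k\ml \inpu{k}(M^n_{k,i})_{i\in\arity{k}}$. Substituting and using congruence gives $M \equiv \sum_{n\in\nats} q_n \ml \sum_{k\in A_n} p^n_k \ml \inpu{k}(M^n_{k,i})_{i}$. First I would flatten this double sum into a single countable sum indexed by the (countable) set $\{(n,k) : n\in\nats,\ k\in A_n\}$, with coefficient $r_{n,k} = q_n p^n_k$; modulo reindexing, this is an instance of the $\omega$-convex associativity law of \cref{fig:prob-laws}. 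The result is $M \equiv \sum_{(n,k)} r_{n,k}\ml \inpu{k}(M^n_{k,i})_{i}$, where $\sum_{(n,k)} r_{n,k} = 1$.

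Next I would group the summands by their leading output. Let $A = \bigcup_n A_n$, a countable subset of $\Ops$, and for each $k\in A$ put $S_k = \{n : k\in A_n\}$ and $P_k = \sum_{n\in S_k} r_{n,k}$, discarding those $k$ with $P_k = 0$. Reordering the sum (again by the $\omega$-convex laws, which entail permutation invariance) so that all summands sharing an output $k$ are adjacent, and factoring out the total mass $P_k$, gives
\[
M \equiv \sum_{k\in A} P_k \ml \Bigl(\sum_{n\in S_k} \tfrac{r_{n,k}}{P_k}\ml \inpu{k}(M^n_{k,i})_{i}\Bigr).
\]
Each inner sum now has coefficients forming a distribution, so the tensor law of \cref{fig:tens-laws} applies and rewrites it to $\inpu{k}\bigl(\sum_{n\in S_k} \tfrac{r_{n,k}}{P_k}\ml M^n_{k,i}\bigr)_{i}$. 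Setting $N_{k,i} = \sum_{n\in S_k} \tfrac{r_{n,k}}{P_k}\ml M^n_{k,i}$, we obtain $M \equiv \sum_{k\in A} P_k\ml \inpu{k}(N_{k,i})_{i}$ with the $P_k$ positive and summing to $1$, which is the required shallow normal form.

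The routine parts are the two constructor cases; the delicate part — and the main thing to get right — will be the bookkeeping of the countable rearrangements. I would need to justify that flattening, reindexing over $\{(n,k)\}$, reordering to cluster equal outputs, and factoring out $P_k$ are all genuine consequences of the $\omega$-convex laws, so that each step is a bona fide instance of $\equiv$, and to check that the tensor law is applied only \emph{after} each per-output sub-sum has been renormalized to a distribution. Degeneracies (dropping zero-weight summands, and reading a singleton $A$ through the unit law) must also be handled cleanly, but these pose no real difficulty once the rearrangement laws are in place.
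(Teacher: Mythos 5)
Your proof is correct. The paper actually states this proposition without any proof, treating it as routine bookkeeping; your argument --- structural induction, flattening the nested countable sums via the second $\omega$-convex law of \cref{fig:prob-laws}, regrouping summands by head output, and applying the tensor law of \cref{fig:tens-laws} only after renormalizing each per-output block to a distribution (with zero-weight summands discarded) --- is exactly the standard ``splitting and regathering'' manipulation the authors use elsewhere (cf.\ the proofs of \cref{prop:light} and \cref{lem:steady-pre}), so it matches the intended, omitted argument.
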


Now we can give the following result,
which will be used in \cref{sec:impers}.
      \begin{lemma} \label{prop:subsplit}
        For programs $M$ and $L$ with $\tracesof{L}$ well-founded, and
        $p < 1$ such that $p\tracesof{L}
\leqslant \tracesof{M}$, there
is a program $N$ such that $M \equiv L +_p N$.  
      \end{lemma}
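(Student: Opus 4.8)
The plan is to argue by well-founded induction on the structure of $L$, after first putting both programs into normal form. The guiding observation is that, at the level of trace strategies, $\tracesof{L +_p N} = p\tracesof{L} + (1-p)\tracesof{N}$ by \Cref{pro:sem-compos}, so the hypothesis $p\tracesof{L}\leqslant\tracesof{M}$ is exactly what makes the ``remainder'' $\tracesof{M} - p\tracesof{L}$ a well-defined play-measure of weight $1-p$. The real content of the lemma is not that this remainder is realized by some $N$ making $M$ and $L +_p N$ \emph{trace equivalent} (which is easy), but that $M \equiv L +_p N$ is \emph{provable in the tensor theory}; in general that strengthening is the open completeness problem, and here it is the well-foundedness of $L$ that lets us avoid it.

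First I would reduce to a convenient shape. Since $\tracesof{L}$ is well-founded, \Cref{prop:wfstratdef} lets me replace $L$ by the unique well-founded normal form $\hat L$ with $\tracesof{\hat L}=\tracesof{L}$: then $L\equiv \hat L$, hence $L +_p N \equiv \hat L +_p N$, so I may assume $L = \tesum_{k\in A} q_k\ml\inpu{k}(L_{k,i})_{i\in\arity{k}}$ with each $L_{k,i}$ again a well-founded normal form. Likewise I would replace $M$ by an equivalent shallow normal form $\tesum_{k\in B} r_k\ml\inpu{k}(M_{k,i})_{i\in\arity{k}}$. Comparing the two play-measures level by level, $p\tracesof{L}\leqslant\tracesof{M}$ forces $A\subseteq B$ and $pq_k\leqslant r_k$ for $k\in A$ at the first output, and, comparing continuations after $k.i$, forces $p_k\tracesof{L_{k,i}}\leqslant\tracesof{M_{k,i}}$ where $p_k\eqdef pq_k/r_k \leqslant 1$.

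Next I would define $N$ explicitly as a shallow normal form $\tesum_{k\in B'} s_k\ml\inpu{k}(N_{k,i})_{i\in\arity{k}}$, choosing the first-level weights so that mixing with $L$ reproduces $M$: namely $s_k \eqdef (r_k - pq_k)/(1-p)$ for $k\in A$ and $s_k\eqdef r_k/(1-p)$ for $k\in B\setminus A$, with $B'\eqdef\{k: s_k>0\}$. A short check shows these are nonnegative and sum to $1$. For the continuations I would split on $p_k$. When $p_k<1$, the triple $(M_{k,i}, L_{k,i}, p_k)$ satisfies the hypotheses of the lemma with $L_{k,i}$ a strictly smaller well-founded normal form, so the induction hypothesis supplies $N_{k,i}$ with $M_{k,i}\equiv L_{k,i} +_{p_k} N_{k,i}$. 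When $p_k=1$ we have $s_k=0$ (so $k\notin B'$) and, both sides being trace strategies of weight $1$, $\tracesof{L_{k,i}}=\tracesof{M_{k,i}}$, whence $M_{k,i}\equiv L_{k,i}$ by \Cref{prop:wfstratdef}. For $k\in B\setminus A$ I would simply take $N_{k,i}\eqdef M_{k,i}$.

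Finally I would verify $L +_p N \equiv M$ purely algebraically. Expanding $L +_p N$ by the $\omega$-convex laws of \cref{fig:prob-laws} and regrouping the two countable sums by their leading output $k$, each $k$-block has total weight $r_k$ (namely $pq_k + (1-p)s_k$ when $k\in A$, and $(1-p)s_k$ otherwise). For $k\in A$ with $p_k<1$ the block normalizes to $\inpu{k}(L_{k,i})_i +_{p_k}\inpu{k}(N_{k,i})_i$, which by the tensor law of \cref{fig:tens-laws} equals $\inpu{k}(L_{k,i} +_{p_k} N_{k,i})_i \equiv \inpu{k}(M_{k,i})_i$ via the induction hypothesis; the degenerate blocks ($p_k=1$, or $k\in B\setminus A$) reduce to $\inpu{k}(M_{k,i})_i$ directly. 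Summing yields $L +_p N \equiv \tesum_{k\in B} r_k\ml\inpu{k}(M_{k,i})_i \equiv M$. The main obstacle is precisely this passage from the evident trace-level identity to a tensor-provable one; it is the well-foundedness of $L$---providing both a terminating recursion and, through \Cref{prop:wfstratdef}, the upgrade from trace equivalence to $\equiv$ at the leaves---that makes the induction close without invoking the unresolved general completeness.
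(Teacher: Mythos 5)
Your proof is correct and follows essentially the same route as the paper's: reduce $M$ to a shallow normal form and $L$ to a well-founded normal form, induct on $L$, extract the first-level inequalities $pq_k \leqslant r_k$ and the continuation inequalities $\frac{pq_k}{r_k}\tracesof{L_{k,i}} \leqslant \tracesof{M_{k,i}}$, define $N$ with exactly the same weights $\frac{r_k - pq_k}{1-p}$ (resp.\ $\frac{r_k}{1-p}$), and close via the tensor law of \cref{fig:tens-laws}. The one point where you go beyond the paper is the explicit treatment of the boundary case $p_k = 1$, where the inductive hypothesis (which requires $p < 1$) does not apply and you instead invoke \Cref{prop:wfstratdef} directly to get $M_{k,i} \equiv L_{k,i}$ --- a degenerate case the paper's proof silently passes over, so this is a small but genuine tightening rather than a different approach.
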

\begin{proof}
 We assume that $M = \tesum_{k \in A} a_k \ml
        \inpu{k}(M_{k,i})_{i\in\arity{k}}$ is in shallow normal-form and that $L = \tesum_{k \in B} b_k \ml
        \inpu{k}(L_{k,i})_{i\in\arity{k}}$ is in 
        well-founded normal form.  We proceed by induction on $L$.

For $k \smin B$, the inequality $p \tracesof{L} \leqslant
\tracesof{M}$ tells us that
\begin{align*}
  0 < p b_k  
    = p \tracesof{L}(k) 
    \leqslant \tracesof{M}(k) 
\end{align*}
giving $k \in A$ and $pb_k \leqslant a_k$.  Furthermore, for all $i \smin
\arity{k}$ we have $\frac{p b_k}{a_k} \tracesof{L_{k,i}} \leqslant
\tracesof{M_{k,i}}$, since $a_k > 0$ and for any passive-ending $s$ we have
\begin{flalign*}
&& p \tracesof{L} (k.i.s) \leqslant &\; \tracesof{M}(k.i.s) && \\
  \text{i.e.} && p b_k \tracesof{L_{k,i}}(s) \leqslant &\; a_k
                                                              \tracesof{M_{k,i}}(s)
                                                            \end{flalign*}%
The inductive hypothesis gives $N_{k,i}$ such that 
  \begin{align*}
    M_{k,i} =\, & {L_{k,i}} +_{\frac{pb_k}{a_k}} N_{k,i}
  \end{align*}                                      
  Put 
  \begin{align*}
    N \eqdef\, & \sum_{k \in A} \begin{cases}
                 \frac{a_k - pb_k}{1-p}\ml  \inpu{k}(N_{k,i})_{i\in\arity{k}} & (k \in B) \\
                  \frac{a_k}{1-p}\ml \inpu{k}(M_{k,i})_{i\in\arity{k}}        & (k \not\in B)
                \end{cases}
  \end{align*}
 Clearly the coefficients are nonnegative and sum to 1.   Now we have
  \begin{align*}
    M \equiv\, & \sum_{k \in A} a_k\ml\inpu{k}(M_{k,i})_{i \in
                 \arity{k}} \\
      \equiv\, & \sum_{k \in A} 
                 \begin{cases}
                  a_k\ml\inpu{k} (L_{k,i}
                   +_{\frac{pb_k}{a_k}} N_{k,i})_{i\in\arity{k}} &
                                                                     (k
                                                                     \in
                   B) \\
                  a_k\ml\inpu{k}(M_{k,i})_{i\in\arity{k}} & (k \not\in B)
                 \end{cases} \\
      \equiv\, & \sum_{k \in A} \begin{cases}
                  a_k\ml \bigl(\inpu{k} (L_{k,i})_{i\in\arity{k}}\\
                   \qquad +_{\frac{pb_k}{a_k}} \inpu{k}(N_{k,i})_{i \in
                   \arity{k}}\bigr) &~\quad
                                                                     (k
                                                                     \in
                   B) \\
                  a_k\ml\inpu{k}(M_{k,i})_{i \in
                 \arity{k}} &\quad~ (k \not\in B)
                 \end{cases} \\
     \equiv\, & \sum_{k \in B} b_k\ml \inpu{k}  (L_{k,i})_{i
               \in\arity{k}}\\&\quad +_p \sum_{k \in A} 
                \begin{cases}
                 \frac{a_k - pb_k}{1-p}\ml  \inpu{k}(N_{k,i})_{i \in
                  \arity{k}} & (k \in B) \\
                  \frac{a_k}{1-p}\ml \inpu{k}(M_{k,i})_{i\in\arity{k}}
                             & (k \not\in B)
                \end{cases} \\
    =\, & L +_p N
  \end{align*}
  as required.\end{proof}            

\section{Completeness up to Impersonation}\label{sec:impers}

Although it is an open question whether trace
equivalent programs are always 
tensor-equivalent, we shall show that tensor-equivalence is complete
in a weaker sense.  This uses the
following notion.  
\begin{definition}
  Let $A$ be a convex space, and $a= (a_n)_{n\in \nats}$ a sequence in $A$.
  \begin{enumerate}[wide]
  \item An \emph{$a$-solution} is a sequence  $(r_n)_{n\in\nats}$ in
    $A$ satisfying $r_n = a_{n} +_{\frac{1}{2}} r_{n+1}$ for all $n
    \smin \nats$.
  \item If $A$ is an $\omega$-convex space, then the \emph{canonical
       $a$-solution} is the sequence $\bigl(\sum_{i\in\nats}
     2^{-n-i-1} \ml  a_{n+i}\bigr)_{n\in\nats}$.
  \end{enumerate}
\end{definition}
The \emph{head} of a sequence  $(r_n)_{n\in\nats}$ is the element
$r_0$.  
\begin{definition}
  Let $A$ be an $\omega$-convex space with elements~$x$ and $y$.  We
  say that $x$ is \emph{impersonated} by $y$ when there is a sequence
  $a$ such that
  \begin{itemize}
  \item $x$ is the head of the canonical $a$-solution, i.e.\ $x =
    \sum_{i\in\nats} 2^{-i-1} a_i$.
  \item $y$ is the head of an $a$-solution.
  \end{itemize}
\end{definition}

We see that impersonation implies equivalence up to cancellation:
\begin{lemma}\label{lem:impers-cancel}
  Let $A$ be an $\omega$-convex space with elements~$x$ and~$y$.  If
  $x$ is impersonated by $y$, then there is $u \smin A$ such that $x
  +_{\frac{1}{2}} u = y
  +_{\frac{1}{2}} u$.
\end{lemma}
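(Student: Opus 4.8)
The plan is to construct the witnessing element $u$ explicitly from the data of the impersonation and verify the equation by a single computation with infinite convex combinations. Unpacking the definition, fix a sequence $a = (a_n)_{n \in \nats}$ for which $x = \sum_{i \in \nats} 2^{-i-1}\ml a_i$ is the head of the canonical $a$-solution, and for which there is an $a$-solution $(r_n)_{n \in \nats}$ with head $r_0 = y$, so that $r_n = a_n \plushalf r_{n+1}$ for every $n$. I would then set
\[
  u \eqdef \sum_{n \in \nats} 2^{-n-1}\ml r_{n+1},
\]
a legitimate element of $A$ because the coefficients $(2^{-n-1})_{n \in \nats}$ form a distribution on $\nats$. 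The whole proof then reduces to showing that both $x \plushalf u$ and $y \plushalf u$ coincide with the single element $w \eqdef \sum_{n \in \nats} 2^{-n-1}\ml r_n$.

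For the $y$-side I would split the $n = 0$ summand off $w$ and reindex the rest, obtaining $w = \tfrac12 r_0 + \sum_{n \in \nats} 2^{-n-2}\ml r_{n+1} = r_0 \plushalf u = y \plushalf u$, where the middle step recognizes $\sum_{n} 2^{-n-2}\ml r_{n+1}$ as $\tfrac12 u$. For the $x$-side I would instead substitute the solution equation $r_n = a_n \plushalf r_{n+1}$ into each term of $w$ and flatten the resulting nested combination, rewriting $w = \sum_n 2^{-n-1}\ml(a_n \plushalf r_{n+1})$ as $\sum_n 2^{-n-2}\ml a_n + \sum_n 2^{-n-2}\ml r_{n+1}$. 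The first summand is $\tfrac12 \sum_n 2^{-n-1}\ml a_n = \tfrac12 x$ and the second is again $\tfrac12 u$, so $w = x \plushalf u$. Comparing the two computations yields $x \plushalf u = w = y \plushalf u$, as required.

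Since the argument is a single calculation, the only delicate point is that every rearrangement must be justified by the $\omega$-convex laws of \cref{fig:prob-laws}(b) rather than by ordinary arithmetic on the coefficients. Concretely, the splitting-off of a single summand and the flattening of $\sum_n 2^{-n-1}\ml(a_n \plushalf r_{n+1})$ into a combination indexed over two disjoint copies of $\nats$ are instances of the associativity law, while collapsing Dirac-style weights uses the first law; I would discharge these uniformly by treating $\plushalf$ as the two-point case of the infinite sum operator. I expect this bookkeeping to be the main, though entirely routine, obstacle. Note finally that cancellativity of $A$ is never used: the element $u$ is produced unconditionally, which is precisely what makes the statement a genuine ``up to cancellation'' result.
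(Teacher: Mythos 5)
Your proposal is correct and is essentially the paper's own proof: you choose the same witness $u = \sum_{n\in\nats} 2^{-n-1}\ml r_{n+1}$ and establish the identity through the same intermediate element $\sum_{n\in\nats} 2^{-n-1}\ml r_n$, splitting off $r_0$ on one side and substituting $r_n = a_n \plushalf r_{n+1}$ and regrouping via the second $\omega$-convex law on the other. Your closing observations---that all rearrangements reduce to the laws of \cref{fig:prob-laws}(b) and that cancellativity is never invoked---are accurate and match the paper's (more tersely presented) calculation.
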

\begin{proof}
  There is a sequence $a= (a_n)_{n\in \nats}$ and an $a$-solution $r$ such that $x =
  \sum_{i\in\nats} 2^{-i-1} \ml a_i$ and $y = r_0$.  Then
  \begin{align*}
    y  \plushalf& \sum_{i\in\nats} 2^{-i-1} \ml r_{i+1}
    \\
     =&\, r_0 \plushalf
                                                      \sum_{i \in
                                                      \nats}
                                                              2^{-i-1} \ml r_{i+1}
    \\
                                                =&\, \sum_{i \in
                                                   \nats} 2^{-i-1}
                                                   \ml  r_i \\
                                                =&\, \sum_{i \in
                                                   \nats} 2^{-i-1}
                                                   \ml (a_i \plushalf r_{i+1}) \\
    =&\, \Bigl(\sum_{i\in\nats} 2^{-i-1} \ml a_i\Bigr) \plushalf \Bigl(\sum_{i \in
       \nats} 2^{-i-1} \ml 
          r_{i+1}\Bigr) \\
    =&\, x \plushalf \sum_{j\in\nats} 2^{-j-1} \ml 
          r_{j+1} \tag*{$\qed$}
   \end{align*}
\noqed\end{proof}
We shall use yet another kind of normal form.
\begin{definition}
  A \emph{light pre-normal form} is a program of the form  $\sum_{n
   \in\nats} p_n \ml M_n$, where the components $(M_n)_{n\in\nats}$
use only binary probabilistic choice.  It is a \emph{light normal
  form} when the sequence $p$ is $(2^{-n-1})_{n\in\nats}$.
\end{definition}

\begin{lemma} \label{prop:light}
  Every program is light normalizable, i.e.\ equivalent to a light
  normal form.
\end{lemma}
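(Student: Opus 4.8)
The plan is to prove something slightly stronger at each stage, namely to maintain the invariant that the program is equivalent to a full \emph{light normal form} (with the geometric weights $(2^{-n-1})_{n\in\nats}$), not merely a pre-normal form, and to induct on the structure of the program. There are two constructors to treat: the I/O operation $\inpu{k}(M_i)_{i\in\arity{k}}$ and the countable sum $\sum_{n\in\nats}p_n\ml M_n$. The key observation, which makes the induction go through cleanly, is that once we insist on the \emph{fixed} weights $(2^{-n-1})$, the inductive hypotheses for the arguments of an I/O operation automatically share a common weight sequence, so the tensor laws of \cref{fig:tens-laws} apply directly.

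For the I/O case, the inductive hypothesis gives $M_i \equiv \sum_{n\in\nats} 2^{-n-1}\ml M_{i,n}$ with every $M_{i,n}$ using only binary choice (call such programs binary-only). Since all these light normal forms use the \emph{same} weights $(2^{-n-1})$, the bottom tensor law of \cref{fig:tens-laws} lets me commute $\inpu{k}$ past $\sum_{n\in\nats}2^{-n-1}$:
\[
\inpu{k}(M_i)_{i\in\arity{k}} \equiv \inpu{k}\Bigl(\sum_{n\in\nats}2^{-n-1}\ml M_{i,n}\Bigr)_{i\in\arity{k}} \equiv \sum_{n\in\nats}2^{-n-1}\ml \inpu{k}(M_{i,n})_{i\in\arity{k}}.
\]
Each $\inpu{k}(M_{i,n})_{i\in\arity{k}}$ is binary-only, so the right-hand side is a light normal form. (When $\arity{k}=\emptyset$, i.e.\ for a constant, there is nothing to push; the program is already binary-only and is equivalent to a light normal form with all components equal to it, by the idempotence consequence $\sum_{n\in\nats}2^{-n-1}\ml M \equiv M$ of the $\omega$-convex laws.)

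For the countable-sum case, the inductive hypothesis gives $M_n \equiv \sum_{m\in\nats}2^{-m-1}\ml M_{n,m}$, and flattening (the second $\omega$-convex law of \cref{fig:prob-laws}) produces the light pre-normal form $\sum_{n\in\nats}p_n\ml M_n \equiv \sum_{(n,m)} p_n 2^{-m-1}\ml M_{n,m}$ over the countable index set $\nats\times\nats$. It therefore suffices to prove a reweighting sub-lemma: every light pre-normal form $\sum_{j} q_j\ml N_j$ is equivalent to a light normal form. For this I would decompose the distribution $q$ as a geometric mixture $q = \sum_{n\in\nats} 2^{-n-1}\nu_n$ of \emph{finitely supported} distributions $\nu_n$; setting $N'_n \eqdef \sum_j \nu_n(j)\ml N_j$ then yields binary-only programs (each a finite convex combination of the $N_j$), and flattening shows $\sum_j q_j\ml N_j \equiv \sum_{n\in\nats}2^{-n-1}\ml N'_n$, a light normal form.

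The crux is the existence of this finite-support decomposition, and it is the step I expect to be the main obstacle: the naive idea of coupling $q$ with the geometric distribution (e.g.\ via the product coupling) fails, since it yields components of infinite support, which are \emph{not} binary-only. Instead I would build the $\nu_n$ by peeling. Set $q^{(0)}\eqdef q$ and, at stage $n$, choose a finite set $F_n$ with $\sum_{j\in F_n}q^{(n)}_j \geq \tfrac{1}{2}$ (possible since the partial sums of $q^{(n)}$ tend to $1$), pick a distribution $\nu_n$ supported on $F_n$ with $\nu_n(j)\leq 2 q^{(n)}_j$ for all $j$ (feasible because the capacities $2q^{(n)}_j$ sum to at least $1$ on $F_n$), and put $q^{(n+1)}\eqdef 2q^{(n)}-\nu_n$, which is again a distribution. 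From $q^{(n)} = \tfrac{1}{2}\nu_n + \tfrac{1}{2} q^{(n+1)}$, telescoping gives $q = \sum_{n<N}2^{-n-1}\nu_n + 2^{-N}q^{(N)}$, and since $2^{-N}q^{(N)}\to 0$ we conclude $q=\sum_{n\in\nats}2^{-n-1}\nu_n$ with every $\nu_n$ finitely supported, as required.
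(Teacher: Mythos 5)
Your proof is correct, and its skeleton is the same as the paper's: a structural induction in which the I/O case is handled by the tensor law of \cref{fig:tens-laws} and the countable-sum case is reduced, via the flattening law of \cref{fig:prob-laws}(b), to showing that every light pre-normal form is equivalent to a light normal form. The differences are in how that reduction is organized and in how the reweighting step is proved. The paper's proof regroups the flattened double sum $\sum_{n,m} p_n 2^{-m-1}\ml M_{n,m}$ along anti-diagonals $n+m=r$, producing a pre-normal form $\sum_r \beta_r \ml N_r$ with $\beta_r = \sum_{i\leqslant r} p_i 2^{i-r-1}$ and each $N_r$ a \emph{finite} convex combination (hence binary-only), and then invokes ``finitary splitting and regathering'' of the weights $\beta_r$ into the geometric sequence---a step it leaves entirely to the reader. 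You instead flatten via an arbitrary bijection $\nats\times\nats\cong\nats$ and push all the work into a cleanly stated sub-lemma, which you then prove by a peeling construction: $q^{(n+1)} \eqdef 2q^{(n)} - \nu_n$ with $\nu_n$ finitely supported and dominated by $2q^{(n)}$, giving $q = \sum_n 2^{-n-1}\nu_n$ with finitely supported $\nu_n$. This is a precise, distribution-level instantiation of exactly what ``finitary splitting and regathering'' must accomplish (one can also realize it by splitting masses across dyadic blocks of the cumulative sums, where each block meets only finitely many masses because the partial sums tend to $1$), and your explicit observation that the naive product coupling fails---because it yields infinitely supported components, which are not binary-only---pinpoints the one genuine subtlety in the lemma that the paper's proof glosses over. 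In short: same route, with your version supplying the missing details, and your sub-lemma formulation being slightly more modular since the diagonal grouping becomes unnecessary.
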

\begin{proof}
  Any light pre-normal  form is light normalizable, by
  finitary splitting and regathering. Light normalizability is
  preserved by $\mathsf{Req}\,k?$ via the tensor law, obviously
  preserved by $+_p$, and preserved by countable
  probabilistic choice via the following argument.   Given $M_i \equiv 2^{-n-1} \ml M_{i,n}$
  for all $i \smin \nats$, and a sequence $(p_i)_{i\in\nats}$ of
  positive reals summing to~1, the program $\sum_{n\in\nats} p_n \ml
  M_n$  has light pre-normal form $\sum_{r\in\nats} \beta_r \ml N_r$,
  writing
  \begin{align*}
    \beta_r  \eqdef & \;\sum_{i \leqslant r} p_i^{i-r-1} \\
    N_r \eqdef & \; \sum_{j \leqslant r} \frac{p_i
                 2^{i-r-1}}{\beta_r} \ml M_{i,r-i}
  \end{align*}
  The result follows by induction over programs.
  \end{proof}

\begin{theorem}[Completeness up to impersonation]\label{thm:upto-imp}
Let~$M$ and $N$ be trace-equivalent programs.  Then $M$ is
impersonated by $N$, up to tensor-equivalence.
\end{theorem}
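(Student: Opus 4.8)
The plan is to carry out the whole argument inside the cancellative $\omega$-convex space $A \eqdef \prog/{\equiv}$ of programs modulo tensor-equivalence (this is an $\omega$-convex space since $\equiv$ is a congruence including the $\omega$-convex laws), and to exhibit a \emph{single} sequence $a = (a_n)_{n\in\nats}$ of programs that simultaneously presents $[M]$ as the head of the canonical $a$-solution and $[N]$ as the head of some $a$-solution. Throughout, write $\sigma \eqdef \tracesof{M} = \tracesof{N}$.

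First I would produce the sequence $a$ by light normalization. By \Cref{prop:light} we may fix programs $a_0, a_1, \ldots$, each using only binary probabilistic choice, with $M \equiv \sum_{n\in\nats} 2^{-n-1}\ml a_n$. Each $a_n$ is thus a program of the finitely probabilistic language, so by \Cref{prop:finstuff}(1) its trace $\tracesof{a_n}$ is finitely founded, hence in particular well-founded. The identity $M \equiv \sum_{n\in\nats} 2^{-n-1}\ml a_n$ is, verbatim, the assertion that $[M]$ is the head of the canonical $a$-solution, so the first clause of impersonation holds by construction. Applying soundness (\Cref{prop:equivsound}(2)) and the compositional description (\Cref{pro:sem-compos}) to this identity also records the trace identity $\sigma = \sum_{n\in\nats} 2^{-n-1}\ml \tracesof{a_n}$.

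Next I would build an $a$-solution $(r_n)_{n\in\nats}$ with head $N$ by peeling the \emph{same} well-founded chunks off $N$. Set $r_0 \eqdef N$, and maintain inductively the invariant $\tracesof{r_n} = \sum_{j\in\nats} 2^{-j-1}\ml\tracesof{a_{n+j}}$, which holds at $n=0$ since $\tracesof{r_0} = \sigma$. As the $j\geqslant 1$ tail of this series is nonnegative, we have $\tfrac{1}{2}\tracesof{a_n} \leqslant \tracesof{r_n}$, so \Cref{prop:subsplit} applies with $L \eqdef a_n$ (well-founded) and $p \eqdef \tfrac{1}{2}$, yielding a program $r_{n+1}$ with $r_n \equiv a_n +_{\frac{1}{2}} r_{n+1}$. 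Soundness and compositionality then force $\tracesof{r_{n+1}} = 2\tracesof{r_n} - \tracesof{a_n} = \sum_{j\in\nats} 2^{-j-1}\ml\tracesof{a_{n+1+j}}$, which re-establishes the invariant and allows the induction to continue. The resulting sequence satisfies $r_n \equiv a_n +_{\frac{1}{2}} r_{n+1}$ for all $n$ with $r_0 = N$, so it is an $a$-solution whose head is $[N]$; this is the second clause of impersonation, and the proof is complete.

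The step that needs the most care is the bookkeeping that keeps \Cref{prop:subsplit} applicable at every stage, since its hypotheses demand both well-foundedness of $\tracesof{a_n}$ and the estimate $\tfrac{1}{2}\tracesof{a_n}\leqslant\tracesof{r_n}$; I maintain the latter solely through the trace invariant $\tracesof{r_n} = \sum_{j\in\nats}2^{-j-1}\ml\tracesof{a_{n+j}}$. It is worth emphasizing that no convergence of the programs $r_n$ is required---impersonation is defined purely by the \emph{existence} of the sequences $a$ and $(r_n)$---so the argument never needs to equate $N$ with the countable sum $\sum_{n} 2^{-n-1}a_n$. This is exactly why the proof succeeds where full completeness does not: the well-foundedness failure of such a countable sum (\Cref{ex:notwnf}) is what blocks a direct appeal to \Cref{prop:wfstratdef}(2), and it is bypassed here rather than resolved.
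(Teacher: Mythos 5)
Your proposal is correct and follows essentially the same route as the paper's own proof: light normalization (\Cref{prop:light}) supplies the sequence (the paper's $P_n$, your $a_n$) making $M$ the head of the canonical solution, and repeated application of \Cref{prop:subsplit} with $p=\tfrac{1}{2}$ peels these well-founded summands off $N$ to produce the required $a$-solution with head $N$. Your write-up merely makes explicit what the paper leaves implicit---the trace invariant $\tracesof{r_n}=\sum_{j}2^{-j-1}\ml\tracesof{a_{n+j}}$ that keeps the hypotheses of \Cref{prop:subsplit} satisfied at every stage, and the well-foundedness of each $\tracesof{a_n}$ via \Cref{prop:finstuff}.
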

\begin{proof}
 \cref{prop:light} tells us that $M$ is equivalent to a light normal form
  $\sum_{n\in\nats} 2^{-n-1}\ml P_n$, and therefore (up to $\equiv$) it is the head
  of the canonical $P$-solution.  For each $n \smin \nats$, we define a program~$Q_n$ trace-equivalent
to $\sum_{i\in\nats} 2^{-i-1}\ml P_{n+i}$ as follows.
  \begin{align*}
    Q_0   \eqdef &\; N \\
    Q_{n} \equiv &\; P_{n} +_{\frac{1}{2}} Q_{n+1} \betwixt \text{
                     by \cref{prop:subsplit}.}
  \end{align*}
  So (up to $\equiv$) we see that $N$ is the head of the $P$-solution~$Q$.
\end{proof}
Using \cref{lem:impers-cancel}, we obtain
\begin{corollary}[Completeness up to cancellation]\label{cor:up-to-cancel}
  Let $M$ and~$N$ be trace-equivalent programs.  Then there is a
  program~$U$ such that $M +_{\frac{1}{2}} U \equiv N +_{\frac{1}{2}} U$.
\end{corollary}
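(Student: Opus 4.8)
The plan is to transport the impersonation statement of \cref{thm:upto-imp} into the cancellation statement of \cref{lem:impers-cancel}, working inside the $\omega$-convex space of programs modulo tensor-equivalence. First I would observe that the set $A$ of tensor-equivalence classes of programs forms an $\omega$-convex space: the operations $+_p$ and $(M_n)_{n\in\nats} \mapsto \sum_{n\in\nats} p_n \ml M_n$ descend to the quotient because $\equiv$ is by definition a congruence, and they satisfy the $\omega$-convex laws of \cref{fig:prob-laws}(b) precisely because tensor-equivalence contains those laws. Write $[{-}]$ for the quotient map of programs into $A$.

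Next, I would read \cref{thm:upto-imp} as the assertion that the class $[M]$ is impersonated by the class $[N]$ \emph{in the space $A$}, in the exact sense of the definition of impersonation. Indeed, the proof of \cref{thm:upto-imp} exhibits a sequence $P = (P_n)_{n\in\nats}$ of programs for which $[M]$ is the head of the canonical $P$-solution (because $M$ is equivalent to the light normal form $\sum_{n\in\nats} 2^{-n-1} \ml P_n$) and, since $Q_0 = N$, the class $[N] = [Q_0]$ is the head of the $P$-solution $([Q_n])_{n\in\nats}$. This is exactly the data witnessing impersonation of $[M]$ by $[N]$ in $A$.

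Then I would simply apply \cref{lem:impers-cancel} to the elements $x = [M]$ and $y = [N]$ of the $\omega$-convex space $A$. It yields an element $u \smin A$ with $[M] \plushalf u = [N] \plushalf u$. Choosing any program $U$ with $[U] = u$ and reading this equation back in terms of programs gives $M \plushalf U \equiv N \plushalf U$, as required.

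There is no substantial obstacle here, since the entire content has already been packaged into \cref{thm:upto-imp} and \cref{lem:impers-cancel}. The only point that demands a little care is the bookkeeping at the boundary between programs and their equivalence classes: one must confirm that the quotient really is an $\omega$-convex space (so that the abstractly-stated \cref{lem:impers-cancel} applies) and that ``impersonation up to $\equiv$'' coincides with impersonation of the corresponding classes in that space, so that the witnessing sequence and solution produced by \cref{thm:upto-imp} can be fed directly into \cref{lem:impers-cancel}.
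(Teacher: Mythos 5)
Your proof is correct and is exactly the route the paper takes: its one-line proof ``Using \cref{lem:impers-cancel}, we obtain'' implicitly relies on the very bookkeeping you spell out, namely that programs modulo $\equiv$ form an $\omega$-convex space (since tensor-equivalence is a congruence containing the $\omega$-convex laws) in which \cref{thm:upto-imp} gives impersonation of $[M]$ by $[N]$, so \cref{lem:impers-cancel} applies. Your write-up just makes that implicit quotient argument explicit, which is a reasonable thing to do.
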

\section{Completeness for Finitary Signatures}\label{sec:comp-fin}

This section is devoted to proving that, for a \emph{finitary} signature $S = (\arity{k})_{k
  \in K}$,  any two trace-equivalent programs are tensor-equivalent.
We begin with the
following notion.
\begin{definition}
  Let $\sigma$ and $\tau$ be play-measures with \mbox{$w(\tau) < \infty$}.  We say that $\sigma$
  is \emph{uniformly below} $\tau$, written $\sigma \prec \tau$, when there is $d > 0$ such
  that
\begin{align*}
    \sigma(s) + d \leqslant\; & \tau(s) \quad\quad \text{ for all $s
                                \smin \supp{\sigma}$.}
\end{align*}
\end{definition}
Thus $\sigma \prec \tau$ implies $\sigma \leqslant \tau$ and
$w(\sigma) < w(\tau)$.  The key property of $\prec$ is the following.
\begin{lemma} \label{prop:prec}
  Let $(\tau_i)_{i\in\nats}$ be a sequence of play-measures with
  total weight $\lambda \in [0,1]$.  For any play-measure $\sigma$ such that $\sigma \prec \sum_{i\in\nats} \tau_i$, there is $n \smin
  \nats$ such that $\sigma \prec \sum_{i < n} \tau_i$.  
\end{lemma}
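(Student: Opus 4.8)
The plan is to reduce the desired \emph{uniform} bound over the (possibly infinite) support of $\sigma$ to a single scalar bound on the weight of a tail of the series. Write $\rho \eqdef \sum_{i\in\nats}\tau_i$, and for each $n \smin \nats$ put $\rho_n \eqdef \sum_{i<n}\tau_i$ and let $\mu_n \eqdef \sum_{i\geqslant n}\tau_i$ be the corresponding tail, so that $\rho = \rho_n + \mu_n$ and hence $\rho(s) = \rho_n(s) + \mu_n(s)$ for every play $s$. From the hypothesis $\sigma \prec \rho$ I extract a constant $d > 0$ with $\sigma(s) + d \leqslant \rho(s)$ for all $s \smin \supp{\sigma}$, and the goal is to produce some $n$ together with a $d' > 0$ satisfying $\sigma(s) + d' \leqslant \rho_n(s)$ on $\supp{\sigma}$.

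The heart of the argument is the observation that a play-measure's value on any play is bounded by its weight: for every play-measure $\mu$ and every play $s$ one has $\mu(s) \leqslant w(\mu)$. I would prove this by induction on the length of $s$ from the two defining equations of a play-measure. At $\varepsilon$ it is the equality $\mu(\varepsilon) = w(\mu)$; the equation $\mu(t) = \sum_{k\in\Ops}\mu(t.k)$ shows that an active value dominates each of the passive values $\mu(t.k)$; and the equation $\mu(s.i) = \mu(s)$ passes the bound from a passive play to the active play extending it by an input. This is the one genuinely load-bearing step, since it is what converts pointwise information about the series into uniform information.

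Given this, I would apply the bound to the tail: $\mu_n(s) \leqslant w(\mu_n) = \sum_{i\geqslant n} w(\tau_i)$ for \emph{every} play $s$ simultaneously. Since the total weight $\sum_{i\in\nats} w(\tau_i) = \lambda$ is finite, its tails vanish, so I can fix $n$ with $w(\mu_n) \leqslant d/2$. Then for every $s \smin \supp{\sigma}$,
\begin{align*}
  \sigma(s) + \tfrac{d}{2} \leqslant \rho(s) - \tfrac{d}{2} \leqslant \rho(s) - \mu_n(s) = \rho_n(s),
\end{align*}
where the first inequality uses $\sigma(s) + d \leqslant \rho(s)$ and the second uses $\mu_n(s) \leqslant d/2$. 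This is precisely $\sigma \prec \rho_n = \sum_{i<n}\tau_i$ with witness $d' = d/2 > 0$.

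The main obstacle to anticipate is exactly this uniformity. A naive argument would choose, for each $s$, an index $n_s$ beyond which $\rho_{n_s}(s)$ is close to $\rho(s)$; but as $\supp{\sigma}$ may be infinite, there is no reason for the $n_s$ to be bounded, so one cannot simply take a maximum. The weight bound $\mu(s) \leqslant w(\mu)$ is the device that sidesteps this: it controls every play at once by a single scalar, the tail weight, whose convergence to $0$ is guaranteed by summability of $(w(\tau_i))_{i\in\nats}$. I note in passing that it would in any case suffice to verify $\prec$ on passive-ending plays, since the defining equations propagate the inequality to active-ending ones; but the weight bound already treats all plays uniformly, so no separate case analysis is needed.
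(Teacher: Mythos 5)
Your proof is correct and follows essentially the same route as the paper's: both choose $n$ so that the tail weight $\sum_{i \geqslant n} w(\tau_i)$ is at most $d/2$ and then use the pointwise bound $\mu_n(s) \leqslant w(\mu_n)$ to conclude that $d/2$ witnesses $\sigma \prec \sum_{i<n}\tau_i$. The only difference is presentational: you make explicit (via induction on play length) the fact that a play-measure's value is bounded by its weight, which the paper invokes implicitly in the step $\sum_{i\geqslant n}\tau_i(s) \leqslant w\bigl(\sum_{i\geqslant n}\tau_i\bigr)$.
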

\begin{proof}
Take a witness $d \smin (0,1]$ for $\sigma \prec \sum_{i\in\nats}
\tau_i$.  Take $n \smin \nats$ such that $\sum_{i \geqslant n} w(\tau_i)
  \leqslant \frac{1}{2}d$.  We show that $\frac{1}{2}d$ witnesses $\sigma \prec \sum_{i
    < n} \tau_i$.   For any $s\in\supp{\sigma}$, we have
   \begin{align*}
     \sigma(s) + d \leqslant\,& \sum_{i \in\nats}\tau_i(s) \\
     =\,& \sum_{i  < n}\tau_i(s)  + \sum_{i   \geqslant n}\tau_i(s)
     \\
     \leqslant\,& \sum_{i < n}\tau_i(s)  + w\Bigl(\sum_{i   \geqslant
                   n}\tau_i(s)\Bigr) \\*
      \leqslant\,& \sum_{i   < n}\tau_i(s)  + \frac{1}{2}d,
      \intertext{therefore}
      \sigma(s) + \frac{1}{2}d \leqslant\,& \sum_{i   < n}\tau_i(s)\tag*{$\qed$}
   \end{align*}
\noqed \end{proof}
Now we give the central notion of the section:
\begin{definition}
  A \emph{steady pre-normal form} is a program of the form $\sum_{n \in
    \nats} p_n \ml M_n$,  where
  \begin{itemize}
  \item for all $n \smin \nats$, the strategy $\tracesof{M_n}$ is
    well-founded
  \item for all $m \smin \nats$, we have $\sum_{n < m} p_n \ml \tracesof{M_n}
    \prec \sum_{n \in\nats} p_n \ml \tracesof{M_i}$. 
  \end{itemize}
  Such steady pre-normal form is a \emph{steady normal form} when the sequence $p$ is
  $(2^{-n-1})_{n\in\nats}$.
\end{definition}

\begin{lemma}\label{lem:steady-pre}
  Any steady pre-normal form is equivalent to a steady normal form.
\end{lemma}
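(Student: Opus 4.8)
The plan is to imitate the proof of \cref{prop:light}: rewrite the given steady pre-normal form $\sum_{n\in\nats} p_n \ml M_n$ into $\sum_{m\in\nats} 2^{-m-1}\ml N_m$ by an \emph{order-preserving} splitting and regathering of the probabilistic mass, and then check that the two defining clauses of a steady normal form survive the rearrangement. Concretely, I would view the coefficients $p_0, p_1, \dots$ as a stream of mass summing to $1$, and fill a sequence of target boxes of mass $2^{-1}, 2^{-2}, \dots$ greedily from left to right, splitting a single summand $p_n \ml M_n$ into $\alpha \ml M_n + (p_n - \alpha)\ml M_n$ whenever a box boundary falls strictly inside it. Each box $m$ then absorbs finitely many consecutive (possibly fractional) pieces, and I set $N_m$ to be the induced finite convex combination of the corresponding $M_n$. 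The rewriting $\sum_{n} p_n \ml M_n \equiv \sum_{m} 2^{-m-1}\ml N_m$ is justified purely by the $\omega$-convex laws (associativity and commutativity of the countable sum, together with the idempotent splitting $\alpha\ml M + \beta\ml M = (\alpha+\beta)\ml M$), exactly as in \cref{prop:light}; in particular, by soundness (\cref{prop:equivsound}) the total trace strategy $\sigma \eqdef \sum_{n} p_n \ml \tracesof{M_n}$ is unchanged.

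First I would check that each box is filled after consuming only finitely many pieces: once boxes $0,\dots,m-1$ have absorbed total mass $1-2^{-m}$, the remaining stream has mass $2^{-m}$, which strictly exceeds the target $2^{-m-1}$ of box $m$, so the box is completed before the stream is exhausted, and the partial sums of a convergent series reach any value below the total in finitely many steps. Hence each $N_m$ is a finite convex combination $\sum_i \gamma_i \ml M_{n_i}$ of the well-founded $M_{n_i}$. By \cref{pro:sem-compos}, $\tracesof{N_m} = \sum_i \gamma_i \ml \tracesof{M_{n_i}}$, whose support is the finite union $\bigcup_i \supp{\tracesof{M_{n_i}}}$. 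Since trace strategies are prefix-closed, a finite union of well-founded strategies is again well-founded (an infinite play all of whose prefixes lay in the union would, by infinite pigeonhole, have infinitely many prefixes in one component, hence all of them by prefix-closure, contradicting that component's well-foundedness). This secures the first clause: every $\tracesof{N_m}$ is well-founded.

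The crux is the second clause, namely $\sum_{m<M} 2^{-m-1}\ml \tracesof{N_m} \prec \sigma$ for every $M$. Because the regathering is order-preserving, the first $M$ boxes consume exactly a prefix of the stream: there are $J\smin\nats$ and $0 \leqslant \alpha \leqslant p_J$ with $\sum_{m<M} 2^{-m-1}\ml \tracesof{N_m} = \sum_{n<J} p_n\ml\tracesof{M_n} + \alpha\ml\tracesof{M_J} \leqslant \sum_{n<J+1} p_n \ml \tracesof{M_n}$, the equality holding because splitting a summand's mass preserves the product $\text{mass}\times\tracesof{M_J}$. The defining condition of the input steady pre-normal form gives $\sum_{n<J+1} p_n \ml \tracesof{M_n} \prec \sigma$, and $\prec$ is monotone on the left under $\leqslant$: if $\rho \leqslant \rho' \prec \sigma$ then $\supp{\rho}\subseteq\supp{\rho'}$ and the same witness $d$ works. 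Therefore $\sum_{m<M} 2^{-m-1}\ml \tracesof{N_m} \prec \sigma$, as required.

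The main obstacle is exactly the tension between the two clauses: a generic repackaging into weight-$2^{-m-1}$ chunks (as in \cref{prop:light}) would keep well-foundedness but could destroy the $\prec$ condition. Insisting that the regathering preserve the order of the stream is what forces each partial box-sum to be dominated by a genuine prefix $\sum_{n<J+1} p_n \ml \tracesof{M_n}$ of the original decomposition, so that the hypothesis on $\prec$ transfers intact. The $\prec$ machinery supported by \cref{prop:prec} is precisely what this order-preserving construction is designed to respect.
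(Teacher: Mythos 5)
Your proposal is correct and takes essentially the same approach as the paper, whose entire proof is the one-line remark \enquote{By splitting and regathering.} Your elaboration---in particular the observation that the regathering must be \emph{order-preserving} so that each partial box-sum is dominated by a genuine prefix $\sum_{n<J+1} p_n \ml \tracesof{M_n}$, letting the $\prec$ hypothesis transfer, and the pigeonhole argument that a finite convex combination of well-founded strategies is well-founded---supplies exactly the details the paper leaves implicit.
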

\begin{proof}
  By splitting and regathering.
\end{proof}

\begin{lemma}\label{lem:main-lem}
  Let $M$ and $N$ be steady-normalizable programs.  If they are
  trace-equivalent, they are
  logically equivalent.
\end{lemma}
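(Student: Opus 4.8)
The plan is to assume, without loss of generality, that $M=\tesum_{n}2^{-n-1}\ml M_n$ and $N=\tesum_{n}2^{-n-1}\ml N_n$ are already in steady normal form. This reduction is harmless: by soundness (\cref{prop:equivsound}) the normal forms have the same traces as $M$ and $N$, so they are still trace-equivalent, and proving the normal forms equivalent proves $M\equiv N$. Write $\sigma=\tracesof{M}=\tracesof{N}$ and, for each $m$, abbreviate the finite partial traces $\mu_m=\tesum_{n<m}2^{-n-1}\ml\tracesof{M_n}$ and $\nu_m=\tesum_{n<m}2^{-n-1}\ml\tracesof{N_n}$. Since each $\tracesof{M_n}$ and $\tracesof{N_n}$ is well-founded (by the definition of steady normal form) and finite unions of well-founded supports stay well-founded, every $\mu_m$ and $\nu_m$ is a well-founded play-measure of weight strictly below $1$.

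The heart of the argument is to interleave these two approximating sequences. First I would record two easy facts used repeatedly: any play-measure below a well-founded one is again well-founded, since its support only shrinks; and if $\rho\prec\tesum_{i<l}\tau_i$ then $\rho\prec\tesum_{i<l'}\tau_i$ for every $l'\ge l$, since enlarging the finite sum preserves the uniform gap. Now steadiness of $M$ gives $\mu_m\prec\sigma=\tesum_n 2^{-n-1}\ml\tracesof{N_n}$, so \cref{prop:prec} produces a finite index witnessing $\mu_m\prec\nu_{l}$; symmetrically steadiness of $N$ gives $\nu_m\prec\mu_l$. Alternating these two steps, and at each stage enlarging the chosen index so that it strictly exceeds the previous one, I would build strictly increasing, cofinal index sequences $m_0<m_1<\cdots$ (with $m_0\ge 1$) and $l_0<l_1<\cdots$ yielding a chain
\[
  \mu_{m_0}\prec\nu_{l_0}\prec\mu_{m_1}\prec\nu_{l_1}\prec\cdots .
\]
This is the step where steadiness is essential and where I expect the main obstacle to lie: the \emph{strictness} of $\prec$ is exactly what lets \cref{prop:prec} fire, and it is precisely this uniform gap ruling out mass escaping to infinity that upgrades the earlier impersonation result (\cref{thm:upto-imp}) to genuine equivalence.

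Relabel the chain as $\xi_0\le\xi_1\le\xi_2\le\cdots$, so that $\xi_{2j}=\mu_{m_j}$ and $\xi_{2j+1}=\nu_{l_j}$, and telescope by setting $E_0=\xi_0$ and $E_{i+1}=\xi_{i+1}-\xi_i$. Each $E_i$ is a difference of well-founded play-measures with $\xi_i\le\xi_{i+1}$, hence itself well-founded, and each has positive weight $w_i$ (strictness of $\prec$ gives $w(\xi_i)<w(\xi_{i+1})$); cofinality gives $\tesum_i w_i=w(\sigma)=1$ and $\tesum_i E_i=\sigma$. By the definability part of \cref{prop:wfstratdef}, each $E_i$ equals $w_i\ml\tracesof{\hat M_i}$ for some program $\hat M_i$, and I set $E=\tesum_i w_i\ml\hat M_i$, a program with $\tracesof{E}=\sigma$. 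Crucially, the rigidity part of \cref{prop:wfstratdef} makes each $\hat M_i$ unique up to $\equiv$, so every later use of ``the program for $E_i$'' refers to the same thing up to $\equiv$.

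It remains to show $M\equiv E\equiv N$. Grouping the countable sum defining $M$ into the consecutive blocks cut out by the indices $m_{-1}=0<m_0<m_1<\cdots$ gives, by $\omega$-convex associativity, $M\equiv\tesum_j B^M_j$, where the $j$th block has well-founded trace $\mu_{m_j}-\mu_{m_{j-1}}$, namely $E_0$ for $j=0$ and $E_{2j-1}+E_{2j}$ for $j\ge 1$. Because each block is well-founded, rigidity lets me replace it by the corresponding scaled sum of the programs $\hat M_i$ (the block's normalised trace being well-founded), and re-associating the resulting countable sum via the $\omega$-convex laws yields $M\equiv\tesum_i w_i\ml\hat M_i=E$. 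The symmetric computation on $N$, whose blocks (grouped by the $l_j$) have traces $E_0+E_1$ for $j=0$ and $E_{2j}+E_{2j+1}$ for $j\ge 1$, gives $N\equiv E$ by the same reasoning. Hence $M\equiv N$, as required.
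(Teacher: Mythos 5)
Your proposal is correct and follows essentially the same route as the paper's proof: the alternating chain $\mu_{m_0}\prec\nu_{l_0}\prec\mu_{m_1}\prec\cdots$ built from \cref{prop:prec}, the telescoped well-founded differences realized as programs via \cref{prop:wfstratdef}, and the final regrouping of both countable sums through the $\omega$-convex laws. The only differences are notational (your single sequence $E_i$ versus the paper's $a_rU_r$, $b_rV_r$, and your $m_0\geq 1$ versus the paper's $m_0=0$), not mathematical.
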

\begin{proof}
We have steady normal forms
  \begin{align*}
    M \equiv\, & \sum_{i\in\nats} 2^{-i-1} \ml M_i \\
    N \equiv\, & \sum_{i\in\nats} 2^{-i-1} \ml N_i
  \end{align*}
  and $\tracesof{M} = \tracesof{N}$.  \cref{prop:prec} tells us that
  \begin{itemize}[wide]
  \item for all $m \smin \nats$, there is $n \smin \nats$ such that  
  \begin{align*}
  \sum_{i < m}
   2^{-i-1} \tracesof{M_i} \prec \sum_{i < n}  2^{-i-1} \tracesof{N_i}
  \end{align*}
  \item for
  all $n \smin \nats$, there is $m \smin \nats$ such that 
  \begin{align*}
 \sum_{i < n}  2^{-i-1} \tracesof{N_i} \prec \sum_{i < m}  2^{-i-1} \tracesof{M_i} 
  \end{align*}
\end{itemize}
Thus we obtain strictly increasing sequences $(m_r)_{r\in\nats}$ and~$(n_{r})_{r
 \in\nats}$ of natural numbers satisfying $m_0 = 0 $ and
\begin{align*}
  \sum_{i < m_r}2^{-i-1}& \tracesof{M_i}\\ \prec &\; \sum_{i < n_r} 2^{-i-1}\tracesof{N_i}
  \\
  \prec &\; \sum_{i < m_{r+1}} 2^{-i-1}\tracesof{M_i}
  \end{align*}
For each $r \smin \nats$, we obtain a play-measure 
\begin{align*}
\sum_{i<n_r}2^{-i-1} \tracesof{N_i}  - \sum_{i <
    m_r} 2^{-i-1} \tracesof{M_i},
\end{align*}
which is well-founded since
  $\sum_{i<n_r} 2^{-i-1}\tracesof{N_i}$ is (being a finite sum of well-founded
  play-measures).  So, by \cref{prop:wfstratdef}(\ref{item:wfstratdef}), it is expressible as $a_r\tracesof{U_r}$
  for some $a_r >0$ and 
  program $U_r$.  Likewise the play-measure 
  \begin{align*}
  \sum_{i <
      m_{r+1}}2^{-i-1}\tracesof{M_i} - \sum_{i<n_r}2^{-i-1}\tracesof{N_i} 
  \end{align*}
  is expressible as
    $b_r \tracesof{V_r}$  for some $b_r > 0$ and
    program~$V_r$.   Now
  \cref{prop:wfstratdef}(\ref{item:wfcomplete}) gives the following.

\begin{flalign*}
\sum_{i \in [m_r \twodots m_{r+1})} \ofrac{2^{-i-1}}{a_r +
         b_r} \ml M_i
                                                     \equiv&\;
     \ofrac{a_r}{a_r + b_r} \ml U_r + \ofrac{b_r}{a_r + b_r} \ml  V_r &&(r \smin \nats)\\
\sum_{i \in [0 \twodots n_0)} \ofrac{2^{-i-1}}{a_0}\ml N_i \equiv&\;  U_0\\
\sum_{i \in [n_r \twodots
     n_{r+1})}\ofrac{2^{-i-1}}{a_r + b_{r+1}} \ml N_i \equiv&\;
   \ofrac{a_r}{a_r + b_{r+1}} \ml V_r + \ofrac{b_{r+1}}{a_r + b_{r+1}}
   \ml U_{r+1}&&(r \smin \nats)
\end{flalign*}
 So we have
  \begin{align*}
    M \equiv&\, \sum_{i\in\nats}2^{-i-1} M_i \\
      \equiv&\,   \sum_{r\in\nats} \sum_{i \in [m_r \twodots m_{r+1})}
                 2^{-i-1} M_i \\
      \equiv&\, \sum_{r\in\nats} (a_r U_r+ b_r V_r) \\
      \equiv&\, a_0 U_0 + \sum_{r\in\nats} (b_rV_r + a_{r+1}U_{r+1}) \\
      \equiv&\,  \sum_{i \in [0 \twodots n_0)} 2^{-i-1} N_i \\&\qquad + \sum_{r \in
               \nats} \sum_{i \in [n_r \twodots n_{r+1})} 2^{-i-1} N_i \\
      \equiv&\, \sum_{i\in\nats}2^{-i-1} N_i \\*
      \equiv&\, N
  \end{align*}
as required.
\end{proof}

Let us see how to obtain steady-normalizable programs.
\begin{lemma} %
  \begin{enumerate}[wide]
  \item If $M$ and $N$ are steady-normalizable, then so is $M +_p N$.
  \item If $M_n$ is steady-normalizable for all $n \smin \nats$, then
    so is $\sum_{n\in\nats}p_n \ml M_n$.
     \item Let $k \smin K$ have arity $n \smin \nats$.  If $M_j$ is steady-normalizable
    for all $j < n$, then so is $\inpu{k}(M_j)_{j < n}$.
  \end{enumerate}
\end{lemma}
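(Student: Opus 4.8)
The plan is to use the characterisation that a program is steady-normalizable precisely when it is equivalent to a steady \emph{pre}-normal form: one direction is immediate (a steady normal form is a steady pre-normal form) and the other is \cref{lem:steady-pre}. So in each case I start from steady pre-normal forms of the constituents and need only exhibit the compound program as equivalent to some steady pre-normal form. Two monotonicity facts about $\prec$ do most of the work. First, if $\sigma \leqslant \sigma'$ and $\sigma' \prec \tau$ then $\sigma \prec \tau$, since $\supp{\sigma} \subseteq \supp{\sigma'}$ and any witness $d$ for $\sigma' \prec \tau$ also witnesses $\sigma \prec \tau$. Second, a convex-combination fact: if $(p_n)_{n\in\nats}$ is a distribution with all $p_n > 0$, and $\sigma_n \leqslant \tau_n$ for all $n$, with $\sigma_n \prec \tau_n$ for $n$ in a finite set $G$ and $\sigma_n = 0$ for $n \notin G$, then $\sum_n p_n\ml\sigma_n \prec \sum_n p_n\ml\tau_n$; taking witnesses $d_n$ for $n \smin G$ and $d = \min_{n\in G} p_n d_n > 0$, each $s \smin \supp{\sum_n p_n\ml\sigma_n} = \bigcup_{n\in G}\supp{\sigma_n}$ lies in some $\supp{\sigma_n}$ and hence inherits margin at least $d$.

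For item~(2), write each constituent's steady pre-normal form as $M_n \equiv \sum_{j\in\nats} q_{n,j}\ml M_{n,j}$ with every $\tracesof{M_{n,j}}$ well-founded. By congruence and the flattening $\omega$-convex law of \cref{fig:prob-laws}(b), after fixing any bijection $\nats \cong \nats\times\nats$,
\[
  \sum_{n\in\nats} p_n \ml M_n \;\equiv\; \sum_{n\in\nats} p_n \ml \sum_{j\in\nats} q_{n,j}\ml M_{n,j} \;\equiv\; \sum_{k\in\nats} r_k \ml L_k,
\]
a single countable sum over the pairs $(n,j)$, with $r_{(n,j)} = p_n q_{n,j}$ and $L_{(n,j)} = M_{n,j}$, so every $\tracesof{L_k}$ is well-founded. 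It remains to verify the $\prec$-condition against the total $\sum_n p_n\ml\tracesof{M_n}$. A finite partial sum is indexed by a finite $F \subseteq \nats\times\nats$; setting $\sigma_n = \sum_{j\colon(n,j)\in F} q_{n,j}\ml\tracesof{M_{n,j}}$ makes the partial sum equal to $\sum_n p_n\ml\sigma_n$, with only finitely many $\sigma_n$ nonzero. For each nonzero $\sigma_n$ we have $\sigma_n \leqslant \sum_{j<c}q_{n,j}\ml\tracesof{M_{n,j}} \prec \tracesof{M_n}$ for $c$ large, so $\sigma_n \prec \tracesof{M_n}$ by the first fact; the second fact then gives the partial sum $\prec \sum_n p_n\ml\tracesof{M_n}$. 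Thus $\sum_k r_k\ml L_k$ is a steady pre-normal form, and \cref{lem:steady-pre} makes $\sum_n p_n\ml M_n$ steady-normalizable. Item~(1) is the same argument applied to the two-summand convex combination $M +_p N$.

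For item~(3), I first invoke \cref{lem:steady-pre} to put each $M_j$ in steady \emph{normal} form $M_j \equiv \sum_{i\in\nats} 2^{-i-1}\ml M_{j,i}$, so that all $n$ arguments carry the common distribution $(2^{-i-1})_i$. The tensor law of \cref{fig:tens-laws} then pulls the choice outside the I/O operation:
\[
  \inpu{k}(M_j)_{j<n} \;\equiv\; \inpu{k}\Bigl(\sum_{i\in\nats} 2^{-i-1}\ml M_{j,i}\Bigr)_{j<n} \;\equiv\; \sum_{i\in\nats} 2^{-i-1}\ml R_i,
\]
writing $R_i \eqdef \inpu{k}(M_{j,i})_{j<n}$. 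Each $\tracesof{R_i} = \sfinpu{k}(\tracesof{M_{j,i}})_{j<n}$ is well-founded, because an infinite play through its support, stripped of its leading $k.j$, would be an infinite play through $\supp{\tracesof{M_{j,i}}}$, contradicting well-foundedness of $M_{j,i}$. For the $\prec$-condition, \cref{prop:tensorholds} yields $\sum_i 2^{-i-1}\ml\tracesof{R_i} = \sfinpu{k}(\tracesof{M_j})_{j<n}$; comparing a partial sum $\sum_{i<m} 2^{-i-1}\ml\tracesof{R_i}$ to the total play by play, the plays $\varepsilon$ and $k$ get margin $2^{-m}$, a play $k.j.s$ inherits the positive margin of $\sum_{i<m}2^{-i-1}\ml\tracesof{M_{j,i}} \prec \tracesof{M_j}$, and all remaining plays are $0$ on both sides. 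The minimum of these finitely many positive margins witnesses $\sum_{i<m}2^{-i-1}\ml\tracesof{R_i} \prec \sum_i 2^{-i-1}\ml\tracesof{R_i}$, so $\sum_i 2^{-i-1}\ml R_i$ is a steady normal form.

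The hard part is item~(3). Because $\sfinpu{k}$ is affine rather than linear---it sends $\varepsilon$ and $k$ to the constant $1$ irrespective of its arguments---the tensor law applies only after all arguments have been forced onto the single distribution $(2^{-i-1})_i$, and re-establishing a \emph{strict} margin afterwards requires separately handling the constant plays $\varepsilon, k$. For items~(1) and~(2) the only delicate point is extracting one positive margin valid uniformly over the support of an infinite convex combination, which the finiteness of the index set $G$ of nonzero summands secures.
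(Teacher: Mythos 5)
Your proof is correct, and item~(3) is essentially the paper's own argument: both pass to steady normal forms on the common distribution $(2^{-i-1})_{i\in\nats}$ so the tensor law of \cref{fig:tens-laws} applies, define $R_i = \inpu{k}(M_{j,i})_{j<n}$, and witness the $\prec$-condition by $\min(\setbr{2^{-m}}\cup\setbr{d_j \mid j<n})$ with the same three-way case split on $\varepsilon$, $k$, and $k.j.s$; finiteness of the arity is used in exactly the same place, consistently with \cref{pro:inf-example}. For items~(1) and~(2) your route is organized differently. The paper handles~(2) by an explicit diagonal regathering: from steady normal forms $M_n \equiv \sum_i 2^{-i-1}\ml M_{n,i}$ it builds blocks $\beta_r = \sum_{j\leqslant r} p_j 2^{-(r-j+1)}$ and $N_r = \sum_{j\leqslant r}\frac{p_j 2^{-(r-j+1)}}{\beta_r}\ml M_{j,r-j}$, so each summand of the resulting pre-normal form is a finite convex combination gathered by level; you instead start from pre-normal forms and flatten along an arbitrary bijection $\nats\cong\nats\times\nats$, checking the $\prec$-condition against arbitrary finite sets of pairs. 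The verification engine is the same in substance---your two abstract facts about $\prec$ (downward monotonicity in the smaller argument, and stability under convex combinations with finitely many nonzero summands, witnessed by $\min_{n\in G} p_n d_n$) are precisely what the paper computes inline with $d = \min(\setbr{1}\cup\setbr{p_j d_j \mid j<n})$---but isolating them as lemmas buys you uniformity: your~(1) becomes a special case of~(2), whereas the paper treats~(1) by the componentwise zip $\sum_{i\in\nats} 2^{-i-1}\ml(M_i +_p N_i)$ and leaves its $\prec$-verification implicit. Two small remarks. First, for well-foundedness of $R_i$ in~(3) the paper writes \enquote{since $+_p$ preserves well-foundedness}, evidently a slip (the relevant operation is $\inpu{k}$); your direct argument, stripping the leading $k.j$ from a putative infinite play, is the correct one. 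Second, when you apply your convex-combination fact in~(2) you should say, as the paper does, that zero coefficients $p_n$ may be discarded up to $\equiv$ by splitting and regathering, since your fact assumes all $p_n>0$; with that routine normalization noted, both items go through, and the final appeal to \cref{lem:steady-pre} to convert the pre-normal form into a steady normal form matches the paper.
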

\begin{proof}
  \begin{enumerate}[wide]
  \item We have steady normal forms
    \begin{align*}
      M \equiv&\; \sum_{i\in\nats}2^{-i-1} \ml M_i \\
      N \equiv&\; \sum_{i \in
                   \nats} 2^{-i-1} \ml N_i 
   \intertext{This gives } 
   M +_p N \equiv&\; \sum_{i\in\nats}2^{-i-1} \ml (M_i +_p N_i) 
    \end{align*}
    which is a steady normal form.
  \item We may assume all $p_n$ are positive.  We have steady normal forms
   \begin{align*}
       M_n \equiv&\;
   \sum_{i\in\nats} 2^{-i-1} \ml M_{n,i} \quad\quad \text{ for all $n \smin
                     \nats$.} 
     \intertext{This gives } 
   \sum_{n\in\nats}p_n\ml M_n \equiv&\;
                                        \sum_{r\in\nats} \beta_r \ml
                                        N_r \\
     \text{writing } \beta_r  \eqdef & \sum_{j \leqslant r}
                                        {p_{j}2^{-(r-j+1)}} \\
 N_r  \eqdef & \sum_{j \leqslant r} \frac{p_{j}2^{-(r-j+1)}}{\beta_r}\ml
                                        M_{j,r-j}
   \end{align*}
We prove that this is a steady pre-normal form, which is sufficient by \cref{lem:steady-pre}. 
Since $+_p$ preserves well-foundendess, we see that $\tracesof{N_r}$ is well-founded.  %
 Given $n \smin
\nats$, we want a
witness $d$ for
\begin{align*}
  \sum_{r < n} \beta_r \ml
                                        \tracesof{N_r}  \prec &\;
                                                                  \sum_{j
                                                                  \in
                                                                  \nats}
                                                                  \beta_r
                                                                  \ml \tracesof{N_r}                                          
\end{align*}
The left-hand side is equal to $\sum_{j < n} p_j \tau_j$ with $\tau_j
\eqdef \sum_{i < n-j} 2^{i} \tracesof{M_{j,i}}$, and the right-hand side to 
$\sum_{j\in\nats} p_j \sigma_j$ with $\sigma_j \eqdef \tracesof{M_j}$.  For each $j < n$,
let $d_j$ be a witness for $\tau_j \prec \sigma_j$.  We show that $d
\eqdef \min(\setbr{1}\cup
                                          \setbr{p_j d_j \mid j
                                            < n})$ has the desired
                                          property.  Any~$s$ in the
                                          support of~$\sum_{j < n} p_j
                                          \tau_j$ must be in the
                                          support of~$\tau_m$ for some
                                          $m < n$.  So we have
                                          \begin{align*}
                                            \Bigl(\sum_{j < n} p_j \tau_j\Bigr)(s)
                                            + d  \leqslant &\;
                                            \sum_{j < n} p_j
                                                        \begin{cases}
                                                          \tau_j(s) +
                                                          d_j &
                                                              \text{($j
                                                              = m$)}
                                                          \\
                                                          \tau_j(s) & \text{otherwise}
                                                        \end{cases} \\
                                             \leqslant &\; \Bigl(\sum_{j\in\nats}
                                                         p_j \sigma_j\Bigr)(s)
                                          \end{align*}
as required.
  \item %
    For all $j<n$, we have a steady
    normal form
    \begin{flalign*}
    &&  M_j \equiv&\;  \sum_{i\in\nats} 2^{-i-1} \ml M_{j,i} & \text{ for
                     all $j < n$.} 
    \intertext{This gives }
    && \inpu{k}(M_j)_{j < n)} \equiv&\;\sum_{i\in\nats}
                                                 2^{-i-1} \ml N_i \\
    &&  \text{writing}\quad N_i \eqdef &\;  \inpu{k}(M_{j,i})_{j < n)} 
    \end{flalign*}
We claim that this is  steady normal form.  Since $+_p$ preserves
well-foundendess, we see that $\tracesof{N_i}$ is well-founded.  Given $m \smin \nats$, we want a witness $d$ for
\begin{align*}
\sum_{i < m} 2^{-i-1} \tracesof{N_i} \prec&\; \sum_{i\in\nats} 2^{-i-1}
\tracesof{N_i} 
\intertext{For $j < n$, obtain a witness $d_j$ for}
\sum_{i < m} 2^{-i-1} \tracesof{M_{j,i}} \prec&\; \sum_{i\in\nats} 2^{-i-1}
\tracesof{M_{j,i}} 
\end{align*}
We show that $d \eqdef \min (\setbr{2^{-m}} \cup \setbr{d_j \mid j <
  n})$ has the desired property.  Any play in the support of $\sum_{i <
  m} 2^{-i-1} \tracesof{N_i}$ is either empty or $k$ or of the form
$k.j.s$.  For the case of $k$, we reason
\begin{align*}
  \Bigl(\sum_{i < m}& 2^{-i-1} \tracesof{N_i}\Bigr) (k) + d \\
  \leqslant &\; \Bigl(\sum_{i < m} 2^{-i-1}\Bigr) + 2^{-m} \\
  = &\; \sum_{i\in\nats} 2^{-i-1} \\
  = &\; \Bigl(\sum_{i\in\nats} 2^{-i-1} \tracesof{N_i}\Bigr)(k)
\end{align*}
and likewise for the empty play.  For the case of $k.j.s$ with \mbox{$j <n$}, 
the play $s$ is in the support of $\sum_{i < m}2^{-i-1}
\tracesof{M_{j,i}}$, so we have
\begin{align*}
  \Bigl(\sum_{i < m} 2^{-i-1}& \tracesof{N_i}\Bigr) (k.j.s) + d \\
  \leqslant &\; \Bigl(\sum_{i < m} 2^{-i-1} \tracesof{N_i}\Bigr) (s) + d_j \\ 
  \leqslant &\; \Bigl(\sum_{i\in\nats} 2^{-i-1} \tracesof{N_i}\Bigr) (s) \\
  = &\;  \Bigl(\sum_{i\in\nats} 2^{-i-1} \tracesof{N_i}\Bigr) (k.j.s)
\end{align*}
as required.\qed
  \end{enumerate}\noqed
\end{proof}

\begin{corollary}\label{cor:fin-stead-norm}
  For a finitary signature, every program is steady-normalizable.
  \end{corollary}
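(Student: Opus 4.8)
The plan is to prove the statement by well-founded induction on the program $M$, exploiting the fact (stressed in \cref{sec:intro}) that the syntax, although infinitary, is inductively defined: it therefore suffices to establish steady-normalizability of $M$ under the inductive hypothesis that every immediate subterm of $M$ is steady-normalizable. The three productions of the grammar---the I/O form $\inpu{k}(M_i)_{i \smin \arity{k}}$, binary choice $M' +_p N'$, and countable choice $\sum_{n \smin \nats} p_n \ml M_n$---correspond exactly to the three parts of the preceding lemma, so each case of the induction is discharged by a single appeal to it.

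Concretely, for the countable-choice case I would feed the inductive hypotheses on the $M_n$ into part~(2), and for binary choice I would feed the two hypotheses into part~(1); both are immediate. For the I/O case the key observation is that, since $S$ is finitary, $\arity{k}$ is finite and may be identified with some $n \smin \nats$, so that part~(3)---which is stated precisely for operations of arity $n \smin \nats$---applies to the subterms supplied by the inductive hypothesis. The base case requires no separate argument: a constant is $\inpu{k}()$ with $\arity{k} = \emptyset$, i.e.\ $n = 0$, and part~(3) with its vacuously satisfied hypothesis (``$M_j$ steady-normalizable for all $j < 0$'') yields steady-normalizability directly; since the syntax is well-founded, every leaf of a term is such a constant, so the induction bottoms out correctly.

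I do not expect a genuine obstacle at this level, because all of the analytic content---building the steady normal forms and checking the uniform-below conditions through \cref{prop:prec}---has already been absorbed into the lemma. The single point worth flagging is exactly where finitariness enters: part~(3) is available only for finite arity, since its proof obtains a strictly positive witness $d$ by taking a minimum over the finitely many per-branch witnesses $d_j$ (one for each $j < n$) together with $2^{-m}$. For an operation of infinite arity this would become an infimum over infinitely many positive reals, which may well be $0$, destroying the required strict positivity. This is precisely why the finitary hypothesis cannot be relaxed here, and is consistent with the general completeness question remaining open.
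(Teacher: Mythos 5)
Your proof is correct and matches the paper's intent exactly: the paper states the corollary without proof precisely because it follows by structural induction on the (well-founded) syntax, with the three grammar productions discharged by the three parts of the preceding lemma, just as you argue. Your additional observations---that constants are handled by part~(3) with $n=0$, and that finitariness is needed because the witness $d$ in part~(3) is a minimum over finitely many positive reals (consistent with \cref{pro:inf-example})---are accurate and correctly pinpoint why the hypothesis cannot be dropped.
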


The finitarity hypothesis in \cref{cor:fin-stead-norm} cannot be removed:
\begin{proposition}\label{pro:inf-example}
  For the signature consisting of an $\omega$-ary operation $b$ and constants
  $(c_n)_{n\in\nats}$, let $M$ be the program $\inpu{b}\bigl(\sum_{i \leqslant n}
  \frac{1}{n+1} c_i\bigr)_{n\in\nats}$. 
  \begin{enumerate}[wide]
  \item No play-measure $\sigma$ other than 0 satisfies $\sigma \prec
    \tracesof{M}$.
  \item $M$ is not steady-normalizable.
  \end{enumerate}
\end{proposition}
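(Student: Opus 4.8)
The plan is to first compute $\tracesof{M}$ explicitly, then prove part~(1) by a limiting argument, and finally deduce part~(2) directly from part~(1).

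First I would unfold the trace semantics using the compositional description of \cref{pro:sem-compos}. Since $M = \inpu{b}(M_n)_{n\in\nats}$ with $M_n = \sum_{i\leqslant n}\tfrac{1}{n+1}\ml c_i$, we get $\tracesof{M} = \sfinpu{b}(\tracesof{M_n})_{n\in\nats}$, where $\tracesof{M_n}$ assigns $\tfrac{1}{n+1}$ to each passive-ending play $c_i$ with $i\leqslant n$ and $0$ to every other nonempty play. Hence $\tracesof{M}$ assigns $1$ to the play $b$, assigns $\tfrac{1}{n+1}$ to each $b.n.c_i$ with $n\smin\nats$ and $i\leqslant n$, and $0$ to all other passive-ending plays. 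The crucial feature is that the positive depth-two values of $\tracesof{M}$, namely the numbers $\tfrac{1}{n+1}$, accumulate at $0$.

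For part~(1), suppose toward a contradiction that $\sigma\neq 0$ and $\sigma\prec\tracesof{M}$ with witness $d>0$, so $\sigma(s)+d\leqslant\tracesof{M}(s)$ for every $s\smin\supp{\sigma}$. Since $\sigma\neq 0$ we have $w(\sigma)>0$, and because $\supp{\sigma}\subseteq\supp{\tracesof{M}}$ the only first output with positive weight is $b$; thus $\sigmapass(b)=w(\sigma)>0$ and $b\smin\supp{\sigma}$. The play-measure laws give $\sigmaact(b.n)=\sigmapass(b)>0$ for every $n$, and $\sigmaact(b.n)=\sum_{i\leqslant n}\sigmapass(b.n.c_i)$, so for each $n$ there is some $i\leqslant n$ with $b.n.c_i\smin\supp{\sigma}$. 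For that play, $d\leqslant\tracesof{M}(b.n.c_i)-\sigma(b.n.c_i)\leqslant\tfrac{1}{n+1}$. Letting $n\to\infty$ forces $d\leqslant 0$, the desired contradiction. I expect the main obstacle here to be the play-measure bookkeeping: one must correctly track that the positive weight on the active play $b.n$ is necessarily realized on some depth-two passive play $b.n.c_i$, whose $\tracesof{M}$-value shrinks with $n$ and so cannot tolerate a fixed positive gap $d$.

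Finally, part~(2) follows from part~(1) with essentially no further work. If $M$ were steady-normalizable it would be equivalent to a steady normal form $\sum_{n\in\nats}2^{-n-1}\ml M_n$, and by soundness (\cref{prop:equivsound}) together with compositionality we would have $\tracesof{M}=\sum_{n\in\nats}2^{-n-1}\ml\tracesof{M_n}$. Instantiating the steadiness condition at $m=1$ yields $\tfrac{1}{2}\ml\tracesof{M_0}\prec\tracesof{M}$. But $\tracesof{M_0}$ is a probabilistic trace strategy of weight $1$, so $\tfrac{1}{2}\ml\tracesof{M_0}$ is a nonzero play-measure, contradicting part~(1). Hence $M$ is not steady-normalizable.
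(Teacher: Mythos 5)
Your proposal is correct and follows essentially the same route as the paper: part~(1) is the paper's argument (the paper fixes $n \geqslant \frac{1}{d}$ where you pass to the limit $n \to \infty$, an immaterial difference), and your part~(2) spells out, correctly, the deduction the paper compresses into the single word \enquote{Follows} --- instantiating the steadiness condition at $m = 1$ and using soundness of tensor-equivalence to contradict part~(1) via the nonzero play-measure $\frac{1}{2} \ml \tracesof{M_0}$. No gaps.
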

\begin{proof}
 \begin{enumerate}[wide]
  \item Take a witness $d \smin (0,1]$ for $\sigma
    \prec \tracesof{M}$, and $n \smin \nats$ large enough that $n
  \geqslant \frac{1}{d}$, or equivalently $\frac{1}{n} \leqslant d$.
  We have
  \begin{align*}
    0 <&\, w(\sigma) 
     = \sigma(\varepsilon) 
     = \sigma(b) 
     = \sigma(b.n) 
     = \sum_{i \leqslant n} \sigma(b.n.c_i)
  \intertext{
  So there is $i \leqslant n$ such that $b.n.c_i\in\supp{\sigma}$, and therefore}
    d \leqslant&\,\sigma(b.n.c_i) + d 
      \leqslant \tracesof{M}(b.n.c_i) 
      = \frac{1}{n+1} 
      < \frac{1}{n}
  \end{align*}
  contradicting  $\frac{1}{n} \leqslant d$.
  \item Follows.\qed
  \end{enumerate}
  \noqed\end{proof}
\cref{lem:main-lem} and \cref{cor:fin-stead-norm} yield our main result:
\begin{theorem}\label{thm:comp_fin}
  For countably probabilistic programs on a finitary I/O signature, tensor-equivalence coincides with  trace-equivalence. 
\end{theorem}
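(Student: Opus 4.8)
The plan is to obtain the theorem as an essentially immediate consequence of the machinery already assembled, in particular \cref{lem:main-lem} and \cref{cor:fin-stead-norm}, together with the soundness already recorded. The statement asserts a coincidence of two congruences, so I would prove it as two inclusions.

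For the easy inclusion, I would simply invoke \cref{prop:equivsound}(\ref{item:tracesound}): tensor-equivalent programs are always trace-equivalent (this direction does not need finitarity, since it follows from the compositional description of trace semantics and \cref{prop:tensorholds}). This gives that tensor-equivalence is contained in trace-equivalence.

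For the converse, and harder-looking, inclusion, suppose $M$ and $N$ are trace-equivalent countably probabilistic programs over the finitary signature $S$. Here is where I would use the finitarity hypothesis: by \cref{cor:fin-stead-norm}, since $S$ is finitary, both $M$ and $N$ are steady-normalizable. Now \cref{lem:main-lem} applies verbatim to the pair $M,N$: two steady-normalizable programs that are trace-equivalent are logically (i.e.\ tensor-) equivalent. Hence $M \equiv N$, giving the reverse inclusion and completing the proof.

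I do not expect any genuine obstacle at this final step, because all the real difficulty has been discharged earlier. The substantive content lives in establishing steady-normalizability for the three program constructors (where finitarity enters through the $\inpu{k}$ case, and is shown to be indispensable by \cref{pro:inf-example}), and in \cref{lem:main-lem}, whose crux is the interleaving argument built on the key property of $\prec$ from \cref{prop:prec} and the definability/completeness facts for well-founded play-measures in \cref{prop:wfstratdef}. Thus the proof of \cref{thm:comp_fin} itself is just the assembly: \emph{soundness} (one inclusion) plus \emph{finitary steady-normalizability} feeding into the \emph{main lemma} (the other inclusion).
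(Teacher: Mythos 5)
Your proposal is correct and follows exactly the paper's own route: the paper derives \cref{thm:comp_fin} directly from \cref{lem:main-lem} together with \cref{cor:fin-stead-norm}, with the soundness inclusion already provided by \cref{prop:equivsound}. Your assembly of these ingredients, including where finitarity enters (only through steady-normalizability, as \cref{pro:inf-example} shows it must), matches the intended argument.
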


 \section{Definable $=$ Victorious}\label{sec:victor}     

To characterize the definable strategies, the following is helpful.
\begin{definition}
 A \emph{partial counterstrategy} $\rho$ consists of a set $\rhoact$ of
 active-ending plays and a set $\rhopass$ of passive-ending plays
 satisfying the following:
 \begin{itemize}
   \item We have $s\in\rhopass$ if $s$ is of the form $t.k$ for $t \smin
     \rhoact$ and~$k \smin \Ops$.
   \item We have $\varepsilon\in\rhoact$.
   \item If $s.i\in\rhoact$, then $s\in\rhopass$.
    \item Any $s \smin \rhopass$ has at most one \enquote{prescribed input}, i.e., $i \smin \inarity{s}$
      such that $s.i\in\rhoact$.
    \end{itemize}
    A play $s \smin \rhopass$ with no prescribed inputs is called 
    a \emph{$\rho$-failure}, and the set of all such is written
    $\failof{\rho}$.  For any $m \smin \nats$, we write~$\rhoactof{m}$
    for the set of plays in $\rhoact$ of length $m$, and likewise~$\rhopassof{m}$ and $\rhofailof{m}$.  
  \end{definition}
Intuitively, $\rho$ deterministically tells the user what input to
provide, if any.  Although  $\rhopass$ is determined by $\rhoact$ and therefore redundant, including it is convenient.   %

\begin{definition}
   Let $\sigma$ be a play-measure and $\rho$ a partial counterstrategy.  For any $m
  \smin \nats$, put
  \begin{eqnarray*}
    \contprob{m}{\rho}{\sigma} &\eqdef  & \sum_{s \in
                                          \rhoactof{m}} \sigmaact(s) %
  \end{eqnarray*}
\end{definition}
In the case that $\sigma$ is a strategy, we think of $\contprob{m}{\rho}{\sigma}$ as the probability that play between
$\sigma$ and $\rho$ lasts for at least $m$ output-input cycles.
\begin{lemma} \label{prop:decsurvive}
  For any play-measure $\sigma$ and  partial counterstrategy $\rho$,
  we have
  \begin{align}
    \label{eq:zerocstrat}
    \contprob{0}{\rho}{\sigma} =&\; w(\sigma) \\
      \label{eq:succcstrat}
      \contprob{m}{\rho}{\sigma} =&\; \sum_{s\in\rhofailof{m}} \sigmapass(s)
                               + \contprob{m+1}{\rho}{\sigma} \quad \text{
                                       for $m \smin \nats$}
   \end{align}
 \end{lemma}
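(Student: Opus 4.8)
The plan is to derive both identities by unfolding the two defining equations of a play-measure, $\sigmaact(s.i) = \sigmapass(s)$ and $\sigmaact(t) = \sum_{k \smin \Ops} \sigmapass(t.k)$, and then regrouping the resulting nonnegative sums along the combinatorics of $\rho$. Identity~\eqref{eq:zerocstrat} is immediate: the empty play is the unique active-ending play of length $0$, and it lies in $\rhoact$; hence $\rhoactof{0} = \setbr{\varepsilon}$ and $\contprob{0}{\rho}{\sigma} = \sigmaact(\varepsilon) = w(\sigma)$.

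For~\eqref{eq:succcstrat} I would start from
\begin{align*}
  \contprob{m}{\rho}{\sigma}
  &= \sum_{s \smin \rhoactof{m}} \sigmaact(s) \\
  &= \sum_{s \smin \rhoactof{m}} \sum_{k \smin \Ops} \sigmapass(s.k),
\end{align*}
using the second play-measure equation. Because $\rho$ is required to contain $s.k$ in $\rhopass$ whenever $s \smin \rhoact$ and $k \smin \Ops$, and every element of $\rhopassof{m}$ decomposes uniquely as such an $s.k$ with the prefix $s$ lying in $\rhoact$ by the same closure condition, the pairing $(s,k) \mapsto s.k$ is a bijection $\rhoactof{m} \times \Ops \to \rhopassof{m}$. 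Appending an output leaves the length unchanged, so $s.k$ really does have length $m$, and the double sum collapses to $\sum_{p \smin \rhopassof{m}} \sigmapass(p)$.

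Next I would partition $\rhopassof{m}$ using the requirement that each passive-ending play admits at most one prescribed input. The failures $\rhofailof{m}$ (those with none) contribute precisely $\sum_{s \smin \rhofailof{m}} \sigmapass(s)$. Each remaining $p$ carries a unique input $i$ with $p.i \smin \rhoact$; since appending an input increases length by one, $p.i \smin \rhoactof{m+1}$, and $p \mapsto p.i$ is a bijection onto $\rhoactof{m+1}$ --- its inverse strips the trailing input, landing back in $\rhopass$ by the condition that $s.i \smin \rhoact$ forces $s \smin \rhopass$. Applying the first play-measure equation $\sigmapass(p) = \sigmaact(p.i)$ turns this part of the sum into $\sum_{u \smin \rhoactof{m+1}} \sigmaact(u) = \contprob{m+1}{\rho}{\sigma}$, and summing the two parts yields exactly the right-hand side of~\eqref{eq:succcstrat}.

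I do not expect a genuine obstacle here: the argument is entirely a matter of tracking these two bijections and keeping the length indices straight (output-appending preserves length, input-appending increments it). The single point deserving care is the legitimacy of reindexing the possibly infinite sums over $\rhoactof{m}$, $\Ops$, and $\rhopassof{m}$; as all summands are nonnegative reals, every regrouping is unconditionally valid in $[0,\infty]$, so no separate convergence or absolute-summability argument is needed.
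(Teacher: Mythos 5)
Your proof is correct and takes essentially the same route as the paper's: both unfold $\sigmaact(s) = \sum_{k\smin\Ops}\sigmapass(s.k)$ over $\rhoactof{m}$ and then split according to whether $s.k$ is a $\rho$-failure or carries a prescribed input $i$, in the latter case using $\sigmapass(s.k) = \sigmaact(s.k.i)$ to reindex onto $\rhoactof{m+1}$. The only difference is presentational --- you make explicit the two bijections, the length bookkeeping, and the legitimacy of regrouping nonnegative sums, all of which the paper leaves implicit.
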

 \begin{proof}
   Equation \eqref{eq:zerocstrat} is obvious, and for \eqref{eq:succcstrat} we have
   \begin{align*}
       \contprob{m}{\rho}{\sigma}
    =&\;   \sum_{s\in\rhoactof{m}} \sum_{k
                     \in K} \sigmapass(s.k) \\
    =&\;   \sum_{s\in\rhoactof{m}} \sum_{k
          \in K}  
          \begin{cases}
            \sigmapass(s.k) 	& \text{ if $s.k\in\failof{\rho}$} \\
            \sigmaact(s.k.i) 	& \text{ if $s.k.i\in\rhoact$}
          \end{cases} \\
    =&\; \sum_{s\in\rhofailof{m}} \sigmapass(s)
                               + \contprob{m+1}{\rho}{\sigma} \tag*{$\qed$}
  \end{align*}
\noqed\end{proof}
For a given play-measure $\sigma$, we thus obtain a decreasing sequence
\begin{displaymath}
  w(\sigma)  =  \contprob{0}{\rho}{\sigma} \geqslant \contprob{1}{\rho}{\sigma} \geqslant \cdots
\end{displaymath}
whose infimum we call  $\continf{\rho}{\sigma}$.  In the case that
$\sigma$ is a strategy, we think of this as the
probability that play between $\rho$ and $\sigma$ lasts forever.

\begin{lemma}
  For any play-measure $\sigma$ and partial counterstrategy $\rho$, we have
  \begin{align*}
    \continf{\rho}{\sigma} =&\; w(\sigma) -  \sum_{s
                               \in
                               \failof{\rho}} \sigmapass(s)
  \end{align*}
\end{lemma}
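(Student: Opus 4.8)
The plan is to telescope the recurrence established in \cref{prop:decsurvive}. Rewriting \eqref{eq:succcstrat} as
\begin{align*}
  \contprob{m}{\rho}{\sigma} - \contprob{m+1}{\rho}{\sigma}
  =&\; \sum_{s \in \rhofailof{m}} \sigmapass(s),
\end{align*}
I would sum both sides over $m$ from $0$ to $M-1$, so that the left-hand side collapses to $\contprob{0}{\rho}{\sigma} - \contprob{M}{\rho}{\sigma}$. Invoking \eqref{eq:zerocstrat} to replace $\contprob{0}{\rho}{\sigma}$ by $w(\sigma)$, this yields the finite identity
\begin{align*}
  w(\sigma) - \contprob{M}{\rho}{\sigma}
  =&\; \sum_{m < M} \sum_{s \in \rhofailof{m}} \sigmapass(s)
\end{align*}
for every $M \smin \nats$.

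The next key step is to pass to the limit $M \to \infty$. On the left, $\bigl(\contprob{M}{\rho}{\sigma}\bigr)_{M \smin \nats}$ is the decreasing sequence whose infimum is by definition $\continf{\rho}{\sigma}$, so $\contprob{M}{\rho}{\sigma} \to \continf{\rho}{\sigma}$ and the left-hand side converges to $w(\sigma) - \continf{\rho}{\sigma}$. On the right, I would use the fact that the failure set partitions by length, i.e.\ $\failof{\rho} = \bigcup_{m \smin \nats} \rhofailof{m}$ disjointly, so that the double sum over $m < M$ is exactly a partial sum of the nonnegative series $\sum_{s \in \failof{\rho}} \sigmapass(s)$; its partial sums therefore increase to $\sum_{s \in \failof{\rho}} \sigmapass(s)$. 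Equating the two limits and rearranging gives the claim.

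The main thing needing care is the interchange of the limit in $M$ with the summation on the right, together with the guarantee that both limits are finite. Since every summand $\sigmapass(s)$ is nonnegative and the partial sums are bounded above by $w(\sigma) \leqslant 1$, the series converges and the limit is legitimate; this is a routine monotone-convergence argument for nonnegative terms rather than a genuine obstacle. The only subtlety worth flagging explicitly is that the reindexing of $\failof{\rho}$ by play-length is a genuine partition (each $\rho$-failure has a unique length), which is immediate from the definition of a passive-ending play, so no element is counted twice or omitted.
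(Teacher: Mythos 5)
Your proposal is correct and is essentially the paper's own proof: the paper establishes the finite identity $\contprob{m}{\rho}{\sigma} = w(\sigma) - \sum_{s \in \failof{\rho},\, |s| < m} \sigmapass(s)$ by induction on $m$ using \cref{prop:decsurvive}, which is exactly your telescoping sum in unrolled form, and then (implicitly) passes to the infimum just as you do. Your explicit monotone-convergence justification for interchanging the limit with the (possibly uncountable, length-partitioned) sum over $\failof{\rho}$ is a welcome spelling-out of a step the paper leaves tacit.
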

\begin{proof}
  Follows from the equation
  \begin{align*}
    \continf{m}{\rho}{\sigma} =&\;  w(\sigma) -  \sum_{\substack{s
                               \in
                               \failof{\rho} \\ |s|
                               <
                               m}} \sigma(s)
  \end{align*}
 which we prove by induction on $m$ using \cref{prop:decsurvive}.
\end{proof}

\begin{definition}
  A play-measure $\sigma$ is \emph{victorious} when, for every
  partial counterstrategy $\rho$, we have
  \begin{spaceout}{lrcl}
  &  \continf{\rho}{\sigma} & = &  0 \\
    \text{or equivalently} & \sum_{s\in\failof{\rho}} \sigmapass(s) & =
    & w(\sigma) 
  \end{spaceout}%
\end{definition}

\begin{lemma} \label{prop:presvictor} %
  \begin{enumerate}[wide]
  \item For victorious strategies $\sigma$ and $\sigma'$, the strategy~$\sigma +_p \sigma'$ is victorious.
  \item For victorious strategies $(\sigma_n)_{n\in\nats}$, the
    strategy~$\sum_{n\in\nats} p_n \sigma_n$ is victorious.
  \item For $k \smin K$ and victorious strategies $(\sigma_i)_{i \in
      \arity{k}}$, the strategy $\sfinpu{k}(\sigma_i)_{i \in I}$ is victorious.
  \end{enumerate}
    \end{lemma}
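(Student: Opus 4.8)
The plan is to verify each of the three closure properties by directly unwinding the definition of victoriousness, using the characterization $\continf{\rho}{\sigma} = 0$ for all partial counterstrategies $\rho$, equivalently $\sum_{s\in\failof{\rho}}\sigmapass(s) = w(\sigma)$. The central observation is that $\continf{\rho}{\ml}$ is linear in $\sigma$ in a suitable sense: for a fixed partial counterstrategy $\rho$, the quantity $\contprob{m}{\rho}{\ml}$ is a (possibly infinite) nonnegative-weighted sum over $\sigmaact$ of plays in $\rhoactof{m}$, and hence so is its infimum $\continf{\rho}{\ml}$. Since $\continf{\rho}{\sigma} \geqslant 0$ always, victoriousness of $\sigma$ is exactly the statement that this nonnegative quantity vanishes.

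For part (1), I would fix a partial counterstrategy $\rho$ and use the compositional/linear behaviour of play-measures under $+_p$: since $(\sigma +_p \sigma')(s) = p\,\sigma(s) + (1-p)\,\sigma'(s)$ at every play, and summation over $\failof{\rho}$ commutes with this convex combination, we get
\begin{align*}
\sum_{s\in\failof{\rho}} (\sigma +_p \sigma')_{\mathsf{pass}}(s)
&= p \sum_{s\in\failof{\rho}}\sigmapass(s) + (1-p)\sum_{s\in\failof{\rho}}\sigma'_{\mathsf{pass}}(s) \\
&= p\, w(\sigma) + (1-p)\, w(\sigma') = w(\sigma +_p \sigma').
\end{align*}
Part (2) is the same argument with an arbitrary countable convex combination in place of the binary one; the only extra care is that the interchange of the two sums $\sum_{s\in\failof{\rho}}$ and $\sum_{n\in\nats} p_n$ is justified by Tonelli (all terms nonnegative), and that total weight is preserved since $\sum_n p_n w(\sigma_n) = w(\sum_n p_n \sigma_n)$.

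For part (3), the structure is genuinely different because $\sfinpu{k}$ prepends one output–input cycle rather than combining strategies at the root. I would argue via the reindexing supplied by $\rho$: given a partial counterstrategy $\rho$, if $\varepsilon$ is its only active-ending play that leads to a $\rho$-failure at $k$ (i.e.\ $k \smin \failof{\rho}$), the failure sum is forced to be $1 = w(\sfinpu{k}(\sigma_i))$ and we are done. Otherwise $\rho$ prescribes some input $i$ after $k$, and the ``tail'' of $\rho$ beyond $k.i$ is itself a partial counterstrategy $\rho'$ against which the relevant component $\sigma_i$ is played; the failures of $\rho$ against $\sfinpu{k}(\sigma_i)$ correspond bijectively to failures of $\rho'$ against $\sigma_i$, shifted by the prefix $k.i$, and under this correspondence $\sigmapass$-weights are preserved because $\sfinpu{k}(\sigma_i)(k.i.s) = \sigma_i(s)$. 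Victoriousness of $\sigma_i$ then gives $\sum_{s\in\failof{\rho}}\sigmapass(s) = w(\sigma_i) = 1 = w(\sfinpu{k}(\sigma_i))$.

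The main obstacle I anticipate is the bookkeeping in part (3): I must carefully define the tail counterstrategy $\rho'$ from $\rho$ (removing the prefix $k.i$ and checking it satisfies the four closure conditions of a partial counterstrategy), and confirm that the prescribed-input condition guarantees exactly one relevant component $\sigma_i$ is engaged, so that there is no spurious contribution from the other $\sigma_j$ with $j \neq i$. Parts (1) and (2) are essentially linearity computations and should be routine once the interchange of sums is justified.
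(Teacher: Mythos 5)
Your proof is correct, and in part (3) its skeleton is exactly the paper's: the same case split on whether $k \smin \failof{\rho}$ or $\rho$ prescribes an input $i$ after $k$, and the same tail counterstrategy (the paper's $\rho/k$, consisting of all plays $s$ with $k.i.s \smin \rho$). Where you differ is in which face of victoriousness you compute with. The paper works throughout with the stagewise quantities $\contprob{m}{\rho}{\argument}$: for (1)--(2) it uses linearity of $\contprob{m}{\rho}{\argument}$ in the strategy and passes to the infimum, and for (3) the shift identity $\contprob{m+1}{\rho}{\sfinpu{k}(\sigma_i)_{i}} = \contprob{m}{\rho/k}{\sigma_i}$ (with $\contprob{m+1}{\rho}{\argument}=0$ in the failure case). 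You instead work with the equivalent failure-sum formulation $\sum_{s\in\failof{\rho}}\sigmapass(s) = w(\sigma)$; the two are interchangeable via the paper's identity $\continf{\rho}{\sigma} = w(\sigma) - \sum_{s\in\failof{\rho}}\sigmapass(s)$. Your variant buys something real in part (2): the paper's one-word ``Similar'' hides an interchange of an infimum with a countable sum (justifiable by monotone convergence on the decreasing sequences $\contprob{m}{\rho}{\sigma_n}$), whereas your Tonelli interchange of two nonnegative sums is immediate. Conversely, the paper's shift identity in (3) dispatches in one line the bookkeeping you rightly flag as the delicate point: one still checks that $\rho/k$ satisfies the counterstrategy conditions, but there is no failure-by-failure bijection to set up, and the vanishing of contributions from outputs $l \neq k$ and components $\sigma_j$ with $j \neq i$ (the latter because at most one input is prescribed after $k$, so no play extending $k.j$ lies in $\rho$) is absorbed into the identity. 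One small point worth making explicit in your first case of (3): concluding that the failure sum equals $1$ uses not only $\sigmapass(k)=1$ but also the upper bound $\sum_{s\in\failof{\rho}}\sigmapass(s) \leqslant w(\sigma)$ (equivalently, nonnegativity of $\continf{\rho}{\sigma}$), which rules out any further contribution; your argument has this implicitly, and it is fine.
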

\begin{proof}
      Let $\rho$ be a partial counterstrategy.
      \begin{enumerate}[wide]
      \item  For all $m \smin \nats$ we have
        \begin{math}
\contprob{m}{\rho}{\sigma +_p \sigma'}  =  p \contprob{m}{\rho}{\sigma} + {(1-p)}
                                      \contprob{m}{\rho}{\sigma}
        \end{math}.  
        Therefore
        \begin{align*}
          \continf{\rho}{\sigma +_p \sigma'} =&\; p \continf{\rho}{\sigma} + (1-p)
                                             \continf{\rho}{\sigma} 
          = 0
        \end{align*}
      \item Similar.
      \item If $k\in\failof{\rho}$, then for all $m \smin \nats$ we
        have $P^{m+1}_{\rho}(\sfinpu{k}(\sigma_i)_{i \in I}) = 0$.
      Therefore $P_{\rho}(\sfinpu{k}(\sigma_i)_{i \in I}) = 0$.  On
    the other hand, if  $k.i \smin \rho$, then let $\rho/k$ be the
    partial counterstrategy consisting of all plays
    $s$ such that $k.i.s\in\rho$.  For all $m \smin \nats$ we have
    $\contprob{m+1}{\rho}{\sfinpu{k}(\sigma_i)_{i \in I}} =
  \contprob{m}{\rho/k}{\sigma_i}$.  Therefore $\continf{\rho}{\sfinpu{k}(\sigma_i)_{i \in I}} = 
                                                    \continf{\rho/k}{\sigma_i} 
    = 0$.
\qed
     \end{enumerate}\noqed\end{proof}

    \begin{lemma} \label{prop:victoreq} For a play-measure $\sigma$ of weight $w$, the following
      are equivalent.
      \begin{enumerate}[wide]
      \item \label{item:vict} $\sigma$ is victorious.
      \item \label{item:supff}  For all $x \smin [0,w)$, there is a finitely founded
        play-measure $\tau \leqslant \sigma$ of weight greater or equal $x$.
      \item \label{item:supvict} For all $x \smin [0,w)$, there is a victorious 
        play-measure $\tau \leqslant \sigma$ of weight greater or equal $x$.
      \end{enumerate}
   \end{lemma}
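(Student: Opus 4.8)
The plan is to prove the cycle $(\ref{item:supff}) \Rightarrow (\ref{item:supvict}) \Rightarrow (\ref{item:vict}) \Rightarrow (\ref{item:supff})$, so that the first two implications are routine and the genuine content lies in the last. For $(\ref{item:supff}) \Rightarrow (\ref{item:supvict})$ it suffices to observe that every finitely founded play-measure $\tau$ is victorious: fixing a partial counterstrategy $\rho$, since $\tau$ is finitely branching and $\rho$ prescribes at most one input at each passive play, the $\rho$-consistent plays inside $\supp{\tau}$ form a finitely branching tree, which by well-foundedness has no infinite branch and so is finite by K\"onig's Lemma; hence $\contprob{m}{\rho}{\tau} = 0$ for all large $m$ and $\continf{\rho}{\tau} = 0$. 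Thus any $\tau$ supplied by (\ref{item:supff}) already witnesses (\ref{item:supvict}). For $(\ref{item:supvict}) \Rightarrow (\ref{item:vict})$, fix $\rho$ and $x \smin [0,w)$, take a victorious $\tau \leqslant \sigma$ with $w(\tau) \geqslant x$, and combine the identity $\continf{\rho}{\tau} = w(\tau) - \sum_{s\in\failof{\rho}}\taupass(s) = 0$ with monotonicity to get $\sum_{s\in\failof{\rho}}\sigmapass(s) \geqslant w(\tau) \geqslant x$. Letting $x \to w$, and using $\sum_{s\in\failof{\rho}}\sigmapass(s) \leqslant w$ (because $\continf{\rho}{\sigma} \geqslant 0$), forces $\continf{\rho}{\sigma} = 0$; as $\rho$ was arbitrary, $\sigma$ is victorious.

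The substance is $(\ref{item:vict}) \Rightarrow (\ref{item:supff})$, which I would prove contrapositively. For each active-ending play $t$ let $v(t)$ be the supremum of weights of finitely founded play-measures $\tau \leqslant \sigma$ supported at or below $t$, and put $D(t) \eqdef \sigmaact(t) - v(t) \geqslant 0$; then (\ref{item:supff}) is precisely the statement $D(\varepsilon) = 0$. The key structural fact is the inequality
\[ D(t) \;\leqslant\; \sum_{k}\, \sup_{i \in \inarity{t.k}} D(t.k.i), \]
with the sum over non-constant outputs $k$. This is equivalent to $v(t) \geqslant \sum_{k}\inf_{i} v(t.k.i)$, which I would establish by showing that, for any finite output set $A$ and any $\beta_k < \inf_i v(t.k.i)$, one can assemble independently chosen finitely founded measures of weight $\beta_k$ below each $t.k.i$ into a single finitely founded $\tau \leqslant \sigma$ of weight $\sum_{k\in A}\beta_k$: finite output-branching at $t$ and well-foundedness of the pieces keep $\tau$ finitely founded, even though the input arities, and hence $\supp{\tau}$, may be infinite.

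Assuming $D(\varepsilon) > 0$, I would then construct a partial counterstrategy $\rho$ that never fails except at constants and, at each $\rho$-reachable passive play $t.k$, prescribes an input $i^{*}(t.k)$ with $D(t.k.i^{*}) \geqslant \sup_i D(t.k.i) - \epsilon_{t.k}$. Since $\supp{\sigma}$ is countably branching the $\rho$-reachable plays are countable, so the errors $\epsilon_{t.k}$ can be taken positive with total sum at most $D(\varepsilon)/2$, allocating a fixed fraction of the budget to each level as the construction proceeds. Writing $\Delta_m \eqdef \sum_{t \in \rhoactof{m}} D(t)$, the displayed recursion yields $\Delta_{m+1} \geqslant \Delta_m - (\text{budget at level } m)$, hence $\Delta_m \geqslant D(\varepsilon)/2$ for every $m$; since $D(t) \leqslant \sigmaact(t)$ termwise, $\contprob{m}{\rho}{\sigma} \geqslant \Delta_m \geqslant D(\varepsilon)/2$, giving $\continf{\rho}{\sigma} \geqslant D(\varepsilon)/2 > 0$, which contradicts victoriousness. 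I expect this final direction to be the main obstacle: because $\sup_i D(t.k.i)$ need not be attained when arities are infinite (exactly the phenomenon behind \Cref{pro:inf-example}), the counterstrategy cannot simply pick a worst input, and one must spread a summable error budget over the countably many reachable nodes while preserving the invariant $\Delta_m \geqslant D(\varepsilon)/2$; the companion achievability bound $v(t) \geqslant \sum_k \inf_i v(t.k.i)$, which glues infinitely many independent finite approximants into one finitely founded measure, is the other delicate ingredient.
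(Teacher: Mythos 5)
Your proof is correct, and on the substantive implication (1) $\Rightarrow$ (2) it is essentially the paper's own argument: your $v$ and $D$ are the paper's $q$ and $r$, your gluing bound $v(t) \geqslant \sum_k \inf_i v(t.k.i)$ is precisely the direction of the paper's identities $q(s) = \sum_k q(s.k)$ and $q(s.k) = \inf_{i} q(s.k.i)$ that the argument needs, and your near-optimal counterstrategy with a summable \emph{additive} error budget plays the same role as the paper's \emph{multiplicative} schedule, which demands $r$ at the prescribed successor be at least $\frac{a_n}{a_{n+1}} r(t.k)$ with $a_n = 2 - \frac{1}{n+1}$; both yield $\contprob{n}{\rho}{\sigma}$ bounded below by a fixed positive fraction of the root defect for all $n$, contradicting victoriousness. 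The genuine differences are in the easy implications and in bookkeeping. For (2) $\Rightarrow$ (3) you argue directly that finitely founded play-measures are victorious via K\"onig's Lemma applied to the $\rho$-consistent support tree, whereas the paper invokes definability of well-founded play-measures (\cref{prop:wfstratdef}) together with closure of victoriousness under the program constructors (\cref{prop:presvictor}); your route is more elementary and self-contained, the paper's proves the stronger fact that all well-founded (not just finitely branching) play-measures are victorious. For (3) $\Rightarrow$ (1) you use the identity $\continf{\rho}{\sigma} = w(\sigma) - \sum_{s \in \failof{\rho}} \sigmapass(s)$ and monotonicity of the failure mass under $\tau \leqslant \sigma$, in place of the paper's $\varepsilon$-induction establishing $\contprob{n}{\rho}{\sigma} \leqslant \varepsilon + \contprob{n}{\rho}{\tau}$; both are valid, yours slightly shorter. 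Finally, your additive budget requires countability of the $\rho$-relevant nodes, which you correctly extract from countable branching of $\supp{\sigma}$ (failing harmlessly at mass-zero passive plays), a complication the paper's level-uniform multiplicative slack sidesteps; conversely, your explicit gluing-and-rescaling argument for $v(t) \geqslant \sum_k \inf_i v(t.k.i)$ --- uniformizing the weights across all inputs $i$ so that the assembled object is actually a play-measure, with finite output-branching preserved even over infinite arities --- spells out a step the paper asserts without proof.
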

   \begin{proof}
For (\ref{item:supff}) $\Rightarrow$ (\ref{item:supvict}), we note
that any well-founded play-measure is victorious since it is definable.

For (\ref{item:supvict}) $\Rightarrow$ (\ref{item:vict}), let $\rho$
be a counterstrategy.  For any $\varepsilon > 0$, we show that
$\continf{\rho}{\sigma} \leqslant 2\varepsilon$.  Take a victorious play-measure
$\tau \leqslant \sigma$ of weight greater or equal $w(\sigma) - \varepsilon$.  For all $n \smin \nats$, we have
\begin{eqnarray*}
\contprob{n}{\rho}{\sigma} & \leqslant & \varepsilon +
\contprob{n}{\rho}{\tau}
\end{eqnarray*}
by induction on $n$, using \cref{prop:decsurvive}.  For the
inductive step we reason
\begin{flalign*}
  \contprob{n+1}{\rho}{\sigma} =&\; \contprob{n}{\rho}{\sigma} - \sum_{s \in
                         \rhofailof{n} } \sigmapass(s) \\
                      \leqslant&\; \varepsilon +
                                 \contprob{n}{\rho}{\tau} -
                                 \sum_{s \in
                         \rhofailof{n}} \taupass(s)
  & \text{ (IH and $\tau \leqslant \sigma$)} \\
    =&\; \varepsilon + \contprob{n+1}{\rho}{\tau}
\end{flalign*}
 Take $n$ large enough that
$\contprob{n}{\rho}{\tau} \leqslant \varepsilon$.  Then  $\continf{\rho}{\sigma} \
\leqslant \ \contprob{n}{\rho}{\sigma} \ \leqslant \ 2\varepsilon$ as required.

Lastly we prove (\ref{item:vict}) $\Rightarrow$ (\ref{item:supff}),

 For each active-ending play $s$, define $\sigma.s$ to be the play-measure sending passive-ending $t$ to $\sigma(s.t)$.  Define $q(s)$ to be the
     supremum of all possible values of $w(\tau)$ for a finitely founded
     play-measure $\tau \leqslant \sigma.s$.

  For each passive-ending play $s.k$ we proceed similarly.  For $w
  \smin [0,1]$, say  that a \emph{$k$-play-measure} of weight $w$ is a family $(\sigma_i)_{i
   \in\arity{k}}$ of play-measures of weight $w$.  Define
  $\sigma.s.k$ to be the $k$-play-measure $(\sigma.s.k.i)_{i \in
    \arity{k}}$.  Define $q(s.k)$ to be the supremum of all
  possible values of $w(\tau)$ for a finitely founded $k$-play-measure 
  $\tau \leqslant \sigma.s.k$.  

  Thus we have $0 \leqslant q(s) \leqslant \sigma(s)$ for every active- or
  passive-ending $s$.  Furthermore, for every active-ending $s$, we
  have $q(s) = \sum_{k \in K} q(s.k)$, and for every passive-ending
  $s.k$, we have $q(s.k) = \inf_{i\in\arity{k}} q(s.k,i)$, taking this
  infimum to be $\sigma(s.k)$ if $\arity{k} = \emptyset$.  

Now, for all active- or passive-ending  $s$ we put $r(s) \eqdef
\sigma(s) - q(s)$.  Thus $0 \leqslant r(s) \leqslant \sigma(s)$.  Furthermore, for every active-ending
$s$, we have $r(s) = \sum_{k \in K} r(s,k)$, and for every
passive-ending $s.k$ we have $r(s.k) = \sup_{i\in\arity{k}}
r(s.k.i)$, taking this supremum to be 0 if $\arity{k} = \emptyset$.

We shall show that $r(0) = 0$ and hence $q(0) = w(\sigma)$, which
gives our required result.  Parenthetically,
it follows by induction that, for all active- or passive-ending $s$,
we have $r(s)
= 0$ and hence $q(s) =
\sigma(s)$.

For all $n \smin \nats$, put $a_n \eqdef 2 - \frac{1}{n+1}$.  This gives 
a sequence
\begin{math}
  1 = a_0 < a_1 < \cdots  
\end{math}.   
of numbers less than 2.  Choose a partial counterstrategy $\rho$ such that,
for any passive-ending $t.k$ with $|t| =n$ and $\arity{k} \not= \emptyset$, we have $r(\sigma(t.k)) \geqslant
\frac{a_{n}}{a_{n+1}} r(t.k)$ or equivalently $r(t.k) \leqslant
\frac{a_{n+1}}{a_{n}}r(\sigma(t.k))$.     For $t \smin \rho$ of length $n$, 
\begin{align*}
  r(t) =&\; \sum_{k \in K} r(t.k) \\*
  =&\; \sum_{\substack{k \in K \\ \arity{k} = \emptyset}} r(t.k) +
  \sum_{\substack{k \in K \\ \arity{k} \not= \emptyset}} r(t.k) \\
  \leqslant&\; 0 + \frac{a_{n+1}}{a_n} \sum_{\substack{k \in K \\ \arity{k} \not= \emptyset}} r(\sigma(t.k)) 
\end{align*}
Adding these equations over all such $t$ we get
\begin{align*}
    \sum_{\substack{t\in\rho \\ |t| = n}} r(t) \leqslant&\; 
                                                           \frac{a_{n+1}}{a_n}
                                                           \sum_{\substack{t
                                                          \in\rho \\
  |t| = n+1}} r(t)  
\end{align*}
and therefore
\begin{flalign*}
&& r(0) \leqslant &\; a_{n} \sum_{\substack{t\in\rho \\ |t| = n}}
                    r(t)  & \text{ (by induction on $n$)} \\
&&  \leqslant&\; 2 \sum_{\substack{t\in\rho \\ |t| = n}} \sigma(t)
 &\text{ (since $a_n
  < 2$ and $r(t) \leqslant \sigma(t)$)} \\
&& =&\; 2 \contprob{n}{\rho}{\sigma}
\end{flalign*}%
Since this holds for all $n \smin \nats$, we have
$r(0) \ \leqslant \ 2\continf{\rho}{\sigma} = 0$
  as required. 
   \end{proof}
  
   \begin{theorem}\label{thm:def}
     A strategy $\sigma$ is definable iff it is victorious.
   \end{theorem}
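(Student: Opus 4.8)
The plan is to establish the two directions of the biconditional separately, drawing on the machinery already developed. The easier direction is that definable strategies are victorious. Here I would argue by structural induction on the program $M$ whose trace semantics is $\sigma$. Every program is built from constants, $\mathsf{Req}\,k?$, binary choice $+_p$, and countable choice $\sum_{n\in\nats} p_n\ml(-)$, and by \cref{pro:sem-compos} the trace semantics respects exactly these constructors. So it suffices to check that the victorious play-measures are closed under each operation (and that base cases are victorious). But this closure is precisely the content of \cref{prop:presvictor}, which handles $+_p$, countable sums, and $\mathsf{Req}\,k?$. The base case of a constant is a trivially victorious strategy, since play terminates after one cycle. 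Thus definable $\Rightarrow$ victorious follows directly.

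For the converse, that victorious strategies are definable, I would lean entirely on the equivalence \cref{prop:victoreq}, which says that $\sigma$ being victorious is equivalent to condition~(\ref{item:supff}): for every $x \smin [0,w)$ there is a finitely founded play-measure $\tau \leqslant \sigma$ of weight at least $x$. The idea is to exhaust $\sigma$ from below by an increasing sequence of finitely founded (hence definable, by \cref{prop:wfstratdef}(\ref{item:wfstratdef})) play-measures whose weights approach $w(\sigma) = 1$, and then assemble these pieces into a single definable strategy using the compositional operations. Concretely, I would use the victoriousness to extract a sequence $\tau_0 \leqslant \tau_1 \leqslant \cdots \leqslant \sigma$ of finitely founded play-measures with $w(\tau_n) \to 1$, set $\rho_0 \eqdef \tau_0$ and $\rho_{n+1} \eqdef \tau_{n+1} - \tau_n$, so that $\sigma = \sum_{n\in\nats} \rho_n$ with each $\rho_n$ well-founded (being a difference of finitely founded measures dominated by a finitely founded one) and hence definable. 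Writing $\rho_n = \lambda_n \tracesof{M_n}$ for programs $M_n$ and weights $\lambda_n$ summing to $1$, the program $\sum_{n\in\nats} \lambda_n \ml M_n$ then denotes $\sigma$ by \cref{pro:sem-compos}.

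The main obstacle I anticipate is the middle step of turning condition~(\ref{item:supff}) into a genuinely \emph{increasing} chain of finitely founded measures whose \emph{differences} are themselves well-founded and definable. The bare statement of (\ref{item:supff}) only guarantees, for each threshold $x$, \emph{some} finitely founded $\tau \leqslant \sigma$ of weight at least $x$; it does not hand us a nested sequence, nor does it guarantee that the increment $\tau_{n+1} - \tau_n$ is well-founded rather than merely a difference of play-measures. I would address this by taking an exhaustive supremum: since finitely founded play-measures below $\sigma$ are closed under pairwise joins (the pointwise maximum of two finitely founded measures below $\sigma$ is again finitely founded and below $\sigma$, as the support stays finitely branching and well-founded), I can build the chain by successively taking joins, ensuring monotonicity. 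The differences $\tau_{n+1} - \tau_n$ are then well-founded because they are dominated by the finitely founded, hence well-founded, measure $\tau_{n+1}$, and \cref{prop:wfstratdef}(\ref{item:wfstratdef}) supplies definability.

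One subtlety worth flagging is the degenerate case where $\sigma$ may have to be expressed in the ``$0$'' disjunct of the definability clause (relevant only when the signature has no constants), but for a strategy of weight $1$ this does not arise, so I would not dwell on it.
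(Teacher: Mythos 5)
Your forward direction coincides with the paper's: definability implies victoriousness by induction over programs, via compositionality (\cref{pro:sem-compos}) and closure of victorious strategies under the three constructors (\cref{prop:presvictor}, whose third clause with $\arity{k}=\emptyset$ covers your constant base case). That half is fine.

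The backward direction has a genuine gap, and it sits exactly where you flagged the ``main obstacle'': your repair device is false. The pointwise maximum of two play-measures below $\sigma$ is in general \emph{not a play-measure}, because maxima do not respect the conservation condition $\sigmaact(t)=\sum_{k\in\Ops}\sigmapass(t.k)$. Concretely, for a signature with a unary operation $u$ and constants $a,b$, let $\sigma=\tracesof{\inpu{u}(a\plushalf b)}$, $\tau_1=\frac{1}{2}\,\tracesof{\inpu{u}(a)}$ and $\tau_2=\frac{1}{2}\,\tracesof{\inpu{u}(b)}$. Both are finitely founded and $\leqslant\sigma$, but their pointwise maximum $\mu$ has $\mu(u)=\frac{1}{2}$ while $\mu(u.0.a)+\mu(u.0.b)=1$, so $\mu$ violates the play-measure condition at $t=u.0$ and is not a play-measure at all (let alone finitely founded and $\leqslant\sigma$). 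Without the joins, your increasing chain $\tau_0\leqslant\tau_1\leqslant\cdots$ is never constructed, and the decomposition $\sigma=\sum_{n}\rho_n$ collapses; \cref{prop:victoreq}(\ref{item:supff}) alone indeed gives only unrelated approximants, as you yourself observed.

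The missing idea---and the paper's actual argument---is to \emph{subtract and iterate} rather than join: since $\contprob{m}{\rho}{\tau}$ is monotone in the play-measure $\tau$, the residual $\sigma_{n+1}\eqdef\sigma_n-\tau_n$ of a victorious $\sigma_n$ is again victorious, so \cref{prop:victoreq}(\ref{item:supff}) can be re-applied to the residual at every stage. Taking $\sigma_0\eqdef\sigma$ and $\tau_n\leqslant\sigma_n$ finitely founded of weight $2^{-n-1}$ (rescaling the approximant if necessary), one gets $\sigma=\sum_{k<n}\tau_k+\sigma_n$ with $\sigma_n(s)\leqslant w(\sigma_n)=2^{-n}\to 0$, hence $\sigma=\sum_{k\in\nats}\tau_k$, and \cref{prop:wfstratdef}(\ref{item:wfstratdef}) finishes exactly as you intended. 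For the record, your other steps are sound: a difference $\tau-\tau'$ with $\tau'\leqslant\tau$ has support contained in $\supp{\tau}$ and so is well-founded whenever $\tau$ is, and pointwise convergence does follow from convergence of weights via $\sigma(s)-\tau_n(s)\leqslant w(\sigma)-w(\tau_n)$; it is only the construction of the chain itself that needs the residual-victoriousness observation.
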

   \begin{proof}
        ($\Rightarrow$) follows from \cref{prop:presvictor}.
     To prove
     ($\Leftarrow$), we define for each $n \smin \nats$ a victorious
     strategy $\sigma_n$ of weight $2^{-n}$ and a finitely founded strategy $\tau_n$ of
     weight $2^{-n-1}$, as
     follows:
     \begin{itemize}
     \item We put $\sigma_0 \eqdef \sigma$.
     \item $\tau_n$ is a finitely founded strategy smaller or equal $\sigma_n$ of weight $2^{-n}$, which exists by
       \cref{prop:victoreq}(\ref{item:supff}).
     \item  We put $\sigma_{n+1} \eqdef \sigma_n - \tau_n$.
     \end{itemize}     
For all $n \smin \nats$ we have $\sigma = \sum_{k < n} \tau_k  +
\sigma_n$.  Since $\sigma_n (s) \leqslant w(\sigma_n) =
2^{-n-1}$ for all $s$, we have $\sigma = \sum_{k\in\nats}
\tau_k$.  \cref{prop:wfstratdef}(\ref{item:wfstratdef}) tells us that $\tau_k$ is the denotation
of some partial term $M_k$, so the program $\sum_{k\in\nats} M_k$
denotes $\sigma$.
   \end{proof}

   \section{Monads and theories}\label{sec:monadsth}
\subsection{Strategy monads} \label{sec:monads}
   
We now turn to the connections with monads on $\set$. Recall that $\dist$
and $\distfin$ are the monads of discrete and finitely supported distributions 
correspondingly. For the rest, let $S = (\arity{k})_{k\in\Ops}$ be a signature. 
  For any set $A$, write $S+A$ for the signature extending $S$ with 
  $A$ additional constants.  Then an \emph{$S$-term on $A$} is a term
  built up using operations in $S$ and $A$-many variables; this
  corresponds to a deterministic $(S+A)$-program.

  We define the following monads.
  \begin{definition}
    \begin{enumerate}
    \item The \emph{free monad} $T_S$ sends a set $A$ to the set of
      all $S$-terms on $A$.
     \item The monad ${\distfin}_{S}$ sends a set $A$ to the set of
       finitely probabilistic $(S+A)$-terms modulo bisimilarity.
     \item The monad $\dist_{S}$ sends a set $A$ to the set of
       countably probabilistic  $(S+A)$-terms modulo
       bisimilarity.
     \item The monad $\finfound_{S}$ sends a set $A$ to the set of all finitely founded $(S+A)$ strategies,
    corresponding to finitely probabilistic $(S+A)$-terms modulo trace
    equivalence.
   \item The monad $\vict_{S}$ sends a set $A$ to the set of all victorious $(S+A)$ strategies,
    corresponding to countably probabilistic $(S+A)$-terms modulo trace
    equivalence.
    \end{enumerate}
  \end{definition}

\subsection{Theories}

Moggi's work on representing computational effects via monads~\cite{Moggi91a} was
further developed by later authors~\cite{PlotkinPower02}, who presented many of
these monads via a theory.  We briefly recall the basics.

For a signature~$S$, an \emph{$S$-equation}  
is a triple $\tuple{B,s,t}$, where~$B$ is a set and
$t,t'$ are $S$-terms on $B$. This can be written as
\begin{displaymath}
(x_b)_{b \in B} \vdash s = t
\end{displaymath}
For example, the \emph{idempotence} equation for an
operation symbol~$f$ takes the form
\begin{displaymath}
  x \vdash f(x)_{i \in \arity{f}} = x 
\end{displaymath}
and the \emph{commutativity} equation between operation symbols~$f$ and $g$ takes the form
\begin{align*}
   (x_{i,j})_{i \in \arity{f}, j \in \arity{g}} \vdash&\\
      f(g(x_{i,j})_{j \in \arity{g}}&)_{i \in \arity{f}} =   g(f(x_{i,j})_{i \in \arity{f}})_{j \in \arity{g}}.
\end{align*}
A \emph{theory} consists of a signature
$S$ and a set $\cate$ of $S$-equations.    It is \emph{idempotent}
when it proves the idempotence of each operation symbol.\footnote{This corresponds to the notion of ``affineness'' for monads in~\cite{Kock71}.}    
Such a theory, if consistent, has no constants.

Each theory $\theta = (S,\cate)$ gives rise to 
a monad $T_{\theta}$, sending a set $A$ to the set of all
\emph{$\theta$-classes on $A$}, i.e.\ 
$S$-terms on $A$ modulo the least congruence that contains every 
instance of an equation in $\cate$.%

Here are some examples of theories:
\begin{itemize}
\item The \emph{convex theory} has  a binary
  operation $+_p$ for all $p \smin [0,1]$, and equations as in
  \cref{fig:prob-laws}(a).  It is idempotent, and yields the monad $\distfin$.
 \item The \emph{$\omega$-convex theory} has an
   $\omega$-ary operation $\sigma_p$ for each sequence $p = (p_n)_{n
     \in \nats}$ of nonnegative reals summing to~1, and equations as
   in \cref{fig:prob-laws}(b).  It is idempotent, and yields the monad $\dist$.
 \item The \emph{free theory} on a signature $S$ consists of $S$ and
   no equations.  This yields the free monad $T_S$.
 \item The \emph{sum} of theories $\theta$ and $\theta$ consists of
   all the operation symbols\footnote{Strictly speaking, we take the
     sum of the two signatures.}  and equations in $\theta$ and $\theta'$. The
   \emph{tensor} is defined the same way, by adding commutativity equations
   between all $\theta$-operations and $\theta'$-operations.
     The resulting monads are respectively called the \emph{sum} and \emph{tensor} of $T_{\theta}$ and $T_{\theta'}$.
\item A special case of the previous construction: the \emph{free extension} of a theory $\theta$ by a
  signature $S$, written $\theta +S$, consists of all the operation symbols in $\theta$ and
  $S$ and all equations in $\theta$.   The \emph{free
    tensorial extension}, written $\theta \tensor S$,  is defined the same way, with additional equations
  saying that each $\theta$-operation commutes with each $S$-operation.
 The resulting monads are called the \emph{free extension} and
  \emph{free tensorial extension} of $T_{\theta}$ by~$S$ respectively. 
\end{itemize}

\subsection{Characterizing strategy monads}

In order to relate strategy monads to theories, we first see how to
treat 
variables appearing in terms as extra constants:
\begin{proposition} \label{prop:var}
  Let $\theta$ be a theory and $S$ a signature.
  \begin{enumerate}
  \item \label{item:varsum} The $(\theta +S)$-classes on a set $A$ 
    correspond to the $(\theta + (S+A))$-classes on
    $\emptyset$.
 \item \label{item:vartensor} If $\theta$ is idempotent (e.g.\ is the convex or
   $\omega$-convex theory), then the $(\theta \tensor S)$-classes on a
   set $A$ 
    correspond to the $(\theta \tensor (S+A))$-classes on
    $\emptyset$.
  \end{enumerate}
\end{proposition}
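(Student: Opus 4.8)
The plan is to prove both parts by exhibiting an explicit bijection between the two sets of congruence classes and checking it is well-defined in each direction. The underlying observation is that a $(\theta + S)$-class (resp.\ $(\theta \tensor S)$-class) on $A$ is represented by a term over the signature $\theta + S$ using variables drawn from $A$, whereas a class on $\emptyset$ over $\theta + (S+A)$ uses no variables but has $A$ promoted to constants. On raw terms there is an evident syntactic identification: reading each variable $x_a$ (for $a \smin A$) as the corresponding added constant $a \smin S+A$ gives a bijection between $(\theta+S)$-terms on $A$ and $(\theta+(S+A))$-terms on $\emptyset$, since a constant is precisely a nullary operation symbol and a free variable is precisely a leaf of the term tree. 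First I would make this raw-term bijection precise, and then the real content is to show it descends to the respective congruence classes.

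\textbf{Part (\ref{item:varsum}): the sum case.}
For the sum, the congruence in question is generated by the instances of equations in $\theta$ (the signature $S$ contributes no equations). Crucially, the set of equation instances is \emph{the same} under the syntactic identification: an instance of a $\theta$-equation over $\theta+S$ with variables in $A$ maps, leaf-for-leaf, to an instance of the same $\theta$-equation over $\theta+(S+A)$ with no variables, because turning the $A$-leaves into constants does not change which subterms are being equated. Hence the least congruence on one side is carried exactly onto the least congruence on the other, and the raw-term bijection quotients to a bijection of classes. I would phrase this as: the two congruences are matched by the bijection because their generating relations are matched, and least congruences are determined by their generators.

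\textbf{Part (\ref{item:vartensor}): the tensor case, where idempotence is needed.}
The tensor adds commutativity equations between each $\theta$-operation and each $S$-operation. The subtlety — and the step I expect to be the main obstacle — is that on the $(\theta \tensor (S+A))$ side we get \emph{extra} commutativity equations, namely those between $\theta$-operations and the new nullary operations $a \smin A$. A commutativity equation between a $\theta$-operation $f$ and a nullary $S$-operation (a constant $a$) asserts, after substituting the single constant into every argument, exactly that $f(a,\dots,a) = a$, i.e.\ the idempotence of $f$ applied to $a$. This is where the hypothesis bites: since $\theta$ is idempotent, each such equation is \emph{already derivable} from the $\theta$-equations, so adjoining these extra commutativity instances does not enlarge the congruence. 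I would verify this by writing out the commutativity equation between $f$ and a constant and checking it reduces to an idempotence instance, then conclude that the generating relations again match across the bijection (the commutativity equations with \emph{non-nullary} $S$-operations correspond directly, and those with the nullary added constants are redundant on the larger side while absent on the smaller side). With the generators matched up to derivable redundancy, the raw-term bijection once more descends to the quotients, completing the proof.
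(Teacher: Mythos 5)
Your proposal is correct and is precisely the argument the paper compresses into its one-line proof: part (1) is the evident identification of variables with added constants, and part (2) hinges on the observation that the extra commutativity equations between a $\theta$-operation and a nullary operation $a \smin A$ are exactly closed idempotence instances $f(a,\dots,a)=a$, hence derivable when $\theta$ is idempotent (the paper phrases this as ``an idempotent operation commutes with any constant''). Your worked-out version, matching generators of the two congruences under the raw-term bijection, is the same approach in full detail.
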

\begin{proof}
  \begin{enumerate}
  \item Obvious.
  \item Because an idempotent operation commutes with any constant.\qed
  \end{enumerate}\noqed
\end{proof}
Now we have the following results, for any signature $S$.
\begin{theorem}
  \begin{enumerate}
  \item \label{item:bisfin} The monad ${\distfin}_{S}$ is the free extension of~$\distfin$ by $S$.
  \item \label{item:biscount} The monad $\dist_{S}$ is the free extension of~$\dist$ by $S$.
  \item \label{item:tracefin} The monad $\finfound_S$  is the free
    tensorial extension of~$\distfin$ by $S$.
  \item \label{item:tracecount} If $S$ is finitary, then the monad $\vict_{S}$ 
    is the free tensorial extension of~$\dist$ by $S$.
  \end{enumerate}
\end{theorem}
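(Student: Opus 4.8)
The plan is to prove each clause by exhibiting an isomorphism between the relevant strategy monad and the monad generated by the appropriate theory, using the characterization of equivalence classes as strategies that we have already established. By \cref{prop:var}, it suffices to treat the case of the empty set of variables: for each clause I work with $(S+A)$-programs on $\emptyset$, since \cref{prop:var}(\ref{item:varsum}) handles the free-extension cases and \cref{prop:var}(\ref{item:vartensor}) handles the tensorial-extension cases (the convex and $\omega$-convex theories being idempotent). Thus the task reduces to showing, for a fixed signature, that the stated set of strategies is in bijection with the set of terms modulo the corresponding congruence, naturally in $A$ and compatibly with the monad structure.

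First I would treat the two \emph{sum} cases, (\ref{item:bisfin}) and (\ref{item:biscount}). Here the congruence is sum-equivalence and the target is programs modulo bisimilarity. By \cref{pro:bisim}, bisimilarity coincides with sum-equivalence for both the finitely and countably probabilistic languages on any I/O signature. Since the free extension $\distfin + S$ (respectively $\dist + S$) is by definition the theory whose congruence is exactly sum-equivalence (the convex or $\omega$-convex laws with no commutativity with $S$-operations), the generated monad $T_{\distfin + S}$ sends $A$ to $(S+A)$-terms modulo sum-equivalence, which by \cref{pro:bisim} equals the same terms modulo bisimilarity, i.e.\ ${\distfin}_S A$ (respectively $\dist_S A$). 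This identification is natural in $A$ and respects substitution, giving the monad isomorphism.

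Next come the \emph{tensor} cases. For (\ref{item:tracefin}), the free tensorial extension $\distfin \tensor S$ has as its congruence exactly tensor-equivalence for the finitely probabilistic language. By \cref{prop:finstuff}(\ref{item:fincomplete}), tensor-equivalence coincides with trace-equivalence there, and by \cref{prop:finstuff}(1) the definable play-measures are precisely the finitely founded ones; hence $M \mapsto \tracesof{M}$ induces a bijection from $(S+A)$-terms modulo tensor-equivalence to finitely founded $(S+A)$-strategies, which is exactly $\finfound_S A$. The compositional description \cref{pro:sem-compos} ensures this bijection is a morphism of monads. For (\ref{item:tracecount}) under the finitarity hypothesis, the argument is parallel but draws on the two main theorems of the paper: \cref{thm:comp_fin} gives that tensor-equivalence coincides with trace-equivalence for countably probabilistic programs on a finitary signature (soundness being \cref{prop:equivsound}(\ref{item:tracesound})), so distinct tensor-classes map to distinct strategies; and \cref{thm:def} identifies the definable strategies as exactly the victorious ones. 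Together these make $M \mapsto \tracesof{M}$ a bijection from $(S+A)$-terms modulo tensor-equivalence onto victorious $(S+A)$-strategies, i.e.\ $\vict_S A$, again compatibly with the monad structure via \cref{pro:sem-compos}.

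The bulk of the content is already carried by the cited theorems, so I expect the main obstacle to be purely bookkeeping rather than mathematical depth: namely verifying that the bijections are genuine \emph{monad} isomorphisms, not merely bijections on underlying sets. Concretely, one must check that the unit, multiplication, and strength are preserved, and crucially that the use of \cref{prop:var} to reabsorb variables into the signature is compatible with the monad multiplication (substitution of terms for variables). The idempotence hypothesis in \cref{prop:var}(\ref{item:vartensor}) is what makes the tensor cases go through, since an idempotent probabilistic operation commutes with the constants standing for variables; I would make sure this naturality is stated carefully, as it is the one place where the finitary versus countable distinction and the tensor versus sum distinction genuinely interact.
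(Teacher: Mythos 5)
Your proposal matches the paper's own proof essentially exactly: both reduce to the case $A=\emptyset$ via \cref{prop:var} (parts (\ref{item:varsum}) and (\ref{item:vartensor}) respectively) and then invoke \cref{pro:bisim} for the sum cases, \cref{prop:finstuff}(\ref{item:fincomplete}) for the finitely probabilistic tensor case, and \cref{thm:comp_fin} for the finitary countable case. Your extra remarks on \cref{thm:def} and on checking that the bijections respect the monad structure are sound elaborations of details the paper leaves implicit (the correspondence with victorious, respectively finitely founded, strategies is already built into the definitions of $\vict_S$ and $\finfound_S$), not a different route.
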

\begin{proof}
  We consider each monad applied to a set $A$.
  \begin{enumerate}
  \item The case $A = \emptyset$ is given by~\cref{pro:bisim}, and the
    general case follows via~\cref{prop:var}(\ref{item:varsum}).
  \item Likewise.
   \item The case $A = \emptyset$ is given by~\cref{prop:finstuff}(\ref{item:fincomplete}), and the
     general case follows via~\cref{prop:var}(\ref{item:vartensor}).
   \item The case $A = \emptyset$ is given by~\cref{thm:comp_fin}, and the
     general case follows via~\cref{prop:var}(\ref{item:vartensor}).\qed
  \end{enumerate}\noqed
\end{proof}

\begin{theorem}
  Tensor completeness is equivalent to the following statement: for
  every signature $S$, the monad $\vict_{S}$ is  the free tensorial extension of~$\dist$ by $S$.
\end{theorem}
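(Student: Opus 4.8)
The plan is to reduce the asserted equivalence to a single comparison map, by unwinding both monads concretely using \cref{prop:var} and \cref{thm:def}. First I would fix a signature $S$ and a set $A$. By \cref{prop:var}(\ref{item:vartensor}), the value $(\dist \tensor S)(A)$ of the free tensorial extension is the set of countably probabilistic $(S+A)$-programs modulo tensor-equivalence. On the other side, \cref{thm:def} (definable $=$ victorious) identifies $\vict_S(A)$ with the set of countably probabilistic $(S+A)$-programs modulo trace-equivalence, since every victorious $(S+A)$-strategy is the trace semantics of some $(S+A)$-program. Soundness, \cref{prop:equivsound}(\ref{item:tracesound}), tells us tensor-equivalence implies trace-equivalence, so the assignment $M \mapsto \tracesof{M}$ descends to a well-defined map $q_A \colon (\dist \tensor S)(A) \to \vict_S(A)$.

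Next I would check that the maps $q_A$ assemble into the canonical comparison monad morphism. Since the tensor laws hold for strategies (\cref{prop:tensorholds}) and victorious strategies are closed under $+_p$, countable sums, and the operations $\sfinpu{k}$ (\cref{prop:presvictor}), the monad $\vict_S$ is a model of the tensor theory $\dist \tensor S$; by the universal property of the free tensorial extension there is a unique monad morphism $q \colon \dist \tensor S \to \vict_S$ fixing the generators, and this is exactly the family $(q_A)$. The statement that $\vict_S$ \emph{is} the free tensorial extension of $\dist$ by $S$ is, by definition, the statement that $q$ is an isomorphism. The crucial observation is that each $q_A$ is automatically surjective, again by definability (\cref{thm:def}), so $q$ is an isomorphism precisely when each $q_A$ is injective. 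And $q_A$ is injective exactly when trace-equivalent $(S+A)$-programs are tensor-equivalent, i.e.\ when tensor completeness holds for the signature $S+A$.

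With this dictionary in place both implications are immediate. For the forward direction, assuming tensor completeness for all signatures yields injectivity of every $q_A$, hence an isomorphism $q$ for every $S$. For the converse, assuming $q$ is an isomorphism for every $S$, I would specialize to $A = \emptyset$; since every signature arises as such an $S$, injectivity of $q_\emptyset$ shows that trace-equivalent programs are tensor-equivalent for every signature, which is tensor completeness. The main obstacle is not the logical skeleton, which is essentially a tautology once the two descriptions are lined up, but pinning down the comparison map precisely: one must verify that $\vict_S$ genuinely carries the algebraic structure of a $\dist \tensor S$-model and that the induced universal map coincides with the trace-semantics quotient $M \mapsto \tracesof{M}$, rather than some unrelated comparison. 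This is the same structural content already exploited in the finitary case, where \cref{thm:comp_fin} supplies the injectivity that is here instead taken as the hypothesis.
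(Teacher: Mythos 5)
Your proposal is correct and follows essentially the same route as the paper: the forward direction reduces to closed terms over $S+A$ via \cref{prop:var}(\ref{item:vartensor}), and the converse is obtained by evaluating at $A=\emptyset$, exactly as in the paper's (very terse) proof. The only difference is that you make explicit the canonical comparison morphism $q$ and verify, via \cref{prop:tensorholds}, \cref{prop:presvictor} and \cref{thm:def}, that it is the trace-semantics quotient with surjectivity automatic --- a careful unpacking of what the paper leaves implicit, not a different argument.
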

\begin{proof}
  ($\Rightarrow$) follows from~\cref{prop:var}(\ref{item:vartensor}).
  For ($\Leftarrow$), apply the monad to $\emptyset$.
\end{proof}
   
\section{Conclusions}\label{sec:conc}
   
We have tacked the problem of completeness of (infinitary) tensor logic of 
programs with probabilistic choice, w.r.t.\ trace equivalence as the underlying 
semantic notion of program equivalence. In contrast to the analogous case of 
nondeterminism, the problem turned out to be remarkably difficult.
We managed to establish completeness for finitary signatures (\Cref{thm:comp_fin}), by developing
a system of normalization procedures, while also demonstrating that the present 
approach cannot scale to the infinitary case. The latter thus presents a hard 
open problem, which we aim to attack in further work. 

We have shown that the general completeness problem simplifies significantly 
and resolves into positive in presence of certain natural quasiequational laws, most obviously, 
in presence of the cancellativity law for the probabilistic choice. We refine this
furthermore by proving completeness \emph{up to impersonation} (\Cref{thm:upto-imp}), which
shows the necessary subtlety of any putative counterexample to completeness.

As noted in~\Cref{sec:t-sem}, the $\omega$-convex space of probabilistic strategies is cancellative.
But what about the $\omega$-convex space of programs modulo tensor equivalence?
By~\Cref{cor:up-to-cancel}, it is cancellative iff completeness holds, which is true for
finitary signatures but is unknown for infinitary ones. 

In many fields of semantics, an important problem, closely related to completeness, is the problem of definability,
which is the problem of describing intrinsically the space of possible semantics
of programs. We provide such a description (\Cref{thm:def}) through the class of \emph{victorious strategies},
which precisely capture semantics of probabilistic programs in game-theoretic terms.

Besides the problem of tensor completeness for infinitary signatures,
 future 
 work may consider continuous probability and the combination of probability with nondeterminism.

\bibliographystyle{IEEEtranDOI}
\bibliography{monads}

\end{document}